  \let\setminus\smallsetminus
\providecommand{\restriction}{\upharpoonright}
\providecommand{\mathbfscr}[1]{\boldsymbol{\mathscr{#1}}}
\newcommand{\tikzmark}[1]{\tikz[overlay,remember picture] \node (#1) {};}
\newlist{enumerate*}{enumerate*}{1}
\setlist[enumerate*]{label=(\arabic*),
  after=.,
  itemjoin={{, }}, itemjoin*={{, and }}}
\newtheorem{theorem}{Theorem}[section]
\newtheorem{lemma}[theorem]{Lemma}
\newcommand{\allOne}{\mathbb{1}}
\DeclareMathOperator{\fc}{fc_{LP}}
\DeclareMathOperator{\fcSDP}{fc_{SDP}}
\newcommand{\defeq}{\coloneqq}
\theoremstyle{definition}
\newtheorem{definition}[theorem]{Definition}
\newtheorem{example}[theorem]{Example}
\theoremstyle{remark}
\newtheorem{remark}[theorem]{Remark}
\newcommand{\OPT}[2][]{\operatorname*{OPT}\sb{#1}\left(#2\right)}
\newcommand{\R}{\mathbb{R}}
\newcommand{\N}{\mathbb{N}}
\newcommand{\sprod}[2]{\langle #1, #2 \rangle}
\DeclareMathOperator{\rank}{rk}
\newcommand{\nnegrk}{\rank_{+}} 
\newcommand{\LPrk}{\rank_{\textnormal{LP}}} 
\newcommand{\psdrk}{\rank_{\textnormal{psd}}} 
\newcommand{\SDPrk}{\rank_{\textnormal{SDP}}} 
\DeclareMathOperator{\Tr}{Tr}
\DeclareMathOperator{\SOS}{SoS}
\DeclareMathOperator{\tw}{tw}
\DeclareMathOperator{\val}{val}
\DeclareMathOperator{\corr}{corr}
\newcommand*{\size}[1]{\left|#1\right|}
\DeclareMathOperator{\argmax}{arg\,max}
\DeclareMathOperator{\dom}{dom}
\newcommand{\Problem}[1]{\textnormal{\textsf{#1}}}
\NewDocumentCommand{\MaxCSP}{mg}{%
  \ensuremath{%
    \IfValueTF{#2}
    {\Problem{Max-\(#1\)-CSP}(#2)}
    {\Problem{Max-\(#1\)-CSP}}%
    }%
  \xspace}
\DeclareExpandableDocumentCommand{\MaxXOR}{om}
{\Problem{Max-\(#2\)-XOR%
    \IfValueT{#1}{\(/#1\)}}\xspace}
\newcommand{\OneFreeCSP}{\Problem{1F-CSP}\xspace}
\newcommand{\NotEqualCSP}[1]{\Problem{\(#1\)-\(\neq\)-CSP}\xspace}
\NewDocumentCommand{\MaxCUT}{og}
{\ensuremath{\Problem{MaxCut}%
    \IfValueT{#1}{\sb{#1}}%
    \IfValueT{#2}{(#2)}}\xspace}
\NewDocumentCommand{\Matching}{og}
{\ensuremath{\Problem{Matching}%
    \IfValueT{#1}{\sb{#1}}%
    \IfValueT{#2}{(#2)}}\xspace}
\NewDocumentCommand{\UniqueGames}{O{}gg}{%
\ensuremath{%
	\IfNoValueTF{#2}
	{\Problem{UniqueGames}_{#1}}%
	{\Problem{UniqueGames}_{#1}(#2, #3)}%
	}%
	\xspace}
\NewDocumentCommand{\SparsestCut}{gg}{%
\ensuremath{%
	\IfNoValueTF{#2}{%
		\IfNoValueTF{#1}%
		{\Problem{SparsestCut}}%
		{\Problem{SparsestCut}(#1)}%
		}%
    	{\Problem{SparsestCut}(#1, #2)}%
    }%
  \xspace}
\newcommand{\BalancedSeparator}[2]%
{\Problem{BalancedSeparator(\(#1\), \(#2\))}}
\newcommand{\cube}[1]{\ensuremath{\{-1, 1\}^{#1}}}
\newcommand{\UniformVertexCover}[2]%
{\Problem{\(#1\)-UniformVertexCover(\(#2\))}}
\NewDocumentCommand{\VertexCover}{og}{%
  \ensuremath{%
    \IfValueTF{#2}
    {\Problem{VertexCover(\(#2\))}}
    {\Problem{VertexCover}}%
    \IfValueT{#1}{\sb{#1}}}%
  \xspace}
\NewDocumentCommand{\hyperVertexCover}{mog}{%
  \ensuremath{%
    \IfValueTF{#3}
    {\Problem{\(#1\)-VertexCover(\(#3\))}}
    {\Problem{\(#1\)-VertexCover}}%
    \IfValueT{#2}{\sb{#2}}}%
  \xspace}
\NewDocumentCommand{\IndependentSet}{og}{%
  \ensuremath{%
    \IfValueTF{#2}
    {\Problem{IndependentSet(\(#2\))}}
    {\Problem{IndependentSet}}%
    \IfValueT{#1}{\sb{#1}}}%
  \xspace}
  \renewcommand{\neq}{\9042\140}
  \renewcommand{\MaxCUT}{\Problem{MaxCUT}\xspace}%
  \renewcommand{\Matching}{\Problem{Matching}\xspace}%
  \renewcommand{\MaxCSP}[1]{\Problem{Max-\(#1\)-CSP}\xspace}%
  \renewcommand{\hyperVertexCover}[1]
  {\Problem{\(#1\)-VertexCover}\xspace}%
  \renewcommand{\VertexCover}{\Problem{VertexCover}\xspace}%
  \renewcommand{\IndependentSet}{\Problem{IndependentSet}\xspace}%
\newcommand*{\set}[2]{\left\{#1\,\middle|\,#2\right\}}
\NewDocumentCommand{\conv}{mo}{\operatorname{conv}\left(#1%
    \IfValueT{#2}{\,\middle|\,#2}%
  \right)}
\NewDocumentCommand{\cone}{mo}{\operatorname{cone}\left(#1%
    \IfValueT{#2}{\,\middle|\,#2}%
  \right)}
\providecommand{\bigland}{\bigwedge}
\newcommand{\SDP}{\mathbb{S}_{+}}
\newcommand{\symM}{\ensuremath{\mathbb{S}}}
\newcommand*{\VAR}[1]{\mathbf{#1}}
\NewDocumentCommand{\probability}{d()om}{%
  \operatorname{\mathbb{P}}%
  \IfValueT{#1}{\sb{#1}}%
  \left[#3%
    \IfValueT{#2}{\,\middle|\,#2}\right]}
\NewDocumentCommand{\expectation}{d()om}%
  {\operatorname{\mathbb{E}}%
      \IfValueT{#1}{\sb{#1}}%
      \left[#3%
    \IfValueT{#2}{\,\middle|\,#2}\right]}
\DeclareMathOperator{\pseudoexpectation}{\widetilde{\mathbb{E}}}
\title{Strong reductions for extended formulations}
\date{December 15, 2015}
\author[1]{Gábor Braun}
\affil[1]{ISyE, Georgia Institute of Technology,
  Atlanta, GA,
  USA.
  \textit{Email:}~gabor.braun@isye.gatech.edu}
\author[2]{Sebastian Pokutta}
\affil[2]{ISyE, Georgia Institute of Technology,
  Atlanta, GA,
  USA.
  \textit{Email:}~sebastian.pokutta@isye.gatech.edu}
\author[3]{Aurko Roy}
\affil[3]{College of Computing, Georgia Institute of Technology,
  Atlanta, GA,
  USA.
  \textit{Email:}~aurko@gatech.edu}
\begin{document}

\maketitle

\begin{abstract}
  We generalize the reduction mechanism for linear programming
  problems and semidefinite programming problems from \cite{BPZ2015} in two ways
  \begin{enumerate*}
  \item
    relaxing the requirement of affineness
  \item
    extending to fractional optimization problems
  \end{enumerate*}

  As applications we provide several new LP-hardness and SDP-hardness
  results, e.g., for the \SparsestCut problem,
  the \Problem{BalancedSeparator} problem, the \MaxCUT problem 
  and the \Matching problem on \(3\)-regular graphs.
  We also provide a new, very strong Lasserre integrality gap
  for the \IndependentSet problem, which is strictly greater than the
    best known LP approximation, showing that the Lasserre hierarchy
  does not always provide the tightest SDP relaxation.
\end{abstract}

\section{Introduction}
\label{sec:introduction}

Linear and semidefinite programs are the main components in the design
of many practical algorithms and therefore
understanding their expressive power is a fundamental problem.
The complexity of these programs is measured by the number of constraints,
ignoring all other aspects affecting the running time of an actual
algorithm, in particular, these measures are independent of the
P~vs.~NP question.
We call a problem
\emph{LP-hard} if it does not admit
an LP formulation with a polynomial number of constraints,
and we define \emph{SDP-hardness} similarly.

Recently,
motivated by Yannakakis's influential work
\cite{Yannakakis88,Yannakakis91},
a plethora of strong lower bounds have been established
for many important optimization
problems, such as e.g.,
the \Matching problem \cite{rothvoss2013matching}
or the \Problem{TravelingSalesman} problem
\cite{extform4,extform4Jour,lee2014lower}.
In \cite{BPZ2015}, the authors introduced a reduction mechanism 
providing
inapproximability results for large classes of problems.
However, the reductions were required to be affine,
and hence failed for e.g., the \VertexCover problem,
where intermediate Sherali–Adams reductions were employed
in \cite{bazzi2015no}
due to this shortcoming.

In
this work we extend the reduction mechanism of \cite{BPZ2015} in two
ways, establishing several new hardness results both
in the LP and SDP setting; both are special cases arising from reinterpreting LPs 
and
SDPs as proof systems (see
Section~\ref{sec:nonnegative-problem}). 
First, by including additional \lq{}computation\rq{} in the reduction,
we allow non-affine relations between problems,
eliminating the need for Sherali–Adams reductions in
\cite{bazzi2015no}.
Second, we extend the framework to fractional optimization
problems (such as e.g., \SparsestCut) where ratios of linear
functions have to be optimized.
Here typically one optimizes the numerator and denominator
at the same time,
and that is what we incorporate in our framework.

\subsection*{Related Work}
\label{sec:related-work}

The immediate precursor to this work is
\cite{BPZ2015} (generalizing
\cite{Pashkovich12,bfps2012}), introducing a reduction mechanism.
Base hard problems are the \Matching
problem \cite{rothvoss2013matching},
as well as constraint satisfaction problems
\cite{chan2013approximate,lee2014lower}
based on hierarchy hardness results, such as e.g.,
\cite{schoenebeck2008linear} and \cite{charikar2009integrality}.

\subsection*{Contribution}
\label{sec:contribution}

\begin{description}[leftmargin=1em, font=\normalfont\emph]
\item[Generalized LP/SDP reductions.]
  We generalize the reduction
  mechanism in \cite{BPZ2015} by modeling additional computation,
  i.e., using extra LP or SDP constraints. Put differently, we allow
  for more complicated reduction maps as long as these maps themselves
  have a small
  LP/SDP formulation.  As a consequence, we can
 relax the affineness requirement and enable
  a weak form of \emph{gap-amplification} and
  \emph{boosting}. This overcomes a major limitation of the approach in
  \cite{BPZ2015}, yielding significantly stronger reductions
  at a small cost.

\item[Fractional LP/SDP optimization.]
  Second, we present a
  \emph{fractional} LP/SDP framework and reduction mechanism,
  where the objective functions are ratios of functions from
  a low dimensional space, such as for the \SparsestCut problem.
  For these problems
  the standard LP/SDP framework is meaningless as the ratios span a
  high dimensional affine space.
  The fractional framework models the
  usual way of solving fractional optimization problems,
  enabling us to establish strong
  statements about LP or SDP complexity.

\item[Direct non-linear hardness reductions.]  We
  demonstrate the power of our generalized reduction by establishing
  new LP-hardness and SDP-hardness for several problems of interest,
  i.e., these problems cannot be solved by LPs/SDPs of polynomial size;
  see Table~\ref{fig:reductions}.  We establish various
  hardness results for the \SparsestCut and
  \Problem{BalancedSeparator}
  problems
  even when one of the underlying graph has bounded treewidth.  We
  redo the reductions to intermediate CSP problems used for optimal
  inapproximability results for the \VertexCover problem over simple
  graphs and \(Q\)-regular
  hypergraphs in \cite{bazzi2015no}, eliminating Sherali–Adams
  reductions.  We also show the first explicit SDP-hardness for the
  \MaxCUT problem, inapproximability within a factor of
  \(15/16+\varepsilon\),
  which is stronger than the algorithmic hardness of
  \(16/17+\varepsilon\).
  Finally, we prove a new, strong Lasserre integrality gap of
  \(n^{1-\gamma}\)
  after \(O(n^\gamma)\)
  rounds for the \IndependentSet problem for any sufficiently small
  \(\gamma > 0\).
  It not only significantly strengthens and complements the best-known
  integrality gap results so far
  (\cite{tulsiani2009csp} and
  \cite{au2011complexity,au2013comprehensive}; see also
  \cite{liptak2003stable,stephen1999representation}),
  but also shows the
  suboptimality of Lasserre relaxations for the \IndependentSet
  problem together with \cite{bazzi2015no}.

\item[Small uniform LPs for bounded treewidth problems.]
Finally, we introduce a new technique in
  Section~\ref{sec:small-uniform-lp} to derive small
  \emph{uniform} linear programs for problems over graphs of bounded
  treewidth.  Here the same linear program is used for all bounded
  treewidth instances of the same size, independent of the actual tree
  decompositions, whereas the linear program in \cite{Kolman15}
  work for a single input instance only (with fewer inequalities
  than our linear program).
\end{description}

\begin{table*}[h]
\renewcommand{\arraystretch}{1.2}
  \centering \footnotesize
  \begin{tabulary}{\textwidth}{LclcC}
    Problem & Factor & Source & Paradigm & Remark
    \\ \hline 
    \MaxCUT & \(\frac{15}{16} + \varepsilon\) & \MaxXOR{3}/0 & SDP
	\\
    \SparsestCut{n},
    \(\tw(\text{supply}) = O(1)\)
    &
    \(2 - \varepsilon\) & \MaxCUT & LP & opt.   
                                         \cite{chan2013approximate}
  	\\    
    \SparsestCut{n}, \(\tw(\text{supply}) = O(1)\)
    &
    \(\frac{16}{15} - \varepsilon\) & \MaxCUT & SDP 
    \\
    \BalancedSeparator{n}{d}, \(\tw(\text{demand}) = O(1)\) &
    \(\omega(1)\) & \UniqueGames & LP
    \\
    \IndependentSet & \(\omega(n^{1-\varepsilon})\) & \MaxCSP{k} &
    \begin{tabular}{c}
      Lasserre \\ \(O(n^\varepsilon)\) rounds
    \end{tabular}
                                                            & 
    \\
    \Matching, 3-regular & \(1 + \varepsilon / n^{2}\) & \Matching
    & LP
    \\
    \OneFreeCSP \tikzmark{1F-CSP} & \multirow{2}{*}{\(\omega(1)\)} &
    \multirow{2}{*}{\UniqueGames} & \multirow{2}{*}{LP} &
    \multirow{2}{*}{%
      \begin{tabular}{l}
        \cite{bazzi2015no} \\ w/o SA
      \end{tabular}
    }
    \\
    \NotEqualCSP{Q} \tikzmark{NotEqCSP}&
  \end{tabulary}
  \begin{tikzpicture}[overlay, remember picture]
    \draw[decorate,decoration={brace,amplitude=1ex}]
    ($(1F-CSP -| NotEqCSP) + (0.0em,+1.7ex)$ ) to
    ($(NotEqCSP) + (0.0em,-0.5ex)$);
  \end{tikzpicture}
  \vspace{8pt}
  \caption{\label{fig:reductions} Inapproximability of optimization
    problems.
    \(\tw\) denotes treewidth.}
\end{table*}
\subsection*{Outline}
\label{sec:outline}

We start by recalling and refining the linear programming framework
in Section~\ref{sec:preliminaries}, including the
optimization problems we shall consider.
We develop a general theory in Section~\ref{sec:nonnegative-problem}
leading easily to both a generalized reduction mechanism in
Section~\ref{sec:reductions} and
an extension to fractional optimization
in Section~\ref{sec:fract-opt-problems-1}.
The remaining chapters contain mostly applications
to various problems.
Exceptions are Section~\ref{sec:Lasserre-independent-set},
establishing a Lasserre integrality gap
for the \IndependentSet problem,
and Section~\ref{sec:small-uniform-lp}
providing a small linear program for bounded treewidth problems.

\section{Preliminaries}
\label{sec:preliminaries}

Here we recall the linear programming
and semidefinte programming framework from \cite{BPZ2015},
as well as the optimization problems we shall consider later,
paying particular attention to base hard problems.
Section~\ref{sec:nonnegative-problem} is a new technical foundation 
for
the framework, presenting the underlying theory in a unified simple
way, from which the extensions in Sections~\ref{sec:reductions} and
\ref{sec:fract-opt-problems-1} readily follow. We start by
recalling the notion of \emph{tree decompositions} and 
\emph{treewidth} of a graph.

\begin{definition}[Tree width]
\label{def:treewidth}
A \emph{tree decomposition} of a graph \(G\)
is a tree \(T\) together with a vertex set of \(G\) called
\emph{bag} \(B_{t} \subseteq V(G)\)
for every node \(t\) of \(T\),
satisfying the following conditions:
\begin{enumerate*}
\item \(V(G) = \bigcup_{t \in V(T)} B_{t}\)
\item For every adjacent vertices \(u\), \(v\) of \(G\)
  there is a bag \(B_{t}\) containing both \(u\) and \(v\)
\item For all nodes \(t_{1}\), \(t_{2}\), \(t\) of \(T\)
  with \(t\) lying between \(t_{1}\) and \(t_{2}\)
  (i.e., \(t\) is on the unique path connecting
  \(t_{1}\) and \(t_{2}\))
  we have \(B_{t_{1}} \cap B_{t_{2}} \subseteq B_{t}\)
\end{enumerate*}
The \emph{width} of the tree decomposition is
\(\max_{t \in V(T)} \size{B_{t}} - 1\):
one less than the maximum bag size.
The \emph{treewidth} \(\tw(G)\) of \(G\)
is the minimum width of its tree decompositions.
\end{definition}

We will use
\(\chi(\cdot)\) for indicator functions:
i.e., \(\chi(X) = 1\) if the statement \(X\) is true,
and \(\chi(X) = 0\) otherwise.
We will denote random
variables using bold face, e.g. \(\VAR{x}\).
Let \(\symM^{r}\) denote the set of symmetric \(r \times r\) real
matrices, and let \(\SDP^{r}\) denote the set of positive semidefinite
\(r \times r\) real matrices.

\subsection{Optimization Problems}
\label{sec:optim-probl}

\begin{definition}[Optimization problem]
  \label{def:opt-problem}
  An \emph{optimization problem}
  is a tuple
  \(\mathcal{P} = (\mathcal{S}, \mathfrak{I}, \val)\)
  consisting of a set \(\mathcal{S}\)
  of \emph{feasible solutions}, a set \(\mathfrak{I}\)
  of \emph{instances},
  and a real-valued objective
  called \emph{measure}
  \(\val \colon \mathfrak{I} \times \mathcal{S}  \to \R\).

We shall write \(\val_{\mathcal{I}}(s)\) for the objective value
of a feasible solution \(s \in \mathcal{S}\)
for an instance \(\mathcal{I} \in \mathfrak{I}\).
\end{definition}

The \SparsestCut problem is defined over a graph with two kinds of
edges: \emph{supply} and \emph{demand} edges. The objective is to 
find a cut that minimizes
the ratio of the capacity of cut supply edges 
to the total demand separated.
For a weight function \(f \colon E(K_n) \rightarrow
\R_{\ge 0}\), we define the graph \([n]_f \defeq ([n], E_f)\)
where \(E_f \defeq \{(i, j) \mid i, j \in [n], f(i, j) > 0\}\).
We study the \SparsestCut problem with 
bounded-treewidth supply graph.

\begin{definition}[\SparsestCut{n}{k}]
\label{def:sparsestCut}
  Let \(n\) be a positive integer. The minimization problem \SparsestCut{n}{k} consists of
  \begin{description}
  \item[instances]
    a pair \((d, c)\) of a
    nonnegative \emph{demand} \(d \colon E(K_n)\rightarrow
    \R_{\ge 0}\) 
    and a \emph{capacity}
    \(c \colon E(K_{n}) \to \R_{\ge 0}\)
    such that \(\tw([n]_c) \le k\);
  \item[feasible solutions]
    all subsets \(s\) of \([n]\);
  \item[measure]
    ratio of separated capacity and separated demand:
    \begin{equation*}
      \val_{d, c}(s) = \frac{\sum_{i \in s, j \notin s} c(i, j)}
      {\sum_{i \in s, j \notin s} d(i, j)}
     \end{equation*}
	for capacity \(c\), demand \(d\), and set \(s\). 
  \end{description}
\end{definition}

The \Problem{BalancedSeparator} problem 
is similar to the \SparsestCut
problem and is also defined over
a graph with \emph{supply} and \emph{demand} edges.
However it restricts the solutions to cuts that are \emph{balanced},
i.e., which separate a large proportion of the demand. Note
that in this case we define the \Problem{BalancedSeparator}
 problem on \(n\) vertices for a 
 fixed demand function \(d\), unlike in the case
of \SparsestCut where the demand function \(d\) was part of the
instances. This is because in the framework of \cite{BPZ2015}
the solutions should be independent of the instances. We formalize
this below. 

\begin{definition}[\BalancedSeparator{n}{d}]
  \label{def:balanced-separator}
  Let \(n\) be a positive integer, and
  \(d \colon [n] \times [n] \to \R_{\ge 0}\) a nonnegative function
  called \emph{demand function}.
  Let \(D\) denote the total demand \(\sum_{i, j \in [n]} d(i, j)\).
  The minimization problem \BalancedSeparator{n}{d} consists of
  \begin{description}
  \item[instances]
    nonnegative \emph{capacity} function
    \(c \colon E(K_{n}) \to \R_{\ge 0}\)
    on the edges of the complete graph \(K_{n}\);
  \item[feasible solutions]
    all subsets \(s\) of \([n]\)
    such that \(\sum_{i \in s, j \notin s} d(i, j)\)
     is at least \(D/4\);
  \item[measure]
    capacity of cut supply edges:
    \(\val_{c}(s) \defeq
     \sum_{i \in s, j \notin s} c(i, j)\)
    for a capacity function \(c\) and set \(s\).
  \end{description}
\end{definition}

Recall that an \emph{independent set} \(I\) of a graph \(G\)
is a subset of pairwise non-adjacent vertices
\(I \subseteq V(G)\). The \IndependentSet problem
on a graph \(G\) asks for an independent set of \(G\) of maximum size.
We formally define it as an optimization problem below.

\begin{definition}[\IndependentSet{G}]
  \label{def:independent-set}
  Given a graph \(G\), the maximization problem \IndependentSet{G}
  consists of
  \begin{description}
  \item[instances] all induced subgraphs \(H\) of \(G\);
  \item[feasible solutions] all independent subsets \(I\) of \(G\);
  \item[measure] \(\val_{H}(I) = \size{I \cap V(H)}\).
  \end{description}
\end{definition}

Recall that a subset \(X\) of \(V(G)\) for a graph \(G\) is a 
\emph{vertex cover} if every edge of \(G\) has at least
one end point in \(X\). The \VertexCover problem on a graph \(G\)
asks for a vertex cover of \(G\) of minimum size. We give a formal
definition below.

\begin{definition}[\VertexCover{G}]
\label{def:vertex-cover}
Given a graph \(G\), the minimization problem \VertexCover{G}
consists of
\begin{description}
\item[instances] all induced subgraphs \(H\) of \(G\);
\item[feasible solutions] all vertex covers \(X\) of \(G\);
\item[measure] \(\val_{H}(X) = \size{X \cap V(H)}\).
\end{description}
\end{definition}

The \MaxCUT problem on a graph \(G\) asks for a vertex set of \(G\)
cutting a maximum number of edges.
Given a vertex set \(X \subseteq V(G)\),
let \(\delta_G(X) \defeq
\set{\{u, v\} \in E(G)}{u \in X, v \notin X}\)
denote the set of edges of \(G\) with one end point in \(X\)
and the other end point outside \(X\).
\begin{definition}[\MaxCUT{G}]
  \label{def:maxcut}
  Given a graph \(G\), the 
  maximization problem \MaxCUT{G}
  consists of
  \begin{description}
  \item[instances] all induced subgraph \(H\) of \(G\);
  \item[feasible solutions] all vertex subsets \(X \subseteq V(G)\);
  \item[measure] \(\val_{H}(X) = \size{E(H) \cap \delta_G(X)}\).
  \end{description}
\end{definition}

\emph{Constraint satisfaction problems} (CSPs for short)
are inherently related to inapproximability results,
and form a basic collection of inapproximable problems.
There are many variants of CSPs,
but the general structure is as follows:
\begin{definition}[Constraint Satisfaction Problems]
  A \emph{constraint satisfaction problem}, in short CSP,
  is an optimization problem on a fixed set
  \(\{x_{1}, \dotsc, x_{n}\}\)
  of variables with values in a fixed set \([q]\)
  consisting of
  \begin{description}
  \item[instances]
    formal weighted sums
    \(\mathcal{I} = \sum_{i} w_{i} C_{i}(x_{j_{1}}, 
    \dotsc, x_{j_{k_{i}}})\)
    of some clauses \(C_{i} \colon [q]^{k_{i}} \to \{0, 1\}\)
    with weights \(w_{i} \geq 0\).
  \item[feasible solutions]
    all mappings \(s \colon \{x_{1}, \dotsc, x_{n}\} \to [q]\),
    called \emph{assignments} to variables
  \item[measure]
    weighted fraction of satisfied clauses:
    \begin{equation*}
      \val_{\mathcal{I}}(s) \coloneqq \frac{
        \sum_{i} w_{i} C_{i}(s(x_{j_{1}}), \dotsc, s(x_{j_{k_{i}}}))
      }{\sum_{i} w_{i}}
    \end{equation*}
  \end{description}
\end{definition}
A CSP can be either a maximization problem or a minimization problem.
For specific CSPs there are restrictions on permitted clauses,
and later we will define CSPs by specifying only these restrictions.
For example \MaxCSP{k} is the problem where only clauses with
at most \(k\) free variables are allowed (i.e., \(k_{i} \leq k\)
in the definition above).
The problem \MaxXOR{k} is the problem with clauses of the form
\(x_{1} + \dotsb + x_{k} = b\)
where the \(x_{i}\) are distinct variables, \(b \in \{0, 1\}\),
and the addition is modulo \(2\).
We shall use the subproblem \MaxXOR[0]{k},
where the clauses have the form
\(x_{1} + \dotsb + x_{k} = 0\).

Given a \(k\)-ary predicate \(P\),
let \(\MaxCSP{k}{P}\) denote the CSP where all clauses arise
via a change of variables from \(P\),
i.e., every clause have the form \(P(x_{i_{1}}, \dotsc, x_{i_{k}})\)
with \(i_{1}\), \dots, \(i_{k}\) being pairwisely distinct.
For example,
\(\MaxXOR[0]{k} = \MaxCSP{k}{x_{1} + \dotsb + x_{k} = 0}\).

Another specific example of a CSP we will make use of is the
\UniqueGames problem. 
The \UniqueGames problem asks for a labeling of the vertices of a graph
that maximizes the number (or weighted sum) of edges where the labels of
the endpoints match. We formalize it
restricted to regular bipartite graphs.

\begin{definition}[{\UniqueGames[\Delta]{n}{q}}]
  \label{def:Unique-Games}
  Let \(n\), \(q\) and \(\Delta\) be positive integer parameters.
  The maximization problem \UniqueGames[\Delta]{n}{q}
  consists of
  \begin{description}
  \item[instances]
    All edge-weighted \(\Delta\)-regular bipartite graphs \((G, w)\)
    (i.e., a graph \(G\) with a collection
    \(\{w_{u,v}\}_{\{u, v\} \in E(G)}\) of real numbers)
    with partite sets
    \(\{0\} \times [n]\) and \(\{1\} \times [n]\)
    with every edge \(\{i, j\}\) labeled with a permutation
    \(\pi_{i,j} \colon [q] \to [q]\)
    such that \(\pi_{i,j} = \pi_{j,i}^{-1}\).
  \item[feasible solutions]
    All functions \(s \colon \{0,1\} \times [n] \to [q]\)
    called \emph{labelings} of the vertices.
  \item[measure]
    The weighted fraction of correctly labeled edges,
    i.e., edges \(\{i, j\}\) with \(s(i) = \pi_{i, j}(s(j))\):
    \begin{equation*}
      \val_{(G, w)}(s) \coloneqq
      \frac{%
        \sum_{\substack{\{i,j\} \in E(G) \\ s(i) = \pi_{i, j}(s(j))}}
      w(i,j)}
    {\sum_{\{i,j\} \in E(G)}w(i,j)}
    \end{equation*}
  \end{description}
\end{definition}

The \Matching problem asks for a matching in a graph \(H\)
of maximal size.  The restriction to matchings and subgraphs (which
corresponds to 0/1 weights in the objective of the matching problem) below serves the purpose to obtain a
base hard problem, with which we can work more easily later.

\begin{definition}[\Matching{G}]\label{prob:pm}  
  The maximum
  matching problem \(\Matching{G}\) over
  a graph \(G\) is defined as
  the maximization problem:
\begin{description}
\item[instances]
  all subgraphs \(H\) of \(G\)
\item[feasible solutions] all perfect matchings
  \(S\) on \(G\).
\item[measure]
  the size of induced matching
  \(\val_{G}(S) \coloneqq \size{S \cap E(H)}\) with
  \(S \in \mathcal{S}\), and
  \(H\) a subgraph of \(G\).
\end{description}
We will also write \Matching[k]{G} to indicate that the maximum vertex
degree is at most \(k\). 
\end{definition}

\subsubsection{Uniform problems}
\label{sec:uniform-problems}

Here we present so called \emph{uniform} versions of some of the
optimization problems discussed so far, where the class of instances
is typically much larger, e.g., the class of all instances of a given
size.  \emph{Non-uniform} optimization problems typically consider
weighted versions of a \emph{specific instance} or all induced
subgraphs of a \emph{given graph}. For establishing lower bounds,
non-uniform optimization problems give stronger bounds: \lq{}even if
we consider a \emph{specific graph}, then there is no small
LP/SDP\rq{}. In the case of upper bounds, i.e., when we provide
formulations, uniform optimization problems provide stronger
statements: \lq{}even if we consider \emph{all graphs
  simultaneously}, then there exists a small LP/SDP\rq{}.

We will
later show in Section~\ref{sec:small-uniform-lp} that over graphs of
bounded tree-width there exists a small LP that solves the
uniform version of optimization problems.  We start by defining the uniform version of \MaxCUT.
Recall that for a graph \(G\) and a subset \(X\) of \(V(G)\),
we define \(\delta_G(X) \defeq \set{\{u, v\} \in E(G)}{
u \in X, v \notin X}\) to be the set of crossing edges.

\begin{definition}[\MaxCUT{n}]
\label{def:max-cut-uniform}
For a positive integer \(n\), the maximization problem \MaxCUT{n}
consists of
\begin{description}
\item[instances] all graphs \(G\) with \(V(G) \subseteq [n]\);
\item[feasible solutions] all subsets \(X\) of \([n]\);
\item[measure] \(\val_{G}(X) = \size{\delta_G(X)}\).
\end{description}
\end{definition}

With \IndependentSet and \VertexCover
we face the difficulty that the solutions are instance dependent.
Hence we enlarge the feasible solutions to include all possible 
vertex sets, and in the objective function penalize the violation
of requirements.
\begin{definition}[\IndependentSet{n}]
  \label{def:independent-set-uniform}
  For a positive integer \(n\),
  the maximization problem \IndependentSet{n}
  consists of
  \begin{description}
  \item[instances] all graphs \(G\) with \(V(G) \subseteq [n]\);
  \item[feasible solutions] all subsets \(X\) of \([n]\);
  \item[measure]
    the number of vertices of \(G\) in \(X\)
    penalized by the number of edges of \(G\)
    inside \(X\):
    \begin{equation}
      \val_{G}(X) = \size{X \cap V(G)} - \size{E(G[X])}.
    \end{equation}
  \end{description}
\end{definition}

Recall that \VertexCover asks for a minimal size vertex set \(X\)
of a graph \(G\) such that every edge of \(G\) has at least one of its
endpoints in \(X\).
\begin{definition}
  \label{def:vertex-cover}
  For a positive integer \(n\)
  the minimization problem \VertexCover consists of
  \begin{description}
  \item[instances] all graphs \(G\) with \(V(G) \subseteq [n]\)
  \item[feasible solutions]
    all subsets \(X \subseteq V(G)\)
  \item[measure]
    the number of vertices of \(G\) in \(X\)
    penalized by the number of uncovered edges:
    \begin{equation}
      \val_{G}(X) \coloneqq
      \size{X \cap V(G)}
      + \size{E(G \setminus X)}
      .
    \end{equation}
  \end{description}
\end{definition}

\subsection{Nonnegativity problems: Extended formulations as proof system}
\label{sec:nonnegative-problem}

In this section we introduce an abstract view of formulation
complexity, where the main idea is to reduce all statements to 
the core question about the complexity of deriving nonnegativity
for a class of nonnegative functions.
This abstract view will allow us to easily  introduce
future versions of reductions and optimization problems with automatic availability of
Yannakakis's Factorization Theorem and the reduction mechanism.

\begin{definition}
  \label{def:nonnegative-problem}
  A \emph{nonnegativity problem}
  \(\mathcal{P} = (\mathcal{S}, \mathfrak{I}, \val)\)
  consists of
  a set \(\mathfrak{I}\) of \emph{instances},
  a set \(\mathcal{S}\) of \emph{feasible solutions}
  and a nonnegative \emph{evaluation}
  \(\val \colon \mathfrak{I} \times \mathcal{S} \to \R_{\geq 0}\).
\end{definition}

As before, we shall write \(\val_{\mathcal{I}}(s)\) instead of
\(\val(\mathcal{I}, s)\). The aim is to study the
complexity of proving nonnegativity of
the functions \(\val_{\mathcal{I}}\).
Therefore we define the notion of proof as a linear program or a
semidefinite program.

\begin{definition}
  \label{def:nonneg-LP-proof}
  Let \(\mathcal{P} = (\mathcal{S}, \mathfrak{I}, \val)\)
  be a nonnegativity problem.
  An \emph{LP proof} of nonnegativity of \(\mathcal{P}\)
  consists of a linear program \(A x \leq b\)
  with \(x \in \R^{r}\) for some \(r\)
  and the following \emph{realizations}:
  \begin{description}
  \item[Feasible solutions] as vectors \(x^{s} \in \R^{r}\)
    for every \(s \in \mathcal{S}\) satisfying
  \begin{align}
    \label{eq:nonneg-LP-contain}
    A x^{s} &\leq b \qquad \text{for all } s \in \mathcal{S},
  \end{align}
  i.e., the system \(Ax \leq b\) is a relaxation (superset) of
  \(\conv{x^s \mid s \in \mathcal{S}}\).
  \item[Instances] as affine functions
    \(w_{\mathcal{I}} \colon \R^{r} \to \R\)
    for all \(\mathcal{I} \in \mathfrak{I}^{S}\)
    satisfying
    \begin{equation}
      \label{eq:nonneg-LP-linear}
      w_{\mathcal{I}}(x^{s}) = \val_{\mathcal{I}}(s)
      \qquad \text{for all } s \in
      \mathcal{S},
    \end{equation}
    i.e., the linearization \(w_{\mathcal{I}}\) of
    \(\val_{\mathcal{I}}\)
    is required to be exact on all \(x^s\)
    with \(s \in \mathcal{S}\).
  \item[Proof] 
    We require that the \(w_{\mathcal{I}}\) are nonnegative
    on the solution set of the LP:
    \begin{equation}
      \label{eq:nonneg-LP-nonneg}
      w_{\mathcal{I}}(x) \geq 0 \qquad \text{whenever } A x \leq b,
      \mathcal{I} \in \mathfrak{I}
      .
    \end{equation}
  \end{description}
  The \emph{size} of the formulation is the number of inequalities
  in \(A x \leq b\).
  Finally,
  \emph{LP proof complexity} \(\fc(\mathcal{P})\)
  of \(\mathcal{P}\) is
  the minimal size of all its LP proofs.
\end{definition}

The notion of an SDP proof is defined similarly.

\begin{definition}
  \label{def:nonneg-SDP-proof}
  Let \(\mathcal{P} = (\mathcal{S}, \mathfrak{I}, \val)\)
  be a nonnegativity problem.
  An \emph{SDP proof} of nonnegativity of \(\mathcal{P}\)
  consists of a semidefinite program
  \(\set{X \in \SDP^{r}}{\mathcal{A}(X) = b}\)
  (i.e., a linear map \(\mathcal{A} \colon \symM^{r} \to \R^{k}\)
  together with a vector \(b \in \R^{k}\))
  and the following \emph{realizations}:
  \begin{description}
  \item[Feasible solutions] as vectors \(X^{s} \in \SDP^{r}\)
    for all \(s \in \mathcal{S}\) satisfying
    \begin{align}
      \label{eq:nonneg-SDP-contain}
      \mathcal{A}(X^{s}) = b
    \end{align}
  \item[Instances] as nonnegative affine functions
    \(w_{\mathcal{I}} \colon \symM^{r} \to \R\)
    for all \(\mathcal{I} \in \mathfrak{I}\)
    satisfying
    \begin{equation}
      \label{eq:nonneg-SDP-linear}
      w_{\mathcal{I}}(X^{s}) = \val_{\mathcal{I}}(s)
      \qquad \text{for all } s \in \mathcal{S}
      .
    \end{equation}
  \item[Proof] We require nonnegativity on the feasible region of the
    SDP:
    \begin{equation}
      \label{eq:nonneg-SDP-nonneg}
      w_{\mathcal{I}}(X) \geq 0 \qquad \text{whenever }
      \mathcal{A}(X) = b,
      X \in \SDP^{r},
      \mathcal{I} \in \mathfrak{I}
      .
    \end{equation}
  \end{description}
  The \emph{size} of the formulation is the dimension parameter \(r\).
  Finally, the
  \emph{SDP proof complexity} \(\fcSDP(\mathcal{P})\)
  of \(\mathcal{P}\) is
  the minimal size of all its SDP proofs.
\end{definition}

\subsubsection{Slack matrix and proof complexity}
\label{sec:slack-matrix-proof}

We introduce the slack matrix of a nonnegativity problem
as a main tool to study proof complexity,
generalizing the approach from the polyhedral world.
The main result is a version of Yannakakis's Factorization Theorem
formulating proof complexity in the language of linear algebra
as a combinatorial property of the slack matrix.
\begin{definition}
  \label{def:nonneg-slack-matrix}
  The \emph{slack matrix} of a nonnegativity problem
  \(\mathcal{P} = (\mathcal{S}, \mathfrak{I}, \val)\)
  is the \(\mathfrak{I} \times \mathcal{S}\) matrix
  \(M_{\mathcal{P}}\)
  with entries the values of the function \(\val_{\mathcal{I}}\)
  \begin{equation}
    \label{eq:nonneg-slack-matrix}
    M_{\mathcal{P}}(\mathcal{I}, s) \coloneqq \val_{\mathcal{I}}(s)
    .
  \end{equation}
\end{definition}

We will use the standard notions of nonnegative rank and semidefinite
rank. 

\begin{definition}[\cite{BPZ2015}]
  \label{def:matrix-factorization-nonneg}
  Let \(M\) be a nonnegative matrix.
  \begin{description}
  \item[nonnegative factorization]
    A \emph{nonnegative factorization} of \(M\) of size \(r\)
    is a decomposition \(M = \sum_{i=1}^{r} M_{i}\) of \(M\)
    as a sum of \(r\) nonnegative matrices \(M_{i}\) of rank \(1\).
    The \emph{nonnegative rank} \(\nnegrk M\) is the minimum \(r\)
    for which \(M\) has a nonnegative factorization of size \(r\).
  \item[psd factorization]
    A \emph{positive semi-definite (psd) factorization} of \(M\)
    of size \(r\)
    is a decomposition \(M(\mathcal{I}, s) =
    \Tr[A_{\mathcal{I}} B_{s}]\) of \(M\)
    where the \(A_{\mathcal{I}}\) and \(B_{s}\)
     are positive semi-definite (psd)
    \(r \times r\) matrices.
    The \emph{psd rank} \(\psdrk M\) is the minimum \(r\)
    for which \(M\) has a psd factorization of size \(r\).
  \end{description}
  We define variants ignoring factors of the form \(a \allOne\):
  \begin{description}
  \item[LP factorization]
    An \emph{LP factorization} of \(M\) of size \(r\)
    is a decomposition \(M = \sum_{i=1}^{r} M_{i} + u \allOne\)
    of \(M\)
    as a sum of \(r\) nonnegative matrices \(M_{i}\) of rank \(1\)
    and possibly an additional nonnegative rank-\(1\)
    \(u \allOne\) with all columns being equal.
    The \emph{LP rank} \(\LPrk M\) is the minimum \(r\)
    for which \(M\) has an LP factorization of size \(r\).
  \item[SDP factorization]
    An \emph{SDP factorization} of \(M\) of size \(r\)
    is a decomposition \(M(\mathcal{I},s) =
    \Tr[A_{\mathcal{I}} B_{s}] + u_{\mathcal{I}} \)
    of \(M\)
    where the \(A_{\mathcal{I}}\) and \(B_{s}\)
     are positive semi-definite (psd)
    \(r \times r\) matrices,
    and \(u_{\mathcal{I}}\) is a nonnegative number.
    The \emph{SDP rank} \(\SDPrk M\) is the minimum \(r\)
    for which \(M\) has an SDP factorization of size \(r\).
  \end{description}
\end{definition}

\begin{remark}
  \label{rem:matrix-factorization}
  The difference between LP rank and nonnegative rank (see Definition~\ref{def:matrix-factorization-nonneg}) is
  solely by measuring the size of a factorization:
  for LP rank factors with equal columns
  do not contribute to the size.
  This causes a difference of at most \(1\) between the two ranks.
  The motivation for the LP rank is that it captures exactly the
  LP formulation complexity of an optimization problem, in particular
  for approximation problems (see \cite{BPZ2015} for an in-depth discussion).
  Similar remarks apply to the relation of SDP rank, psd rank,
  and SDP formulation complexity.
\end{remark}

\begin{theorem}
  \label{thm:factorization-nonneg}
  For every nonnegativity problem \(\mathcal{P}\)
  with slack matrix \(M_{\mathcal{P}}\) we have
  \begin{align}
    \label{eq:factorization-nonneg_LP}
    \fc(\mathcal{P})
    &
    = \LPrk M_{\mathcal{P}}
    ,
    \\
    \label{eq:factorization-nonneg_SDP}
    \fcSDP(\mathcal{P})
    &
    = \SDPrk M_{\mathcal{P}}
    ,
  \end{align}
\begin{proof}
The proof is an extension of the usual proofs of Yannakakis's
Factorization Theorem, e.g., that in \cite{BPZ2015}.
We provide the proof only for the LP case,
as the proof for the SDP case is similar.

First we prove \(\LPrk M_{\mathcal{P}} \leq \fc(\mathcal{P})\).
Let  \(Ax \leq b\) be an LP proof for \(\mathcal{P}\) of size
\(\fc(\mathcal{P})\)
with realization \(x^s\) for \(s \in \mathcal{S}\)
and affine functions \(w_{\mathcal{I}}\) for
\(\mathcal{I} \in \mathfrak{I}\).
By Farkas's lemma,
there are nonnegative matrices
\(u_{\mathcal{I}}\) and nonnegative numbers \(\gamma_{\mathcal{I}}\)
with
\(w_{\mathcal{I}}(x) = u_{\mathcal{I}} \cdot (b - Ax) +
\gamma_{\mathcal{I}}\).
Substituting \(x\) by \(x^{s}\), we obtain
an LP factorization of size \(\fc(\mathcal{P})\):
\begin{equation*}
  M_{\mathcal{P}} (\mathcal{I}, s) = \val_{\mathcal{I}}(s)
  = w_{\mathcal{I}}(x^{s})
  = u_{\mathcal{I}} \cdot (b - A x^{s}) + \gamma_{\mathcal{I}}
  .
\end{equation*}

Conversely, to show
\(\fc(\mathcal{P}) \leq \LPrk(\mathcal{P})\),
we choose an LP factorization
of \(M_{\mathcal{P}}\) of size \(r = \LPrk(\mathcal{P})\)
\begin{equation*}
  M_{\mathcal{P}}(\mathcal{I}, s) = u_{\mathcal{I}}
  x^{s} + \gamma_{\mathcal{I}}
\end{equation*}
where the \(u_{\mathcal{I}}\) and \(x^{s}\)
are nonnegative matrices of size \(1 \times r\) and \(r \times 1\),
respectively,
and the \(\gamma_{\mathcal{I}}\) are nonnegative numbers.
Now \(\mathcal{P}\) has the following LP proof:
The linear program is \(x \geq 0\) for \(x \in \R^{r \times 1}\).
A feasible solution \(s\) is represented by the vector \(x^{s}\).
An instance \(\mathcal{I}\) is represented by
\begin{equation*}
  w_{\mathcal{I}}(x) \defeq u_{\mathcal{I}} x + \gamma_{\mathcal{I}}
  .
\end{equation*}
To check the proof, note that by nonnegativity of \(u_{\mathcal{I}}\)
and \(\gamma_{\mathcal{I}}\), we have
\(w_{\mathcal{I}}(x) \geq 0\) for all \(x \geq 0\).
Clearly,
\(w_{\mathcal{I}}(x^{s}) = M_{\mathcal{P}}(\mathcal{I}, s)
= \val_{\mathcal{I}}(s)\),
completing the proof.
\end{proof}
\end{theorem}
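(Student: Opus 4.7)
The plan is to prove the statement as a generalization of Yannakakis's Factorization Theorem, extending the classical polyhedral argument to the more abstract setting of nonnegativity problems. I would only write out the LP case in detail, noting that the SDP case follows by replacing Farkas's lemma with SDP duality. The two directions $\LPrk M_{\mathcal{P}} \leq \fc(\mathcal{P})$ and $\fc(\mathcal{P}) \leq \LPrk M_{\mathcal{P}}$ are each a direct translation between the two combinatorial objects.

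For the first inequality, I would start with an LP proof of size $\fc(\mathcal{P})$: a linear program $A x \leq b$ with realizations $x^{s}$ and affine functions $w_{\mathcal{I}}(x)$. The hypothesis \eqref{eq:nonneg-LP-nonneg} says that $w_{\mathcal{I}}$ is nonnegative on the polyhedron $\{Ax \leq b\}$, so Farkas's lemma yields a nonnegative row vector $u_{\mathcal{I}}$ and a nonnegative scalar $\gamma_{\mathcal{I}}$ with $w_{\mathcal{I}}(x) = u_{\mathcal{I}}(b - Ax) + \gamma_{\mathcal{I}}$. Substituting $x = x^{s}$ and using the linearization identity \eqref{eq:nonneg-LP-linear} gives $M_{\mathcal{P}}(\mathcal{I},s) = u_{\mathcal{I}}(b - A x^{s}) + \gamma_{\mathcal{I}}$. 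The matrix $b - A x^{s}$ (indexed by constraints and by $s$) is nonnegative by feasibility of $x^{s}$, so this is an LP factorization of $M_{\mathcal{P}}$ of size equal to the number of rows of $A$, with the scalar term $\gamma_{\mathcal{I}}$ playing the role of the free additive column $u \allOne$ in Definition~\ref{def:matrix-factorization-nonneg}.

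For the reverse inequality, I would run the construction backwards: given an LP factorization $M_{\mathcal{P}}(\mathcal{I},s) = u_{\mathcal{I}} x^{s} + \gamma_{\mathcal{I}}$ of size $r$ with nonnegative $u_{\mathcal{I}}$, $x^{s}$, and $\gamma_{\mathcal{I}}$, I take as LP the nonnegative orthant $\{x \in \R^{r} : x \geq 0\}$, realize each $s$ by $x^{s}$ (feasible because it is nonnegative), and realize each instance by the affine function $w_{\mathcal{I}}(x) \defeq u_{\mathcal{I}} x + \gamma_{\mathcal{I}}$. Nonnegativity of $u_{\mathcal{I}}$ and $\gamma_{\mathcal{I}}$ gives \eqref{eq:nonneg-LP-nonneg} on the orthant, and the factorization identity itself gives \eqref{eq:nonneg-LP-linear}, so this is a valid LP proof of size $r = \LPrk M_{\mathcal{P}}$.

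The only real obstacle is bookkeeping the scalar $\gamma_{\mathcal{I}}$ correctly so that the additive rank-one term $u \allOne$ in the LP-rank definition matches the fact that a nonnegative constant in the linearization does not cost an inequality in the LP proof; this is exactly the point of Remark~\ref{rem:matrix-factorization}. For the SDP analogue, the same argument works after replacing Farkas's lemma by SDP strong duality (conic duality between the psd cone and itself) in the first direction, and replacing the nonnegative orthant by $\SDP^{r}$ in the second direction; the role of $\gamma_{\mathcal{I}}$ is then played by the nonnegative scalar in the SDP factorization.
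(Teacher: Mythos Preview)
Your proposal is correct and follows essentially the same approach as the paper: both directions use exactly the constructions you describe, with Farkas's lemma producing the LP factorization from an LP proof, and the nonnegative orthant \(x \geq 0\) serving as the LP proof built from an LP factorization. Your remarks on the role of \(\gamma_{\mathcal{I}}\) and on the SDP analogue via conic duality also match the paper's treatment.
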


\subsubsection{Reduction between nonnegativity problems}
\label{sec:reduct-nonnegative}

\begin{definition}[Reduction]
  \label{def:red-nonneg}
  Let \(\mathcal{P}_{1} = (\mathcal{S}_{1}, \mathfrak{I}_{1},
  \val^{\mathcal{P}_{1}})\)
  and
  \(\mathcal{P}_{2} = (\mathcal{S}_{2}, \mathfrak{I}_{2},
  \val^{\mathcal{P}_{2}})\)
  be nonnegativity problems.

  A \emph{reduction} from \(\mathcal{P}_{1}\) to \(\mathcal{P}_{2}\)
  consists of
  \begin{enumerate}
  \item
    two mappings:
    \(* \colon \mathfrak{I}_{1} \to \mathfrak{I}_{2}\)
    and
    \(* \colon \mathcal{S}_{1} \to \mathcal{S}_{2}\)
    translating instances and feasible solutions independently;
  \item
    two nonnegative \(\mathfrak{I}_{1} \times \mathcal{S}_{1}\)
    matrices \(M_{1}\), \(M_{2}\)
  \end{enumerate}
  satisfying
  \begin{equation}
    \label{eq:red-nonneg}
    \val^{\mathcal{P}_{1}}_{\mathcal{I}_{1}}(s_{1})
    =
    \val^{\mathcal{P}_{2}}_{\mathcal{I}_{1}^{*}}(s_{1}^{*})
    \cdot
    M_{1}(\mathcal{I}_{1}, s_{1})
    +
    M_{2}(\mathcal{I}_{1}, s_{1})
    .
  \end{equation}
\end{definition}
The matrices \(M_{1}\) and \(M_{2}\) encode additional arguments in the
nonnegativity proof of \(\mathcal{P}_{1}\),
besides using nonnegativity of \(\mathcal{P}_{2}\).
Therefore in applications they should have low complexity,
to provide a strong reduction.
The following theorem relates the proof complexity of problems
in a reduction.
\begin{theorem}
  \label{thm:red-nonneg}
  Let \(\mathcal{P}_{1}\) and \(\mathcal{P}_{2}\) be
  nonnegativity problems
  with a reduction from \(\mathcal{P}_{1}\) to \(\mathcal{P}_{2}\).
  Then
  \begin{align}
    \fc(\mathcal{P}_{1})
    &
    \leq
    \LPrk M_{2} +
    \LPrk M_{1}
    + \nnegrk M_{1} \cdot \fc(\mathcal{P}_{2}),
    \\
    \fcSDP(\mathcal{P}_{1})
    &
    \leq
    \SDPrk M_{2} +
    \SDPrk M_{1}
    + \psdrk M_{1} \cdot \fcSDP(\mathcal{P}_{2}),
  \end{align}
  where \(M_{1}\) and \(M_{2}\) are the matrices in the reduction
  as in Definition~\ref{def:red-nonneg}.
\begin{proof}
We prove the claim only for the LP rank,
as the proof for the SDP rank is similar.
We apply the Factorization Theorem
(Theorem~\ref{thm:factorization-nonneg}).
Let \(M_{\mathcal{P}_{1}}\) and \(M_{\mathcal{P}_{2}}\) denote the
slack matrices of \(\mathcal{P}_{1}\) and \(\mathcal{P}_{2}\),
respectively.
Then Eq.~\eqref{eq:red-nonneg} can be written as
\begin{equation}
  \label{eq:red-nonneg-complete_slack}
  M_{\mathcal{P}_{1}} =
  \left(
    F_{\mathfrak{I}}
    M_{\mathcal{P}_{2}}
    F_{\mathcal{S}}
  \right)
  \circ
  M_{1}
  +
  M_{2}
  ,
\end{equation}
where \(\circ\) denotes the Hadamard product (entrywise product),
and \(F_{\mathfrak{I}}\) and \(F_{\mathcal{S}}\) are the
\(\mathfrak{I}_{1} \times \mathfrak{I}_{2}\) and
\(\mathcal{S}_{2} \times \mathcal{S}_{1}\) matrices
encoding the two maps \(*\), respectively:
\begin{align}
  F_{\mathfrak{I}}(\mathcal{I}_{1}, \mathcal{I}_{2})
  &
  \coloneqq
  \begin{cases}
    1 & \text{if } \mathcal{I}_{2} = \mathcal{I}_{1}^{*}, \\
    0 & \text{if } \mathcal{I}_{2} \neq \mathcal{I}_{1}^{*};
  \end{cases}
  &
  F_{\mathcal{S}}(S_{2}, S_{1})
  &
  \coloneqq
  \begin{cases}
    1 & \text{if } S_{2} = S_{1}^{*}, \\
    0 & \text{if } S_{2} \neq S_{1}^{*}.
  \end{cases}
\end{align}
Let \(M_{\mathcal{P}_{2}} = \widetilde{M}_{\mathcal{P}_{2}}
+ a \allOne\)
with \(\LPrk M_{\mathcal{P}_{2}} =
\nnegrk \widetilde{M}_{\mathcal{P}_{2}}\).
This enables us to further simplify
Eq.~\eqref{eq:red-nonneg-complete_slack}:
\begin{equation}
  M_{\mathcal{P}_{1}} =
  \left(
    F_{\mathfrak{I}}
    \widetilde{M}_{\mathcal{P}_{2}}
    F_{\mathcal{S}}
  \right)
  \circ
  M_{1}
  +
  \operatorname{diag}(F_{\mathfrak{I}} a)
  \cdot
  M_{1}
  +
  M_{2}
  ,
\end{equation}
where \(\operatorname{diag}(x)\) stands for the square diagonal matrix
with the entries of \(x\) in the diagonal.
Now the claim follows from Theorem~\ref{thm:factorization-nonneg},
the well-known identities
\(\nnegrk (A \circ B) \leq \nnegrk A \cdot \nnegrk B\),
\(\nnegrk A B C \leq \nnegrk B\),
and the obvious \(\LPrk (A + B) \leq \LPrk A + \LPrk B\)
together with \(\LPrk(A B) \leq \LPrk B\).
\end{proof}
\end{theorem}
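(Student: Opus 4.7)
The plan is to leverage the Factorization Theorem (Theorem~\ref{thm:factorization-nonneg}) and convert the statement about proof complexity into one about matrix factorizations of slack matrices, then chase the inequality through standard identities for the nonnegative and LP ranks. I would prove only the LP case and note that the SDP case is analogous (replacing $\nnegrk$ by $\psdrk$ and $\LPrk$ by $\SDPrk$ throughout, using the Hadamard identity $\psdrk(A \circ B) \leq \psdrk A \cdot \psdrk B$).

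First I would rewrite Eq.~\eqref{eq:red-nonneg} as a matrix identity on the slack matrix $M_{\mathcal{P}_{1}}$. Introducing the $0/1$ encoding matrices $F_{\mathfrak{I}}$ (of size $\mathfrak{I}_1 \times \mathfrak{I}_2$) and $F_{\mathcal{S}}$ (of size $\mathcal{S}_2 \times \mathcal{S}_1$) for the two reduction maps, the product $F_{\mathfrak{I}} M_{\mathcal{P}_{2}} F_{\mathcal{S}}$ evaluated at $(\mathcal{I}_1, s_1)$ reads off $\val^{\mathcal{P}_{2}}_{\mathcal{I}_{1}^{*}}(s_{1}^{*})$, and so
\begin{equation*}
  M_{\mathcal{P}_{1}} = \bigl(F_{\mathfrak{I}} M_{\mathcal{P}_{2}} F_{\mathcal{S}}\bigr) \circ M_{1} + M_{2}.
\end{equation*}

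The technical subtlety is bounding $\LPrk$ (rather than $\nnegrk$) of the first summand, because Hadamard multiplication interacts well with nonnegative factorizations but not immediately with LP factorizations. The remedy is to split the inner factor: write $M_{\mathcal{P}_{2}} = \widetilde{M}_{\mathcal{P}_{2}} + a \allOne$ where $\nnegrk \widetilde{M}_{\mathcal{P}_{2}} = \LPrk M_{\mathcal{P}_{2}} = \fc(\mathcal{P}_{2})$. The rank-one constant-column part then gets absorbed into $M_{1}$: using that $\allOne F_{\mathcal{S}}$ is constant along columns (each column of $F_{\mathcal{S}}$ sums to $1$), one sees that $(F_{\mathfrak{I}} a \allOne F_{\mathcal{S}}) \circ M_{1} = \operatorname{diag}(F_{\mathfrak{I}} a) \cdot M_{1}$. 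This gives the cleaner decomposition
\begin{equation*}
  M_{\mathcal{P}_{1}} = \bigl(F_{\mathfrak{I}} \widetilde{M}_{\mathcal{P}_{2}} F_{\mathcal{S}}\bigr) \circ M_{1} + \operatorname{diag}(F_{\mathfrak{I}} a) \cdot M_{1} + M_{2}.
\end{equation*}

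At this point the proof reduces to routine rank-counting. Apply $\LPrk(X+Y) \leq \LPrk X + \LPrk Y$ to split the three summands, contributing $\LPrk M_{2}$ and at most $\LPrk M_{1}$ from the diagonal-scaled middle term (since left-multiplication by a nonnegative diagonal matrix does not increase LP rank). For the Hadamard term, bound $\LPrk \leq \nnegrk$ and use $\nnegrk(A \circ B) \leq \nnegrk A \cdot \nnegrk B$ together with $\nnegrk(F_{\mathfrak{I}} \widetilde{M}_{\mathcal{P}_{2}} F_{\mathcal{S}}) \leq \nnegrk \widetilde{M}_{\mathcal{P}_{2}} = \fc(\mathcal{P}_{2})$, yielding the summand $\nnegrk M_{1} \cdot \fc(\mathcal{P}_{2})$. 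The main obstacle is precisely the constant-column bookkeeping in the previous paragraph: conflating $\LPrk$ with $\nnegrk$ at the wrong point would lose a multiplicative factor of $\nnegrk M_{1}$ on an additive additive term. Once that splitting is correctly done, the inequality follows by adding the three bounds.
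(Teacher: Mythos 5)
Your proposal is correct and follows essentially the same route as the paper's proof: the identity $M_{\mathcal{P}_{1}} = (F_{\mathfrak{I}} M_{\mathcal{P}_{2}} F_{\mathcal{S}}) \circ M_{1} + M_{2}$, the splitting $M_{\mathcal{P}_{2}} = \widetilde{M}_{\mathcal{P}_{2}} + a\allOne$ with the constant part absorbed as $\operatorname{diag}(F_{\mathfrak{I}} a)\cdot M_{1}$, and the same rank identities to finish. Your explicit justification of the constant-column bookkeeping (columns of $F_{\mathcal{S}}$ summing to $1$) is a correct and welcome elaboration of a step the paper leaves implicit.
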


\subsection{LP and SDP formulations}
\label{sec:lp-sdp-formulations-1}

Here we recall the notion of linear programming and semi-definite programming
complexity of optimization problems from \cite{BPZ2015}.
The key idea to modeling approximations of an optimization problem
\(\mathcal{P} = (\mathcal{S}, \mathfrak{I}, \val)\)
is to represent the approximation gap by
two functions \(C, S \colon \mathfrak{I} \to \R\),
the \emph{completeness guarantee} and \emph{soundness guarantee},
respectively,
and the task is to differentiate problems
with \(\OPT{\mathcal{I}} \leq S(\mathcal{I})\)
and \(\OPT{\mathcal{I}} \geq C(\mathcal{I})\),
as in the algorithmic setting.

The guarantees \(C\) and \(S\) will often be
of the form \(C = \alpha g\) and \(S = \beta g\)
for some constants \(\alpha\) and \(\beta\)
and an easy-to-compute function \(g\).
Then we shall write
\(\fc(\mathcal{P}, \alpha, \beta)\)
instead of the more precise
\(\fc(\mathcal{P}, \alpha g, \beta g)\).

\begin{definition}[LP formulation of an optimization problem]
  \label{def:LP-formulation}
  Let
  \(\mathcal{P} = (\mathcal{S}, \mathfrak{I}, \val)\)
  be an optimization problem,
  and \(C, S\) be
  real-valued functions on \(\mathfrak{I}\),
  called \emph{completeness guarantee}
  and \emph{soundness guarantee}, respectively.
  If \(\mathcal{P}\) is a maximization problem,
  then let \(\mathfrak{I}^{S} \coloneqq
  \set{\mathcal{I} \in \mathfrak{I}}{\max
  \val_{\mathcal{I}} \leq S(\mathcal{I})}\)
  denote the set of instances, for which the soundness guarantee \(S\)
  is an upper bound on the maximum.
  If \(\mathcal{P}\) is a minimization problem,
  then let \(\mathfrak{I}^{S} \coloneqq
  \set{\mathcal{I} \in \mathfrak{I}}{\min \val_{\mathcal{I}}
   \geq S(\mathcal{I})}\)
  denote the set of instances,
  for which the soundness guarantee \(S\)
  is a lower bound on the minimum.

  A \emph{\((C, S)\)-approximate LP formulation} of \(\mathcal{P}\)
  consists of a linear program \(A x \leq b\)
  with \(x \in \R^{r}\) for some \(r\)
  and the following \emph{realizations}:
  \begin{description}
  \item[Feasible solutions] as vectors \(x^{s} \in \R^{r}\)
    for every \(s \in \mathcal{S}\) satisfying
  \begin{align}
    \label{eq:LP-contain}
    A x^{s} &\leq b \qquad \text{for all } s \in \mathcal{S},
  \end{align}
  i.e., the system \(Ax \leq b\) is a relaxation of
  \(\conv{x^s \mid s \in \mathcal{S}}\).
  \item[Instances] as affine functions
    \(w_{\mathcal{I}} \colon \R^{r} \to \R\)
    for all \(\mathcal{I} \in \mathfrak{I}^{S}\)
    satisfying
    \begin{align}
      \label{eq:LP-linear}
      w_{\mathcal{I}}(x^{s}) & = \val_{\mathcal{I}}(s)
            \qquad \text{for all } s \in
      \mathcal{S},
    \end{align}
    i.e., the linearization \(w_{\mathcal{I}}\) of
    \(\val_{\mathcal{I}}\)
    is required to be exact on all \(x^s\)
    with \(s \in \mathcal{S}\).
  \item[Achieving \((C,S)\) approximation guarantee]
  by requiring
  \begin{align}
    \label{eq:LP-approx}
    \max \set{w_{\mathcal{I}}(x)}{A x \leq b} &\leq C(\mathcal{I})
    \qquad \text{for all } \mathcal{I} \in \mathfrak{I}^{S},
  \end{align}
  if \(\mathcal{P}\) is a maximization problem
  (and \(\min
  \set{w_{\mathcal{I}}(x)}{A x \leq b} \geq C(\mathcal{I})\)
  if \(\mathcal{P}\) is a
  minimization problem).
  \end{description}
  The \emph{size} of the formulation is the number of inequalities
  in \(A x \leq b\).
  Finally, the
  \((C, S)\)-approximate
  \emph{LP formulation complexity} \(\fc(\mathcal{P}, C, S)\)
  of \(\mathcal{P}\) is
  the minimal size of all its LP formulations.
\end{definition}

The definition of SDP formulations is similar.
\begin{definition}[SDP formulation of an optimization problem]
  \label{def:SDP-formulation}
  As in Definition~\ref{def:LP-formulation},
  let
  \(\mathcal{P} = (\mathcal{S}, \mathfrak{I}, \val)\)
  be an optimization problem
  and \(C, S\) be
  real-valued functions on \(\mathfrak{I}\),
  the \emph{completeness guarantee} and \emph{soundness guarantee}.
  Let \(\mathfrak{I}^{S} \coloneqq
  \set{\mathcal{I} \in \mathfrak{I}}{\max \val_{\mathcal{I}}
   \leq S(\mathcal{I})}\)
  if \(\mathcal{P}\) is a maximization problem,
  and let \(\mathfrak{I}^{S} \coloneqq
  \set{\mathcal{I} \in \mathfrak{I}}{\min \val_{\mathcal{I}}
  \geq S(\mathcal{I})}\)
  if \(\mathcal{P}\) is a minimization problem.

  A \((C,S)\)-approximate \emph{SDP formulation} of
  \(\mathcal{P}\)
  consists of a linear map \(\mathcal{A} \colon \symM^{r} \to \R^{k}\)
  and a vector \(b \in \R^{k}\)
  (i.e., a semidefinite program
  \(\set{X \in \SDP^{r}}{\mathcal{A}(X) = b}\))
  together with the following \emph{realizations} of \(\mathcal{P}\):
  \begin{description}
  \item[Feasible solutions] as vectors \(X^{s} \in \SDP^{r}\)
    for all \(s \in \mathcal{S}\) satisfying
  \begin{align}
    \label{eq:SDP-contain}
    \mathcal{A}(X^{s}) = b
  \end{align}
  i.e., the SDP \(\mathcal{A}(X)  =  b, X \in \SDP^r\)
  is a relaxation of \(\conv{X^s}[s \in \mathcal{S}]\).
  \item[Instances] as affine functions
    \(w_{\mathcal{I}} \colon \symM^{r} \to \R\)
    for all \(\mathcal{I} \in \mathfrak{I}^{S}\)
    satisfying
    \begin{equation}
      \label{eq:SDP-linear}
      w_{\mathcal{I}}(X^{s}) = \val_{\mathcal{I}}(s)
            \qquad \text{for all } s \in
      \mathcal{S},
    \end{equation}
    i.e., the linearization \(w_{\mathcal{I}}\) of
    \(\val_{\mathcal{I}}\)
    is exact on the \(X^{s}\) with \(s \in \mathcal{S}\).
  \item[Achieving \((C,S)\) approximation guarantee]
  by requiring
  \begin{align}
    \label{eq:SDP-approx}
    \max \set{w_{\mathcal{I}}(X)}{\mathcal{A}(X^{s})  =  b,
    \ X^s \in \SDP^r}
    \leq C(\mathcal{I})
    \qquad \text{for all } \mathcal{I} \in \mathfrak{I}^S,
  \end{align}
  if \(\mathcal{P}\) is a maximization problem,
  and the analogous inequality
  if \(\mathcal{P}\) is a minimization problem.
  \end{description}

  The \emph{size} of the formulation is the dimension parameter \(r\).
  Now the \((C, S)\)-approximate \emph{SDP formulation complexity}
  \(\fcSDP(\mathcal{P}, C, S)\) of the
  problem \(\mathcal{P}\) is the minimal size of all its SDP formulations.
\end{definition}

\subsubsection{Slack matrix and formulation complexity}
\label{sec:slack-complexity}

The \((C, S)\)-approximate complexity of
a maximization problem
\(\mathcal{P} = (\mathcal{S}, \mathfrak{I}, \val)\)
is the complexity of proofs
of \(\val_{\mathcal{I}} \leq C(\mathcal{I})\)
for instances with \(\max \val_{\mathcal{I}} \leq S(\mathcal{I})\),
and similarly for minimization problems.
Formally, the proof complexity of the nonnegativity problem
\(\mathcal{P}_{C, S} = (\mathcal{S}, \mathfrak{I}^{S}, C - \val)\)
equals the \((C, S)\)-approximate complexity of \(\mathcal{P}\)
both in the LP and SDP world, as obvious from the definitions:
\begin{align}
  \fc(\mathcal{P}, C, S)
  &
  =
  \fc(\mathcal{P}_{C, S})
  ,
  &
  \fcSDP(\mathcal{P}, C, S)
  &
  =
  \fcSDP(\mathcal{P}_{C, S})
  .
\end{align}
Thus the theory of nonnegativity problems
from Section~\ref{sec:nonnegative-problem}
immediately applies,
which we formulate now explicitly for optimization problems.
The material here already appeared in \cite{BPZ2015}
without using nonnegativity problems and a significantly weaker reduction
mechanism.

The main technical tool for establishing lower bounds on the formulation complexity of a problem is its slack matrix
 and its factorizations
(decompositions). We start by recalling the definition of the slack
matrix for optimization problems.

\begin{definition}
  \label{def:slack-matrix}
  Let
  \(\mathcal{P} = (\mathcal{S}, \mathfrak{I}, \val)\)
  be an optimization problem with
 \emph{completeness guarantee} \(C\)
  and \emph{soundness guarantee} \(S\).
  The \emph{\((C, S)\)-approximate slack matrix}
  \(M_{\mathcal{P}, C, S}\) is the nonnegative
  \(\mathfrak{I}^{S} \times \mathcal{S}\) matrix
  with entries
  \begin{equation}
    \label{eq:slack-matrix}
    M_{\mathcal{P}, C, S}(\mathcal{I}, s) \coloneqq
    \tau \cdot (C(\mathcal{I}) - \val_{\mathcal{I}}(s))
    ,
  \end{equation}
  where \(\tau = +1\) if \(\mathcal{P}\) is a maximization problem,
  and \(\tau = -1\) if  \(\mathcal{P}\) is a minimization problem.
\end{definition}

Finally, we are ready to recall the factorization theorem,
equating LP rank and SDP rank with LP formulation complexity and SDP
formulation complexity, respectively.
The notion of LP and SDP rank is recalled in
Definition~\ref{def:matrix-factorization-nonneg}.
\begin{theorem}[Factorization theorem, \cite{BPZ2015}]
  \label{thm:factorization}
  Let
  \(\mathcal{P} = (\mathcal{S}, \mathfrak{I}, \val)\)
  be an optimization problem with
 \emph{completeness guarantee} \(C\)
  and \emph{soundness guarantee} \(S\).
  Then
  \begin{align}
    \label{eq:factorization_LP}
    \fc(\mathcal{P}, C, S)
    &
    = \LPrk M_{\mathcal{P}, C, S}
    ,
    \\
    \label{eq:factorization_SDP}
    \fcSDP(\mathcal{P}, C, S)
    &
    = \SDPrk M_{\mathcal{P}, C, S}
    ,
  \end{align}
  where \(M_{\mathcal{P}, C, S}\) is
  the \((C, S)\)-approximate slack matrix of \(\mathcal{P}\).
\end{theorem}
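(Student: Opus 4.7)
The plan is to obtain Theorem~\ref{thm:factorization} as a direct corollary of the nonnegativity version Theorem~\ref{thm:factorization-nonneg}, following the explicit identification of $(C,S)$-approximate complexity with nonnegativity-proof complexity flagged in the paragraph preceding Definition~\ref{def:slack-matrix}. All the real work has been done; what remains is a careful unwinding of the two parallel sets of definitions.

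First I would introduce the associated nonnegativity problem $\mathcal{P}_{C,S} = (\mathcal{S}, \mathfrak{I}^S, \tau(C - \val))$ with $\tau = +1$ in the maximization case and $\tau = -1$ in the minimization case, as in Definition~\ref{def:slack-matrix}. One must check that this is genuinely a nonnegativity problem, i.e.\ $\tau(C(\mathcal{I}) - \val_\mathcal{I}(s)) \geq 0$ for all $\mathcal{I} \in \mathfrak{I}^S$ and $s \in \mathcal{S}$; this is immediate from the definition of $\mathfrak{I}^S$ via the soundness guarantee together with the convention that a meaningful approximation satisfies $C \geq S$ in the maximization case and $C \leq S$ in the minimization case. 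By construction the slack matrix of $\mathcal{P}_{C,S}$ in the sense of Definition~\ref{def:nonneg-slack-matrix} coincides entry-for-entry with the $(C,S)$-approximate slack matrix $M_{\mathcal{P}, C, S}$ of Definition~\ref{def:slack-matrix}.

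Next I would verify the identities
\begin{align*}
  \fc(\mathcal{P}, C, S) &= \fc(\mathcal{P}_{C,S}),
  &
  \fcSDP(\mathcal{P}, C, S) &= \fcSDP(\mathcal{P}_{C,S})
\end{align*}
by exhibiting size-preserving bijections between $(C,S)$-approximate LP (resp.\ SDP) formulations of $\mathcal{P}$ and LP (resp.\ SDP) proofs of nonnegativity for $\mathcal{P}_{C,S}$. Concretely, starting from an LP formulation $(Ax \leq b, \{x^s\}, \{w_\mathcal{I}\})$, set $w'_\mathcal{I}(x) \defeq \tau(C(\mathcal{I}) - w_\mathcal{I}(x))$ and retain the same linear program and realizations $x^s$. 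Exactness Eq.~\eqref{eq:LP-linear} translates into Eq.~\eqref{eq:nonneg-LP-linear}, relaxation Eq.~\eqref{eq:LP-contain} is unchanged from Eq.~\eqref{eq:nonneg-LP-contain}, and the approximation guarantee Eq.~\eqref{eq:LP-approx} translates into the nonnegativity requirement Eq.~\eqref{eq:nonneg-LP-nonneg}. The inverse transformation $w_\mathcal{I}(x) \defeq C(\mathcal{I}) - \tau w'_\mathcal{I}(x)$ shows the correspondence is bijective, and neither the number of rows of $Ax \leq b$ nor (in the SDP case) the dimension $r$ changes. The SDP case is verbatim the same modulo replacing affine functions on $\R^r$ by affine functions on $\symM^r$ and the LP polytope by the spectrahedron $\{X \in \SDP^r : \mathcal{A}(X) = b\}$.

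Finally I would invoke Theorem~\ref{thm:factorization-nonneg} applied to $\mathcal{P}_{C,S}$ and chain the equalities
\begin{align*}
  \fc(\mathcal{P}, C, S) &= \fc(\mathcal{P}_{C,S}) = \LPrk M_{\mathcal{P}_{C,S}} = \LPrk M_{\mathcal{P}, C, S},
\end{align*}
and analogously $\fcSDP(\mathcal{P}, C, S) = \SDPrk M_{\mathcal{P}, C, S}$. There is no real obstacle here; the only thing demanding care is the bookkeeping of the sign $\tau$, which flips both the direction of the approximation inequality and the sign of the slack entry but preserves their equivalence, so the maximization and minimization cases are handled uniformly.
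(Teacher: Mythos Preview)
Your proposal is correct and matches the paper's approach exactly: the paper derives Theorem~\ref{thm:factorization} as an immediate specialization of Theorem~\ref{thm:factorization-nonneg} via the identification $\fc(\mathcal{P}, C, S) = \fc(\mathcal{P}_{C,S})$ and $\fcSDP(\mathcal{P}, C, S) = \fcSDP(\mathcal{P}_{C,S})$ stated just before Definition~\ref{def:slack-matrix}. Your write-up is in fact more explicit than the paper's, which simply asserts that the equality of complexities is ``obvious from the definitions'' and that the nonnegativity theory ``immediately applies.''
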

Now Theorem~\ref{thm:red-simple} follows as a special case
of Theorem~\ref{thm:red-nonneg}.

\subsubsection{Lasserre or SoS hierarchy}
\label{sec:lasserre-or-sos}

The Lasserre hierarchy,
also called the Sum-of-Squares (SoS) hierarchy,
is a series of SDP formulations of an optimization problem,
relying on a set of base functions.  The base functions are usually
chosen so that the objectives \(\val_{\mathcal{I}}\) of instances
are low-degree polynomials of the base functions.
For brevity, we recall only the optimal bound obtained by the
SDP formulation, using the notion of pseudoexpectation,
which is essentially a feasible point of the SDP. We follow the
definition of \cite[Page~3]{lee2014power}.

\begin{definition}[Lasserre/SoS hierarchy]
  \label{def:pseudo-expectation}
  \mbox{}
  \begin{description}
  \item[Pseudoexpectation]
    Let \(\{f_{1}, \dotsc, f_{\ell}\}\) be real-valued functions with
    common domain \(\mathcal{S}\).
    A \emph{pseudoexpectation functional} \(\pseudoexpectation\)
    of level \(d\)
    over \(\{f_{1}, \dotsc, f_{\ell}\}\)
    is a real-valued function with domain the vector space \(V\)
    of real-valued functions \(F\) with domain \(\mathcal{S}\),
    which are polynomials in \(f_{1}\), \dots, \(f_{\ell}\)
    of degree at most \(d\).
    A pseudoexpectation \(\pseudoexpectation\) is required to satisfy
    \begin{description}
    \item[Linearity]
      For all \(F_{1}, F_{2} \in V\)
      \begin{equation}
        \pseudoexpectation(F_{1} + F_{2}) =
        \pseudoexpectation(F_{1}) + \pseudoexpectation(F_{2})
        ,
      \end{equation}
      and for all \(r \in \R\) and \(F \in V\)
      \begin{equation}
        \pseudoexpectation(r F) = r \pseudoexpectation(F)
      \end{equation}
    \item[Positivity] \(\pseudoexpectation(F^{2}) \geq 0\)
      for all \(F \in V\) with degree at most \(d/2\)
      (so that \(F^{2} \in V\))
    \item[Normalization] \(\pseudoexpectation(1) = 1\)
      for the constant function \(1\).
    \end{description}
  \item[Lasserre or SoS value]
    Given an optimization problem
    \(\mathcal{P} = (\mathcal{S}, \mathfrak{I}, \val)\)
    and base functions \(f_{1}\), \dots, \(f_{\ell}\)
    defined on \(\mathcal{S}\),
    the \emph{degree \(d\) SoS value} or
    \emph{round \(d\) Lasserre value} of
    an instance \(\mathcal{I} \in \mathfrak{I}\) is
    \begin{equation}
      \SOS_{d}(\mathcal{I}) \coloneqq
      \max_{\pseudoexpectation
        \colon \deg \pseudoexpectation \leq 2 d}
      \pseudoexpectation(\val_{\mathcal{I}})
      .
    \end{equation}
  \end{description}
\end{definition}
Note that the base functions \(f_{i}\) might satisfy
non-trivial polynomial relations, and therefore
the vector space \(V\) need not be isomorphic
to the vector space of \emph{formal} low-degree polynomials
in the \(f_{i}\).
For example, if the \(f_{i}\) are all \(0/1\)-valued, which is a
common case, then \(f_{i}^{2}\) and \(f_{i}\) are
the same elements of \(V\).
We would also like to mention that
the degree or level \(d\) is not used consistently in the
literature, some papers use \(2d\) instead of our \(d\).
This results in a constant factor difference in the level,
which is usually not significant.

For CSPs we shall use the usual set of base functions
\(X_{x_{i} = \alpha}\),
the indicators that a variable \(x_{i}\)
is assigned the value \(\alpha\).
For graph problems, the solution set \(\mathcal{S}\)
usually consists of vertex sets or edge sets.  Therefore the common
choice of base functions are the indicators \(X_{v}\)
that a vertex or edge \(v\) lies in a solution.
This has been used for \UniqueGames in
\cite{Appr-Lasserre-UG_2010}
establishing an \(\omega(1)\) integrality gap
for an approximate Lasserre hierarchy
after a constant number of rounds.

\subsection{Base hard problems}
\label{sec:further-base-hard}

In this section we will recall the LP-hardness of the problems that
will serve as the starting point in our later reductions. We start
with the LP-hardness of the \Matching problem
with an inapproximability gap of \(1 - \varepsilon /n\):
\begin{theorem}[\cite{BP2014matchingJour},
  c.f., \cite{rothvoss2013matching}]
  \label{thm:mathRot} Let
  \(n \in \N\) and \(0 \leq \varepsilon < 1\).
  \begin{equation}
    \fc\left(
      \Matching{K_{2n}},
      \left\lfloor
        \frac{\size{V(H)}}{2}
      \right\rfloor
      +
      \frac{1 - \varepsilon}{2}
      ,
      \OPT{H}
    \right)
    = 2^{\Theta(n)},
  \end{equation}
  where \(H\) is the placeholder for the instance,
  and the constant factor in the exponent depends on \(\varepsilon\).
\end{theorem}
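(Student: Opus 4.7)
The plan is to reduce to the celebrated matching polytope lower bound of Rothvoss via the Factorization Theorem, handling the approximation gap with care.

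First, invoke Theorem~\ref{thm:factorization} to translate the statement into a claim about the LP rank of the approximate slack matrix $M$, whose rows are indexed by subgraphs $H$ of $K_{2n}$ (the instances) and whose columns are indexed by perfect matchings $S$ of $K_{2n}$ (the feasible solutions), with entries
\begin{equation*}
M(H, S) = \left\lfloor \frac{|V(H)|}{2} \right\rfloor + \frac{1-\varepsilon}{2} - |S \cap E(H)|.
\end{equation*}
Since passing to a subfamily of instances only drops rows, it suffices to lower bound the LP rank of the submatrix indexed by subgraphs $H$ with $V(H) = V(K_{2n})$, i.e.\ spanning subgraphs. For those entries simplify to $n + (1-\varepsilon)/2 - |S \cap E(H)|$.

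Next, observe the two-valued gap structure of this submatrix: if $S \subseteq E(H)$ then $|S \cap E(H)| = n$ and the entry equals $(1-\varepsilon)/2$; otherwise $|S \cap E(H)| \le n-1$ and the entry is at least $(3-\varepsilon)/2$. Thus the submatrix is, up to an affine rescaling, the $0/1$ incidence matrix of ``matching $\subseteq$ edges of $H$,'' shifted so that the small entries are bounded away from $0$ by the approximation slack. Any nonnegative rank-$1$ term in an LP factorization that contributes mass to a ``small'' entry $(H,S)$ where $S \subseteq E(H)$ contributes at least the same mass to every $(H',S')$ in its combinatorial rectangle, so the ratio of $(3-\varepsilon)/(1-\varepsilon)$ between large and small entries forces each rectangle to essentially respect the containment relation between matchings and spanning subgraphs.

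The main obstacle is therefore the combinatorial lower bound on the number of such rectangles needed, and this is exactly the content of Rothvoss's hyperplane-separation argument for the perfect matching polytope, adapted to the approximate regime in \cite{BP2014matchingJour}. Treating that argument as a black box yields $\LPrk M \ge 2^{\Omega(n)}$ with the constant in the exponent shrinking as $\varepsilon \to 1$ (since the bound on the allowed rectangle weights loosens as the gap closes). The matching upper bound $\fc \le 2^{O(n)}$ is immediate from the trivial LP whose variables are indicators of the $(2n)!/(2^n n!)$ perfect matchings, completing the proof.
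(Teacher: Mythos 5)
Your reconstruction breaks at the restriction step, and the break is fatal rather than cosmetic. After invoking Theorem~\ref{thm:factorization}, the slack entries are \(\lfloor\size{V(H)}/2\rfloor+\frac{1-\varepsilon}{2}-\size{S\cap E(H)}\), and the exponential behaviour lives precisely in the \emph{non-spanning} rows that you discard: for \(H=K_{2n}[U]\) with \(\size{U}\) odd, the floor produces the entry \((\size{\delta(U)\cap S}-\varepsilon)/2\), i.e.\ the (shifted) odd-cut slack, and it is to this submatrix that Rothvoss's hyperplane-separation argument and its approximate extension in \cite{BP2014matchingJour} apply. The spanning-subgraph submatrix you keep is provably easy: for spanning \(H\) one has \(n-\size{S\cap E(H)}=\sum_{e\in E(K_{2n})}\chi(e\notin E(H))\,\chi(e\in S)\), so that submatrix equals \(\sum_{e\in E(K_{2n})}\chi(e\notin E(H))\,\chi(e\in S)+\frac{1-\varepsilon}{2}\allOne\), a sum of \(O(n^{2})\) nonnegative rank-one matrices plus a constant matrix, hence of LP rank \(O(n^{2})\). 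No argument, black box or otherwise, can extract a \(2^{\Omega(n)}\) bound from it, so the two-valued gap you observe there is a red herring.

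The intermediate rectangle argument is also not sound: reasoning about which entries a nonnegative rank-one factor may touch, and about the ratio \((3-\varepsilon)/(1-\varepsilon)\), only bounds the rectangle covering number, which for the matching slack matrix is polynomially bounded --- this is exactly why Rothvoss had to introduce the hyperplane-separation (weighted slack) method, which measures the actual values on the odd-cut rows rather than any containment relation between matchings and subgraphs. Your upper bound is also off: the trivial LP with one variable per perfect matching has \((2n-1)!!=2^{\Theta(n\log n)}\) inequalities; the \(2^{O(n)}\) upper bound instead comes from Edmonds' description of the perfect matching polytope by the \(2^{O(n)}\) odd-set inequalities, with the objectives \(\size{S\cap E(H)}\) linearized exactly over it. Note finally that the paper itself does not prove Theorem~\ref{thm:mathRot}; it imports it from \cite{BP2014matchingJour}, where the lower bound is carried out on the odd-set rows described above, so a correct self-contained proof would have to follow that route rather than the one you set up.
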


The following integrality gap was shown in
\cite{chan2013approximate} using the \MaxCUT
 Sherali-Adams integrality
gap instances of \cite{charikar2009integrality}.

\begin{theorem}[{\cite[Theorem~3.2]{chan2013approximate}}]
\label{thm:LP-MaxCUT}
For any \(\varepsilon > 0\) there are infinitely many \(n\) such
that
\begin{align*}
\fc\left(\MaxCUT{n}, 1 - \varepsilon, \frac{1}{2} +
\frac{\varepsilon}{6}\right) \ge n^{\Omega\left(\log{n}/\log\log{n}\right)}
\end{align*}
\end{theorem}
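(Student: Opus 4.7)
The plan is to combine two results from the literature: a general reduction from Sherali--Adams integrality gaps to LP extended formulation lower bounds (the Chan--Lee--Raghavendra--Steurer framework), and an explicit Sherali--Adams integrality gap construction for \MaxCUT (Charikar--Makarychev--Makarychev). The CLRS framework shows that for boolean 2-CSP maximization problems, LP formulation complexity is lower bounded by Sherali--Adams levels: if SA at level $d$ fails to certify a given approximation ratio on some family of instances, then any LP extended formulation achieving that ratio has size at least $n^{\Omega(d)}$. Since \MaxCUT{n} is naturally a boolean 2-CSP with the $\neq$ predicate, this general framework applies directly to the uniform problem over all graphs on $[n]$.

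First I would invoke the CMM construction: for any $\varepsilon > 0$ and infinitely many $n$, there exist $n$-vertex graphs $G_n$ on which the Sherali--Adams relaxation at level $d = \Theta(\log n / \log\log n)$ has fractional value at least $1-\varepsilon$, while the integral maximum cut satisfies $\OPT(G_n) \leq 1/2 + \varepsilon/6$. In particular, each $G_n$ lies in $\mathfrak{I}^{S}$ for $S = 1/2 + \varepsilon/6$, so the LP formulation of \MaxCUT{n} with $(C,S) = (1-\varepsilon, 1/2+\varepsilon/6)$ is forced to treat this hard family. Second, I would apply the CLRS reduction to transfer the SA gap into an LP-size lower bound: any LP formulation achieving the $(1-\varepsilon, 1/2+\varepsilon/6)$-approximation guarantee must have size at least $n^{\Omega(d)} = n^{\Omega(\log n / \log\log n)}$.

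The main technical obstacle is quantitative alignment: the CLRS reduction produces LP-size lower bounds whose exponent depends on the exact SA level that witnesses the gap, and the CMM construction must simultaneously achieve completeness close to $1$, soundness matching the required $1/2+\varepsilon/6$, and SA level scaling as $\log n/\log\log n$. Both of these calibrations are explicit in the cited works, so once the CLRS machinery is checked to apply to the uniform \MaxCUT{n} problem (which follows from the 2-CSP structure and the fact that a valid LP formulation must handle every instance of $\mathfrak{I}^{S}$ simultaneously), the bound in the theorem statement follows immediately by combining the two results.
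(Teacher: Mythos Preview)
Your proposal is correct and matches the paper's treatment: the paper does not prove this theorem but simply recalls it from \cite{chan2013approximate}, noting that it is obtained by combining the general LP lower bound framework of that paper with the Sherali--Adams integrality gap instances for \MaxCUT from \cite{charikar2009integrality}. Your sketch accurately reconstructs this two-step argument (CLRS reduction plus CMM gap instances), so there is nothing further to compare.
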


We now recall the Lasserre integrality gap result for
approximating \MaxCSP{k} from \cite{bhaskara2012polynomial}.
See also \cite{BGHMT2006,alekhnovich2005towards,STT2007,%
schoenebeck2008linear,tulsiani2009csp} for related results.

\begin{theorem}[{\cite[Theorem~4.2]{bhaskara2012polynomial}}]
  \label{thm:hardness-k-CSP}
For \(q \geq 2\), \(\varepsilon, \kappa > 0\) and \(\delta \geq 3/2\)
and large enough \(n\) depending on \(\varepsilon, \kappa, \delta\) and
\(q\), for every \(k, \beta\) satisfying
\(k \le n^{1/2}\) and \(\left(6q^k\ln{q}\right)/\varepsilon^2 \le \beta \le
n^{(1-\kappa)(\delta - 1)}/\left(10^{8(\delta-1)}k^{2\delta + 0.75}\right)\)
there is a \(k\)-ary predicate \(P \colon [q]^k \rightarrow \{0,1\}\)
 and a \MaxCSP{k}{P}
 instance \(\mathcal{I}\) on alphabet \([q]\) with \(n\) variables and \(m = \beta n\)
  constraints such that
  \(\OPT{\mathcal{I}}
\le O\left(\frac{1+\varepsilon}{q^k}\right)\), but the \(\frac{n\eta}{16}\)
round Lasserre relaxation for \(\mathcal{I}\) admits a perfect solution
with parameter
 \(\eta = \left. 1 \middle/
   \left(10^8(\beta k^{2\delta + 0.75})^{\frac{1}{\delta - 1}}\right)\right.\).
In other words, \(\SOS_{\eta n/16}(\mathcal{I}) = 1.\)
\end{theorem}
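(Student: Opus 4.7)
The theorem is quoted from \cite{bhaskara2012polynomial}, so the plan is to follow the by-now standard random CSP template developed in \cite{schoenebeck2008linear,tulsiani2009csp} and refined there. The structure is threefold: design a random instance whose true optimum is close to \(1/q^{k}\); establish a random-hypergraph expansion property; use expansion to manufacture a high-level pseudoexpectation of value \(1\).

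First I would pick a \(k\)-ary predicate \(P\colon [q]^{k}\to\{0,1\}\) whose accepting set supports a \((k-1)\)-wise uniform distribution \(\mu\) on \([q]^{k}\), with \(\size{P}/q^{k}\le (1+\varepsilon)/q^{k}\); such predicates are standard (supports of shifted orthogonal arrays). The instance \(\mathcal{I}\) consists of \(m=\beta n\) constraints, each obtained by choosing \(k\) variables uniformly at random and applying \(P\) after a random literal permutation. For the upper bound on \(\OPT{\mathcal{I}}\), I would fix any assignment \(s\colon [n]\to[q]\) and note that each constraint is satisfied with probability exactly \(\size{P}/q^{k}\) independent of \(s\); Chernoff plus a union bound over the \(q^{n}\) assignments gives \(\OPT{\mathcal{I}}\le O((1+\varepsilon)/q^{k})\) with high probability, with the assumption \(\beta\ge 6q^{k}\ln q/\varepsilon^{2}\) calibrated exactly so that the Chernoff exponent beats \(n\log q\).

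For the Lasserre lower bound, a first-moment calculation over random \(k\)-uniform hypergraphs shows that w.h.p.\ the constraint hypergraph is \((s,\gamma)\)-sparse for \(s=\Theta(\eta n)\): every subset \(T\) of at most \(s\) variables contains at most \((1-\gamma)\size{T}/k\) fully internal constraints. The polynomial trade-off between \(\beta\), \(k\), and the allowed \(s\) is precisely the calculation that produces the stated parameter \(\eta\). Given expansion, I would build the pseudoexpectation \(\pseudoexpectation\) by a myopic planting procedure: for each set \(T\) of at most \(2d=\eta n/8\) variables, define a local distribution \(\mu_{T}\) on \([q]^{T}\) by processing the internal constraints one at a time and for each one drawing the \(k\)-tuple of assigned values from \(\mu\). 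Expansion guarantees each processed constraint adds at least one fresh variable, so the procedure is well-defined, and \((k-1)\)-wise uniformity of \(\mu\) makes the family \(\{\mu_{T}\}\) consistent on intersections of size at most \(k-1\). Setting \(\pseudoexpectation(F)\coloneqq \expectation(\mu_{T}){F}\) for any \(F\) supported on a size-\(2d\) set \(T\) yields a well-defined pseudoexpectation; positivity holds because each moment-matrix block is the genuine Gram matrix of \(L^{2}(\mu_{T})\), and \(\pseudoexpectation(\val_{\mathcal{I}})=1\) since every \(\mu_{T}\) is supported on satisfying assignments.

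The main obstacle, and the technical heart of \cite{bhaskara2012polynomial}, is the quantitative balancing of parameters: the expansion property must hold up to level \(\eta n/8\) \emph{simultaneously} with the Chernoff union bound on \(\OPT{\mathcal{I}}\), and the polynomial window on \(\beta\) with exponent \(1/(\delta-1)\) is precisely what threads this needle. Relaxing either bound breaks one side, and weakening the \((k-1)\)-wise uniformity assumption on \(\mu\) reduces the level at which consistency of the \(\mu_{T}\) first fails.
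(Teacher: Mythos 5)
This theorem is not proved in the paper at all: it is quoted verbatim from \cite[Theorem~4.2]{bhaskara2012polynomial} and used as a base hard problem, so the only question is whether your sketch would reconstruct the cited proof. It would not, because your first step is internally inconsistent. You ask for a predicate \(P\) whose accepting set both has size at most \(1+\varepsilon\) (this is what your Chernoff computation uses, and it is forced by the conclusion: averaging over a uniformly random assignment gives \(\OPT{\mathcal{I}} \geq \size{P^{-1}(1)}/q^{k}\), and indeed the paper's own application in Theorem~\ref{thm:independent-set_Lasserre} derives \(a = \Theta(1+\varepsilon)\) accepting assignments) and supports a \((k-1)\)-wise uniform distribution, which requires support of size at least \(q^{k-1}\). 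These cannot coexist for \(k \geq 2\). Keeping \((k-1)\)-wise uniformity destroys soundness, since then \(\OPT{\mathcal{I}} \geq 1/q\) rather than \(O((1+\varepsilon)/q^{k})\); keeping the constant-size accepting set destroys your completeness argument, because the ``myopic planting'' of local distributions breaks down: two constraints sharing a variable will generically force conflicting values on it, so without additional algebraic structure even low-level consistency, let alone a perfect solution after \(\Omega(n^{\kappa})\) rounds, fails. The construction in the cited paper takes \(P\) to be (a coset of) a linear code over \(\mathbb{Z}_{q}\) with \(O(1)\) codewords and obtains the Lasserre solution through the Schoenebeck--Tulsiani machinery for sparse linear systems; the genuinely new content there is the expansion computation at polynomial density \(\beta\), which is what produces the exponent \(\eta = 1/\bigl(10^{8}(\beta k^{2\delta+0.75})^{1/(\delta-1)}\bigr)\).

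Independently of the predicate issue, your positivity step is asserted rather than proved, and the justification given is invalid. The level-\(d\) moment matrix is indexed by pairs of (set, partial assignment) data whose unions range over sets of size up to \(2d\), and its entries are computed with respect to \emph{different} local distributions \(\mu_{T}\); the fact that each individual block is a Gram matrix of \(L^{2}(\mu_{T})\) does not imply that the full matrix is positive semidefinite. Consistent local distributions alone yield only Sherali--Adams--type feasibility. Proving PSD-ness is exactly the hard part of such results: for linear-code predicates one exhibits explicit global Gram vectors built from characters/satisfying partial assignments of the linear system, with expansion guaranteeing that these vectors are well defined up to the stated level, and the general statement that pairwise (let alone \((k-1)\)-wise) uniform supports suffice for sum-of-squares lower bounds was only established later by substantially more involved arguments. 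As written, your proposal is missing the key mechanism behind \(\SOS_{\eta n/16}(\mathcal{I}) = 1\).
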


The following LP-hardness for \UniqueGames was shown in 
\cite{lee2014lower}
(based on \cite{chan2013approximate,charikar2009integrality}):
\begin{theorem}[{\cite[Corollary~7.7]{lee2014lower}}]
  \label{thm:hardnessofug}
 For every \(q \ge 2, \delta > 0\) and \(k \ge 1\) there exists a constant \(c > 0\) such that for all \(n \ge 1\)
 \[
 \fc\left(\UniqueGames{n}{q}, 1 - \delta, \frac{1}{q} + \delta\right) \ge cn^k
 \]
In other words there is no polynomial sized linear program
that approximates \UniqueGames within a factor of \(1 / q\).
\end{theorem}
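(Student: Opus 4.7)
The plan is to follow the approach of \cite{lee2014lower}, which derives LP-hardness of \UniqueGames from a Sherali–Adams integrality gap via the general LP-inapproximability framework of \cite{chan2013approximate}. The starting point is to view \UniqueGames[\Delta]{n}{q} as a binary constraint satisfaction problem whose predicates are permutation equalities \(P_\pi(x, y) \coloneqq \chi(x = \pi(y))\) for \(\pi \in S_q\); the random-assignment value of this CSP is exactly \(1/q\), matching the soundness guarantee in the statement.

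First I would produce a Sherali–Adams integrality gap instance for this permutation CSP with completeness \(1 - \delta\) and soundness \(1/q + \delta\) that survives \(\Omega(k)\) rounds of lifting. Such an instance can be obtained by adapting the \MaxCUT Sherali–Adams construction of \cite{charikar2009integrality} — the same source underlying Theorem~\ref{thm:LP-MaxCUT} — to alphabet \([q]\) via an explicit lifting argument, or alternatively by starting from a Khot–Vishnoi-style construction and verifying its Sherali–Adams value. The soundness is controlled by the combinatorial structure of the test (hyper)graph, while the completeness is witnessed by local pseudo-distributions obtained from symmetric convex combinations of integer labelings.

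Next I would apply the main theorem of \cite{chan2013approximate}: for CSPs whose predicate family is closed under a sufficiently rich symmetry group — in our case the action of \(S_q\) on permutations — any Sherali–Adams integrality gap of depth \(k\) appears as a consistent minor of the \((C, S)\)-approximate slack matrix. By Theorem~\ref{thm:factorization}, the LP formulation complexity equals the LP rank of this slack matrix, and since nonnegative rank is monotone under taking submatrices, the \(n^{\Omega(k)}\) lower bound on the gap instance forces an \(n^{\Omega(k)}\) lower bound on \(\fc(\UniqueGames{n}{q}, 1 - \delta, 1/q + \delta)\).

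The main obstacle is the symmetrization step tailored to \UniqueGames. Unlike CSPs with a single symmetric predicate, the permutation CSP has a family of \(q!\) predicates indexed by \(S_q\), so the \cite{chan2013approximate} framework must be applied after averaging over this family while simultaneously preserving the completeness and soundness parameters and the bipartite \(\Delta\)-regular structure from Definition~\ref{def:Unique-Games}. Pushing the soundness sharply to \(1/q + \delta\) — rather than the looser \(1/q + O(\sqrt{\delta})\) that arises from a naïve reduction from \MaxCUT — requires choosing the Sherali–Adams depth \(k\) as an appropriate function of \(\delta\), and then verifying that the averaged instance still admits the required high-value pseudo-distribution.
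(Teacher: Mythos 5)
You should first note that the paper does not prove this statement at all: Theorem~\ref{thm:hardnessofug} is imported verbatim as \cite[Corollary~7.7]{lee2014lower} and is used downstream purely as a black-box base hard problem (the paper even remarks it is ``based on \cite{chan2013approximate,charikar2009integrality}''). So there is no internal proof to compare against; what you wrote is a reconstruction of the external argument, and your high-level route --- a Sherali--Adams integrality gap for Unique Games (from \cite{charikar2009integrality}, Khot--Vishnoi style) combined with the Max-CSP transfer theorem of \cite{chan2013approximate}/\cite{lee2014lower} saying that LP relaxations of size \(n^{O(k)}\) are no stronger than \(O(k)\) rounds of Sherali--Adams --- is indeed the route taken in the cited source.

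As a proof, however, your sketch has a genuine gap at the central step. The transfer theorem of \cite{chan2013approximate} has no hypothesis that the predicate family be ``closed under a sufficiently rich symmetry group'' (symmetry assumptions belong to a different line of work on symmetric relaxations), and its mechanism is not that the Sherali--Adams gap instance ``appears as a consistent minor of the \((C,S)\)-approximate slack matrix'' so that submatrix monotonicity of nonnegative rank finishes the argument. A Sherali--Adams integrality gap is not, by itself, a lower bound on the nonnegative rank of any matrix --- Sherali--Adams is just one particular small LP --- so the monotonicity shortcut proves nothing; the entire content of \cite{chan2013approximate} is the nontrivial step that takes an arbitrary nonnegative factorization of the slack matrix of a small LP and, using the Sherali--Adams pseudo-distribution (via conditioning/junta-approximation arguments), produces a feasible point of high objective value, contradicting the approximation guarantee. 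Invoking that theorem as a black box would make your plan sound, but the justification as you state it would not survive being written out. Separately, the required Sherali--Adams gap instance for \UniqueGames{n}{q} with completeness \(1-\delta\) and soundness \(1/q+\delta\) surviving the necessary number of rounds is itself a substantial construction --- this is precisely what \cite{charikar2009integrality} provides --- and ``adapting the \MaxCUT construction by lifting to alphabet \([q]\)'' is not a routine step; as written it is an unsupported claim rather than a proof ingredient.
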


\section{Reductions with distortion}
\label{sec:reductions}

We now introduce a generalization of the affine reduction
mechanism for LPs and SDPs as introduced in \cite{BPZ2015},
answering an open question posed both in \cite{BPZ2015,bazzi2015no},
leading to many new reductions that
were impossible in the affine framework.

\begin{definition}[Reduction]
  \label{def:red-simple}
  Let \(\mathcal{P}_{1} = (\mathcal{S}_{1}, \mathfrak{I}_{1}, \val)\)
  and
  \(\mathcal{P}_{2} = (\mathcal{S}_{2}, \mathfrak{I}_{2}, \val)\) be
  optimization problems with guarantees
  \(C_{1}, S_{1}\) and \(C_{2}, S_{2}\), respectively.
  Let \(\tau_{1} = +1\) if \(\mathcal{P}_{1}\) is a maximization
  problem, and \(\tau_{1} = -1\) if \(\mathcal{P}_{1}\) is a
  minimization problem.  Similarly, let \(\tau_{2} = \pm 1\)
  depending on whether \(\mathcal{P}_{2}\) is a maximization problem
  or a minimization problem.

  A \emph{reduction} from \(\mathcal{P}_{1}\)
  to \(\mathcal{P}_{2}\)
  respecting the guarantees
  consists of
  \begin{enumerate}
  \item
    two mappings:
    \(* \colon \mathfrak{I}_{1} \to \mathfrak{I}_{2}\)
    and
    \(* \colon \mathcal{S}_{1} \to \mathcal{S}_{2}\)
    translating instances and feasible solutions independently;
  \item
    two nonnegative \(\mathfrak{I}_{1} \times \mathcal{S}_{1}\)
    matrices \(M_{1}\), \(M_{2}\)
  \end{enumerate}
  subject to the conditions
  \begin{subequations}\label{eq:red-simple}
  \begin{align}
    \label{eq:red-simple-complete}
    \tau_{1}
    \left[
      C_{1}(\mathcal{I}_{1}) - \val_{\mathcal{I}_{1}}(s_{1})
    \right]
    &
    =
    \tau_{2}
    \left[
      C_{2}(\mathcal{I}_{1}^{*}) - \val_{\mathcal{I}_{1}^{*}}(s_{1}^{*})
    \right]
    M_{1}(\mathcal{I}_{1}, s_{1})
    +
    M_{2}(\mathcal{I}_{1}, s_{1})
    \tag{\theparentequation-complete}
    \\
    \label{eq:red-simple-sound}
    \tau_{2} \OPT{\mathcal{I}_{1}^{*}} &\leq \tau_{2}
    S_{2}(\mathcal{I}_{1}^{*})
    \qquad
    \text{if \(\tau_{1} \OPT{\mathcal{I}_{1}} \leq
    \tau_{1} S_{1}(\mathcal{I}_{1})\).}
    \tag{\theparentequation-sound}
  \end{align}
  \end{subequations}
\end{definition}
The matrices \(M_{1}\) and \(M_{2}\) provide extra freedom
to add additional (valid) inequalities during the reduction. In fact,
we might think of them as modeling more complex reductions. 
These matrices should have low computational overhead,
which in our framework means LP or SDP rank, as will be obvious from
the following special case of Theorem~\ref{thm:red-nonneg},
see Section~\ref{sec:lp-sdp-formulations-1} for details.

\begin{theorem}
  \label{thm:red-simple}
  Let \(\mathcal{P}_{1}\) and \(\mathcal{P}_{2}\) be optimization
  problems with a reduction from \(\mathcal{P}_{1}\)
  to \(\mathcal{P}_{2}\) respecting the
  completeness guarantees \(C_{1}\), \(C_{2}\)
  and soundness guarantees \(S_{1}\), \(S_{2}\)
  of \(\mathcal{P}_{1}\) and \(\mathcal{P}_{2}\), respectively.
  Then
  \begin{align}
    \fc(\mathcal{P}_{1}, C_{1}, S_{1})
    &
    \leq
    \LPrk M_{2} +
    \LPrk M_{1}
    + \nnegrk M_{1} \cdot \fc(\mathcal{P}_{2}, C_{2}, S_{2}),
    \\
    \fcSDP(\mathcal{P}_{1}, C_{1}, S_{1})
    &
    \leq
    \SDPrk M_{2} +
    \SDPrk M_{1}
    + \psdrk M_{1} \cdot \fcSDP(\mathcal{P}_{2}, C_{2}, S_{2}),
  \end{align}
  where \(M_{1}\) and \(M_{2}\) are the matrices in the reduction
  as in Definition~\ref{def:red-simple}.
\end{theorem}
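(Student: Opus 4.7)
The plan is to deduce this as a direct corollary of Theorem~\ref{thm:red-nonneg} via the identification of $(C,S)$-approximate complexity with the proof complexity of an associated nonnegativity problem, as recalled at the start of Section~\ref{sec:slack-complexity}. Concretely, to each optimization problem $\mathcal{P}_i$ with guarantees $C_i, S_i$ I would attach the nonnegativity problem
\[
\mathcal{P}_{i, C_i, S_i} = \bigl(\mathcal{S}_i, \mathfrak{I}_i^{S_i}, \tau_i (C_i - \val)\bigr),
\]
whose slack matrix is precisely the $(C_i, S_i)$-approximate slack matrix of $\mathcal{P}_i$ from Definition~\ref{def:slack-matrix}, and whose LP/SDP proof complexity equals the $(C_i, S_i)$-approximate LP/SDP formulation complexity of $\mathcal{P}_i$.

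Next, I would verify that the given data — the two translation maps $*$ and the matrices $M_1, M_2$ — furnish a reduction $\mathcal{P}_{1,C_1,S_1} \to \mathcal{P}_{2,C_2,S_2}$ in the sense of Definition~\ref{def:red-nonneg}. There are two items to check. First, the instance map $* \colon \mathfrak{I}_1 \to \mathfrak{I}_2$ must restrict to a map $\mathfrak{I}_1^{S_1} \to \mathfrak{I}_2^{S_2}$, so that the nonnegativity-problem reduction is well-defined on instances; this is exactly the soundness clause~\eqref{eq:red-simple-sound}. Second, the defining identity~\eqref{eq:red-nonneg} for a reduction between nonnegativity problems must hold with the same matrices $M_1, M_2$; this is exactly the completeness clause~\eqref{eq:red-simple-complete} after substituting $\val^{\mathcal{P}_{i,C_i,S_i}}_{\mathcal{I}}(s) = \tau_i(C_i(\mathcal{I}) - \val_{\mathcal{I}}(s))$ on both sides.

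Once these two verifications are in place, Theorem~\ref{thm:red-nonneg} applied to the induced reduction yields
\[
\fc(\mathcal{P}_{1, C_1, S_1}) \leq \LPrk M_2 + \LPrk M_1 + \nnegrk M_1 \cdot \fc(\mathcal{P}_{2, C_2, S_2}),
\]
and using the identifications $\fc(\mathcal{P}_i, C_i, S_i) = \fc(\mathcal{P}_{i,C_i,S_i})$ on both sides gives the stated LP bound. The SDP bound follows identically with $\LPrk$, $\nnegrk$, $\fc$ replaced by $\SDPrk$, $\psdrk$, $\fcSDP$.

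The only delicate point — and the step I would check most carefully — is the bookkeeping of the signs $\tau_1, \tau_2$ and the restriction to $\mathfrak{I}_1^{S_1}$. One must confirm that the signs appearing in~\eqref{eq:red-simple-complete} are precisely those baked into the slack matrices of the two nonnegativity problems, so that the transported identity is literally~\eqref{eq:red-nonneg} with no residual signs, and that both sides of~\eqref{eq:red-simple-complete} are genuinely nonnegative when restricted to $\mathcal{I}_1 \in \mathfrak{I}_1^{S_1}$ (which is guaranteed by membership in $\mathfrak{I}_1^{S_1}$ together with the soundness clause for $\mathcal{I}_1^{*}$). Once that alignment is verified, no further computation is required: the theorem is a direct specialization of Theorem~\ref{thm:red-nonneg}.
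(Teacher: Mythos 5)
Your proposal is correct and follows exactly the paper's route: the paper obtains Theorem~\ref{thm:red-simple} as a special case of Theorem~\ref{thm:red-nonneg} via the identification \(\fc(\mathcal{P}, C, S) = \fc(\mathcal{P}_{C,S})\) (and its SDP analogue) from Section~\ref{sec:slack-complexity}, with the soundness clause ensuring the instance map lands in \(\mathfrak{I}_2^{S_2}\) and the completeness clause turning into Eq.~\eqref{eq:red-nonneg}. Your explicit bookkeeping of the signs \(\tau_1, \tau_2\) and the restriction to \(\mathfrak{I}_1^{S_1}\) is precisely the verification the paper leaves implicit.
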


The corresponding multiplicative inapproximability factors can be
obtained as usual, by taking the ratio of soundness and completeness.

\section{Fractional optimization problems}
\label{sec:fract-opt-problems-1}

A \emph{fractional optimization problem}
is an optimization problem where the objectives
have the form of a fraction
\(\val_{\mathcal{I}} = \val^{n}_{\mathcal{I}} /
\val^{d}_{\mathcal{I}}\),
such as for \SparsestCut.
In this case the affine space of the objective functions
\(\val_{\mathcal{I}}\)
of instances is typically not low dimensional,
immediately ruling out small linear and
semidefinite formulations.
Nevertheless,
there are examples of efficient linear programming based algorithms
for such problems, however here the linear programs are used to find an
optimal value of a linear combination of \(\val^{n}_{\mathcal{I}}\)
and \(\val^{d}_{\mathcal{I}}\) (see e.g., \cite{gupta2013sparsest}). 
To be able to analyze the size of LPs or SDPs for such problems
we refine the notion of formulation complexity from \cite{BPZ2015}
to incorporate these types of linear programs,
which reduces to the original definition
with the choice of
\(\val^{n}_{\mathcal{I}} = \val_{\mathcal{I}}\) and
\(\val^{d}_{\mathcal{I}} = 1\).

We now provide the formal definitions of linear programming and
semidefinite formulations for fractional optimization problems.
The idea is again that the complexity is essentially the proof
complexity of \(\val_{\mathcal{I}} \leq C(\mathcal{I})\)
for instances with \(\val_{\mathcal{I}} \leq S(\mathcal{I})\).
Formally, given a fractional optimization problem
\(\mathcal{P} = (\mathcal{S}, \mathfrak{I}, \val)\)
with guarantees \(C, S\),
we study the nonnegativity problem
\(\mathcal{P}_{C, S} = (\mathcal{S}, \mathfrak{I}^{S} \times \{0, 1\},
\val^{*})\)
with \(\val^{*}_{(\mathcal{I}, 0)} =
C(\mathcal{I}) \val^{d}_{\mathcal{I}} - \val^{n}_{\mathcal{I}}\)
(encoding \(\val_{\mathcal{I}} \leq C(\mathcal{I})\))
and \(\val^{*}_{(\mathcal{I}, 1)} = \val^{d}_{\mathcal{I}}\).
The addition of \(\val^{d}\)
to the objective functions is for the technical reason to ensure that
the objectives span the same affine space as the
\(\val^{n}_{\mathcal{I}}\)
and \(\val^{d}_{\mathcal{I}}\),
i.e., to capture the affineness of these functions.  This is not
expected to significantly affect the complexity of the resulting
problem, as the \(\val^{d}_{\mathcal{I}}\)
in interesting applications are usually a positive linear combination
of a small number of nonnegative functions.

As a special case of
Section~\ref{sec:nonnegative-problem} we obtain the following setup
for fractional optimization problems.
Note that when \(\mathcal{P}\) is a fractional optimization
problem with \(\val^d = 1\), then \(\mathcal{P}\) is an optimization
problem and
Definitions~\ref{def:LP-formulation-fractional}
and~\ref{def:SDP-formulation-fractional}
are equivalent to Definitions~\ref{def:LP-formulation}
and~\ref{def:SDP-formulation},
as we will see now.

\begin{definition}[LP formulation of
  a fractional optimization problem]
\label{def:LP-formulation-fractional}
Let \(\mathcal{P} = (\mathcal{S}, \mathfrak{I}, \val)\)
be a fractional optimization problem
and let \(C, S\) be
two real valued functions on \(\mathfrak{I}\) called
\emph{completeness guarantee} and \emph{soundness guarantee}
respectively. Let
\(\mathfrak{I}^S \defeq \{\mathcal{I} \in \mathfrak{I} \mid
\max \val_{\mathcal{I}} \le S(\mathcal{I})\}\) when \(\mathcal{P}\)
is a maximization problem and \(\mathfrak{I}^S \defeq
\{\mathcal{I} \in \mathfrak{I} \mid \min \val_{\mathcal{I}} \ge
S(\mathcal{I})\}\) if \(\mathcal{P}\) is a minimization problem.

A \emph{\((C, S)\)-approximate LP formulation} for the problem
\(\mathcal{P}\) consists of a linear program \(Ax \le b\) with
\(x \in \R^r\) for some \(r\) and
the following realizations:

\begin{description}
\item[Feasible solutions] as vectors \(x^s \in
\R^r\) for every \(s \in \mathcal{S}\) satisfying
\begin{align*}
Ax^s \le b \qquad \text{ for all } s \in \mathcal{S},
\end{align*}
i.e., \(Ax \le b\) is a relaxation of \(\conv{x^s \mid s \in
\mathcal{S}}\).

\item[Instances] as a pair of affine functions
\(w^n_{\mathcal{I}}, w^d_{\mathcal{I}} \colon \R^r \rightarrow
\R\) for all \(\mathcal{I} \in \mathfrak{I}^S\)
 satisfying
\begin{align*}
w^n_{\mathcal{I}}(x^s) = \val^n_{\mathcal{I}}(s)\\
w^d_{\mathcal{I}}(x^s) = \val^d_{\mathcal{I}}(s)
\end{align*}
for every \(s\in \mathcal{S}\). In other words the linearizations
\(w^n_{\mathcal{I}}, w^d_{\mathcal{I}}\) are required to be
\emph{exact} on all \(x^s\) for \(s \in \mathcal{S}\).

\item[Achieving \((C,S)\) approximation guarantee] requiring the following for
every \(\mathcal{I} \in \mathfrak{I}^S\)
\begin{align*}
Ax \le b \Rightarrow \begin{cases} w^d_{\mathcal{I}}(x) \ge 0\\
		w^n_{\mathcal{I}}(x) \le C(\mathcal{I})w^d_{\mathcal{I}}(x)
		\end{cases}
\end{align*}
if \(\mathcal{P}\) is a maximization problem and
\begin{align*}
Ax \le b \Rightarrow \begin{cases} w^d_{\mathcal{I}}(x) \ge 0\\
		w^n_{\mathcal{I}}(x) \ge C(\mathcal{I})w^d_{\mathcal{I}}(x)
		\end{cases}
\end{align*}
if \(\mathcal{P}\) is a minimization problem. In other words
we can derive the nonnegativity of \(w^d_{\mathcal{I}}\)
and the approximation guarantee \(C(\mathcal{I})\) from the set of
inequalities in \(Ax \le b\).

\end{description}
The \emph{size} of the formulation is the number of inequalities
in \(Ax \le b\). Finally, the \((C, S)\)-approximate
\emph{LP formulation complexity} \(\fc{(\mathcal{P}, C, S)}\) of
\(\mathcal{P}\) is the minimal size of all its LP formulations.
\end{definition}

SDP formulations for fractional optimization problems are defined
similarly.

\begin{definition}[SDP formulation of fractional optimization problem]
\label{def:SDP-formulation-fractional}
Let \(\mathcal{P} = (\mathcal{S}, \mathfrak{I}, \val)\) be a fractional
optimization problem and let \(C, S \colon \mathfrak{I} \rightarrow
\R_{\ge 0}\) be the \emph{completeness guarantee} and the
\emph{soundness guarantee} respectively.
Let
\(\mathfrak{I}^S \defeq \{\mathcal{I} \in \mathfrak{I} \mid
\max \val_{\mathcal{I}} \le S(\mathcal{I})\}\) when \(\mathcal{P}\)
is a maximization problem and \(\mathfrak{I}^S \defeq
\{\mathcal{I} \in \mathfrak{I} \mid \min \val_{\mathcal{I}} \ge
S(\mathcal{I})\}\) if \(\mathcal{P}\) is a minimization problem.

A \((C, S)\)-approximate SDP formulation of \(\mathcal{P}\) consists
of a linear map \(\mathcal{A} \colon \symM^r \rightarrow \R^k\)
together
with a vector \(b \in \R^k\) (i.e., a semidefinite program
\(\{X \in \symM^r_+ \mid \mathcal{A}(X) = b\}\)) and the following
realizations of \(\mathcal{P}\):
\begin{description}
\item[Feasible solutions] as vectors \(X^s \in \symM^r_+\)
for every \(s \in \mathcal{S}\) satisfying
\begin{align*}
\mathcal{A}(X^s) = b \qquad \text{ for every } s \in \mathcal{S},
\end{align*}
i.e. \(\mathcal{A}(X) = b, X \in \SDP^r\) is a relaxation of
\(\conv{X^s \mid s \in \mathcal{S}}\).

\item[Instances] as a pair of affine functions
\(w^n_{\mathcal{I}}, w^d_{\mathcal{I}} \colon \symM^r \rightarrow
\R_{\ge 0}\) for every \(\mathcal{I} \in \mathfrak{I}^S\)
satisfying
\begin{align*}
w^n_{\mathcal{I}}(X^s) = \val^n_{\mathcal{I}}(s)\\
w^d_{\mathcal{I}}(X^s) = \val^d_{\mathcal{I}}(s)
\end{align*}
for every \(s \in \mathcal{S}\). In other words the linearizations
\(w^n_{\mathcal{I}}, w^d_{\mathcal{I}}\) are required to be
\emph{exact} on all \(X^s\) for \(s \in \mathcal{S}\).

\item[Achieving \((C,S)\) approximation guarantee] requiring the following for
every \(\mathcal{I} \in \mathfrak{I}^S\)
\begin{align*}
\mathcal{A}(X) = b \Rightarrow \begin{cases} w^d_{\mathcal{I}}(X)
\ge 0 \\
w^n_{\mathcal{I}}(X) \le C(\mathcal{I})w^d_{\mathcal{I}}(X)
\end{cases}
\end{align*}
if \(\mathcal{P}\) is a maximization problem and
\begin{align*}
\mathcal{A}(X) = b \Rightarrow \begin{cases} w^d_{\mathcal{I}}(X)
\ge 0 \\
w^n_{\mathcal{I}}(X) \le C(\mathcal{I})w^d_{\mathcal{I}}(X)
\end{cases}
\end{align*}
if \(\mathcal{P}\) is a minimization problem.
\end{description}
The \emph{size} of the formulation is given by the dimension \(r\).
The \((C, S)\)-approximate \emph{SDP formulation complexity}
\(\fcSDP(\mathcal{P}, C, S)\) of the problem \(\mathcal{P}\)
is the minimal size of all its SDP formulations.
\end{definition}

The slack matrix for fractional problems plays the same role
as for non-fractional problems, with the twist that we factorize
the denominator and numerator separately. This allows us to overcome
the high dimensionality of the space spanned by the actual
ratios. 

\begin{definition}\label{def:slack-fractional}
Let \(\mathcal{P} = (\mathcal{S}, \mathfrak{I}, \val)\) be
a fractional optimization problem with \emph{completeness guarantee}
\(C\) and \emph{soundness guarantee} \(S\). The
\emph{\((C, S)\)-approximate slack matrix} \(M_{\mathcal{P}, C, S}\)
is the nonnegative \(2\mathfrak{I}^{S} \times \mathcal{S}\) matrix
of the form
\begin{align*}
  M_{\mathcal{P}, C, S} = \begin{bmatrix}
M^{(d)}_{\mathcal{P}, C, S} \\
M^{(n)}_{\mathcal{P}, C, S}
\end{bmatrix}
\end{align*}
where \(M^{(d)}_{\mathcal{P}, C, S}, M^{(n)}_{\mathcal{P}, C, S}\) are
nonnegative \(\mathfrak{I}^S\times \mathcal{S}\) matrices with
entries
\begin{align*}
M^{(d)}_{\mathcal{P}, C, S}(\mathcal{I}, s) &\defeq \val^d_{\mathcal{I}}(s)\\
M^{(n)}_{\mathcal{P}, C, S}(\mathcal{I}, s) &\defeq
\tau\left(C(\mathcal{I})\val^d_{\mathcal{I}}(s) -
\val^n_{\mathcal{I}}(s)\right)
\end{align*}
where \(\tau = +1\) if \(\mathcal{P}\) is a maximization problem
and \(\tau = -1\) if \(\mathcal{P}\) is a minimization problem.
\end{definition}

We are now ready to obtain the factorization theorem for the
class of fractional optimization problems,
as a special case of Theorem~\ref{thm:factorization-nonneg}:

\begin{theorem}[Factorization theorem for fractional optimization
problems]\label{thm:factorization-fractional}
Let \(\mathcal{P} = (\mathcal{S}, \mathfrak{I}, \val)\) be a
fractional optimization problem with \emph{completeness
guarantee} \(C\) and \emph{soundness guarantee} \(S\). Then
\begin{align*}
\fc(\mathcal{P}, C, S) &= \LPrk M_{(\mathcal{P}, C, S)},\\
\fcSDP(\mathcal{P}, C, S) &= \SDPrk M_{(\mathcal{P}, C, S)}
\end{align*}
where \(M_{(\mathcal{P}, C, S)}\) is the \((C, S)\)-approximate
slack matrix of \(\mathcal{P}\).
\end{theorem}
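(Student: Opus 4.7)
The plan is to derive Theorem~\ref{thm:factorization-fractional} as a direct corollary of Theorem~\ref{thm:factorization-nonneg} applied to the nonnegativity problem $\mathcal{P}_{C,S} = (\mathcal{S}, \mathfrak{I}^{S} \times \{0,1\}, \val^{*})$ introduced in the paragraph preceding Definition~\ref{def:LP-formulation-fractional}. The whole argument is bookkeeping: identify the slack matrix of $\mathcal{P}_{C,S}$ with $M_{\mathcal{P},C,S}$, and show that an LP (resp.\ SDP) formulation of $\mathcal{P}$ is the same object as an LP (resp.\ SDP) proof of nonnegativity of $\mathcal{P}_{C,S}$, with identical size.

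First I would unpack the slack matrix: by Definition~\ref{def:nonneg-slack-matrix}, the $(\mathcal{I},j) \in \mathfrak{I}^{S} \times \{0,1\}$ rows of $M_{\mathcal{P}_{C,S}}$ are $\val^{*}_{(\mathcal{I},0)} = \tau (C(\mathcal{I})\val^{d}_{\mathcal{I}} - \val^{n}_{\mathcal{I}})$ and $\val^{*}_{(\mathcal{I},1)} = \val^{d}_{\mathcal{I}}$, which is exactly the stack $\begin{bmatrix} M^{(n)}_{\mathcal{P},C,S} \\ M^{(d)}_{\mathcal{P},C,S}\end{bmatrix}$ of Definition~\ref{def:slack-fractional}, up to reindexing the rows. (The sign $\tau$ is what guarantees nonnegativity on $\mathfrak{I}^{S}$, hence $\mathcal{P}_{C,S}$ is a legitimate nonnegativity problem.)

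Next I would verify the equivalence of the two notions of size. Given a $(C,S)$-approximate LP formulation of $\mathcal{P}$ of size $r$, with linear system $Ax \leq b$, realizations $x^{s}$, and affine pairs $w^{n}_{\mathcal{I}}, w^{d}_{\mathcal{I}}$, define affine functions
\begin{equation*}
w_{(\mathcal{I},1)}(x) \defeq w^{d}_{\mathcal{I}}(x), \qquad
w_{(\mathcal{I},0)}(x) \defeq \tau\bigl(C(\mathcal{I}) w^{d}_{\mathcal{I}}(x) - w^{n}_{\mathcal{I}}(x)\bigr).
\end{equation*}
On each $x^{s}$ they evaluate to $\val^{d}_{\mathcal{I}}(s)$ and $\tau(C(\mathcal{I})\val^{d}_{\mathcal{I}}(s) - \val^{n}_{\mathcal{I}}(s))$, matching the required $\val^{*}_{(\mathcal{I},j)}(s)$, and both are nonnegative on the feasible region of $Ax \leq b$ by the two implications demanded in Definition~\ref{def:LP-formulation-fractional}. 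Thus we have an LP proof of $\mathcal{P}_{C,S}$ of size $r$. Conversely, given an LP proof of $\mathcal{P}_{C,S}$ of size $r$ with affine $w_{(\mathcal{I},0)}, w_{(\mathcal{I},1)}$, invert the substitution by setting $w^{d}_{\mathcal{I}} \defeq w_{(\mathcal{I},1)}$ and $w^{n}_{\mathcal{I}} \defeq C(\mathcal{I}) w^{d}_{\mathcal{I}} - \tau w_{(\mathcal{I},0)}$; a direct check shows this is a $(C,S)$-approximate LP formulation of $\mathcal{P}$ of the same size. Hence $\fc(\mathcal{P}, C, S) = \fc(\mathcal{P}_{C,S})$, and Theorem~\ref{thm:factorization-nonneg} gives $\fc(\mathcal{P}_{C,S}) = \LPrk M_{\mathcal{P}_{C,S}} = \LPrk M_{\mathcal{P},C,S}$.

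The SDP statement is entirely analogous: replace $Ax \leq b$ by $\mathcal{A}(X) = b$ with $X \in \SDP^{r}$, $x^{s}$ by $X^{s}$, and invoke the SDP half of Theorem~\ref{thm:factorization-nonneg}. The only subtle point, and the main place to be careful, is the sign convention $\tau$ for maximization versus minimization problems: one must check that the two cases in Definition~\ref{def:LP-formulation-fractional} (and its SDP analogue) both collapse, after multiplication by $\tau$, to the single nonnegativity requirement $w_{(\mathcal{I},0)} \geq 0$. Beyond this, no new ideas are required beyond those already present in the proof of Theorem~\ref{thm:factorization-nonneg}.
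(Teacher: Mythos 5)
Your proposal is correct and follows exactly the route the paper intends: Theorem~\ref{thm:factorization-fractional} is stated there as a special case of Theorem~\ref{thm:factorization-nonneg} via the auxiliary nonnegativity problem $\mathcal{P}_{C,S}$ on $\mathfrak{I}^{S}\times\{0,1\}$, and your bookkeeping (identifying its slack matrix with the stacked matrix of Definition~\ref{def:slack-fractional} and translating formulations into nonnegativity proofs of equal size, with the $\tau$ sign handling the max/min cases) is precisely the omitted verification. Nothing further is needed.
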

Now Theorem~\ref{thm:red-fraction} arises as a special case
of Theorem~\ref{thm:red-nonneg}.

\subsection{Reduction between fractional problems}
\label{sec:reduct-fractional}

Reductions for fractional optimization problems are completely
analogous
to the non-fractional case:
\begin{definition}[Reduction]
  \label{def:red-fraction}
  Let \(\mathcal{P}_{1} = (\mathcal{S}_{1}, \mathfrak{I}_{1}, \val)\)
  and
  \(\mathcal{P}_{2} = (\mathcal{S}_{2}, \mathfrak{I}_{2}, \val)\) be
  fractional
  optimization problems with guarantees
  \(C_{1}, S_{1}\) and \(C_{2}, S_{2}\), respectively.
  Let \(\tau_{1} = +1\) if \(\mathcal{P}_{1}\) is a maximization
  problem, and \(\tau_{1} = -1\) if \(\mathcal{P}_{1}\) is a
  minimization problem.  Similarly, let \(\tau_{2} = \pm 1\)
  depending on whether \(\mathcal{P}_{2}\) is a maximization problem
  or a minimization problem.

  A \emph{reduction} from \(\mathcal{P}_{1}\)
  to \(\mathcal{P}_{2}\)
  respecting the guarantees
  consists of
  \begin{enumerate}
  \item
    two mappings:
    \(* \colon \mathfrak{I}_{1} \to \mathfrak{I}_{2}\)
    and
    \(* \colon \mathcal{S}_{1} \to \mathcal{S}_{2}\)
    translating instances and feasible solutions independently;
  \item
    four nonnegative \(\mathfrak{I}_{1} \times \mathcal{S}_{1}\)
    matrices \(M^{(n)}_{1}\), \(M^{(d)}_{1}\),
    \(M_{2}^{(n)}\), \(M^{(d)}_{2}\)
  \end{enumerate}
  subject to the conditions
  \begin{subequations}\label{eq:red-fraction}
  \begin{align}
    \tau_{1}
    \left[
      C_{1}(\mathcal{I}_{1}) \val^{d}_{\mathcal{I}_{1}}(s_{1})
      - \val^{n}_{\mathcal{I}_{1}}(s_{1})
    \right]
    \label{eq:red-fraction-complete}
    &
    =
    \tau_{2}
    \left[
      C_{2}(\mathcal{I}_{1}^{*}) \val^{d}_{\mathcal{I}^*_{1}}(s^*_{1})
      - \val^{n}_{\mathcal{I}_{1}^{*}}(s_{1}^{*})
    \right]
    M_{1}^{(n)}(\mathcal{I}_{1}, s_{1})
    +
    M_{2}^{(n)}(\mathcal{I}_{1}, s_{1})
    \tag{\theparentequation-complete}
    \\
    \val^{d}_{\mathcal{I}_{1}}(s_{1})
    &
    =
    \val^{d}_{\mathcal{I}_{1}^{*}}(s_{1}^{*})
    \cdot
    M^{(d)}_{1}(\mathcal{I}_{1}, s_{1})
    +
    M^{(d)}_{2}(\mathcal{I}_{1}, s_{1})
    \tag{\theparentequation-denominator}
    \\
    \label{eq:red-fraction-sound}
    \tau_{2} \OPT{\mathcal{I}_{1}^{*}} &\leq \tau_{2}
    S_{2}(\mathcal{I}_{1}^{*})
    \qquad
    \text{if \(\tau_{1} \OPT{\mathcal{I}_{1}} \leq
    \tau_{1} S_{1}(\mathcal{I}_{1})\).}
    \tag{\theparentequation-sound}
  \end{align}
  \end{subequations}
\end{definition}
As the \(\val^{d}\) are supposed to have a small proof,
the matrices \(M^{(d)}_{1}\) and \(M^{(d)}_{2}\)
are not supposed to significantly influence the strength of
the reduction even with the trivial choice \(M^{(d)}_{1} = 0\) and
\(M^{(d)}_{2}(\mathcal{I}_{1}, s_{1}) = \val^{d}_{\mathcal{I}_{1}}(s_{1})\).
However, as in the non-fractional case,
the complexity of \(M^{(n)}_{1}\) and \(M^{(n)}_{2}\)
could have a major influence on the strength of the reduction.
The reduction theorem is a special case of
Theorem~\ref{thm:red-nonneg},
see Section~\ref{sec:fract-opt-problems-1}:
\begin{theorem}
  \label{thm:red-fraction}
  Let \(\mathcal{P}_{1}\) and \(\mathcal{P}_{2}\) be optimization
  problems with a reduction from \(\mathcal{P}_{1}\)
  to \(\mathcal{P}_{2}\)
  Then
  \begin{align}
    \fc(\mathcal{P}_{1}, C_{1}, S_{1})
    &
    \leq
    \LPrk
    \begin{bmatrix}
      M^{(n)}_{2}
      \\
      M^{(d)}_{2}
    \end{bmatrix}
    +
    \LPrk
    \begin{bmatrix}
      M^{(n)}_{1}
      \\
      M^{(d)}_{1}
    \end{bmatrix}
    + \nnegrk
    \begin{bmatrix}
      M^{(n)}_{1}
      \\
      M^{(d)}_{1}
    \end{bmatrix}
    \cdot
    \fc(\mathcal{P}_{2}, C_{2}, S_{2}),
    \\
    \fcSDP(\mathcal{P}_{1}, C_{1}, S_{1})
    &
    \leq
    \SDPrk
    \begin{bmatrix}
      M^{(n)}_{2}
      \\
      M^{(d)}_{2}
    \end{bmatrix}
    +
    \SDPrk
    \begin{bmatrix}
      M^{(n)}_{1}
      \\
      M^{(d)}_{1}
    \end{bmatrix}
    + \psdrk
    \begin{bmatrix}
      M^{(n)}_{1}
      \\
      M^{(d)}_{1}
    \end{bmatrix}
    \cdot
    \fcSDP(\mathcal{P}_{2}, C_{2}, S_{2}),
  \end{align}
  where \(M^{(n)}_{1}\), \(M^{(d)}_{1}\), \(M^{(n)}_{2}\),
  and \(M^{(d)}_{2}\) are the matrices in the reduction
  as in Definition~\ref{def:red-fraction}.
\end{theorem}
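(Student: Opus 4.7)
The plan is to derive this result as a direct application of Theorem~\ref{thm:red-nonneg} after recognizing that a fractional reduction is, essentially by construction, a nonnegativity reduction between the associated nonnegativity problems $(\mathcal{P}_1)_{C_1, S_1}$ and $(\mathcal{P}_2)_{C_2, S_2}$ introduced at the start of Section~\ref{sec:fract-opt-problems-1}. Recall that $\fc(\mathcal{P}_i, C_i, S_i) = \fc((\mathcal{P}_i)_{C_i, S_i})$ and the analogous identity holds for $\fcSDP$, so it suffices to bound the proof complexity of $(\mathcal{P}_1)_{C_1, S_1}$.

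Concretely, I would define the nonnegativity reduction as follows. The instance map $* \colon \mathfrak{I}_1^{S_1} \times \{0,1\} \to \mathfrak{I}_2^{S_2} \times \{0,1\}$ is taken to preserve the $\{0,1\}$-coordinate, sending $(\mathcal{I}_1, b) \mapsto (\mathcal{I}_1^{*}, b)$; the soundness condition \eqref{eq:red-fraction-sound} guarantees that $\mathcal{I}_1^{*}$ indeed lies in $\mathfrak{I}_2^{S_2}$ whenever $\mathcal{I}_1 \in \mathfrak{I}_1^{S_1}$. The solution map is unchanged. For the reduction matrices, I would stack the numerator and denominator matrices into
\begin{equation*}
  M_1 \defeq \begin{bmatrix} M_1^{(n)} \\ M_1^{(d)} \end{bmatrix}, \qquad
  M_2 \defeq \begin{bmatrix} M_2^{(n)} \\ M_2^{(d)} \end{bmatrix},
\end{equation*}
where the top half is indexed by $(\mathcal{I}_1, 0)$ and the bottom half by $(\mathcal{I}_1, 1)$.

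The verification is then purely mechanical: equation \eqref{eq:red-nonneg} restricted to rows of the form $(\mathcal{I}_1, 0)$ is exactly \eqref{eq:red-fraction-complete} (with the sign conventions $\tau_1, \tau_2$ absorbed into the definition of $\val^{*}_{(\cdot, 0)}$), and restricted to rows of the form $(\mathcal{I}_1, 1)$ is exactly \eqref{eq:red-fraction-denominator}. Hence the conditions of Definition~\ref{def:red-nonneg} are satisfied. Applying Theorem~\ref{thm:red-nonneg} to this stacked reduction yields
\begin{equation*}
  \fc((\mathcal{P}_1)_{C_1, S_1}) \leq \LPrk M_2 + \LPrk M_1 + \nnegrk M_1 \cdot \fc((\mathcal{P}_2)_{C_2, S_2}),
\end{equation*}
and similarly for the SDP case, which on translating back through the identity $\fc(\mathcal{P}_i, C_i, S_i) = \fc((\mathcal{P}_i)_{C_i, S_i})$ gives the claimed bounds.

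There is no substantial obstacle: the entire content of the theorem has been front-loaded into the definitions and into Theorem~\ref{thm:red-nonneg}. The only point requiring a bit of care is the bookkeeping that ties the $\{0,1\}$-coordinate to the split between numerator and denominator blocks, and the check that the soundness implication transfers correctly so that $*$ is a well-defined map on the $\mathfrak{I}^{S}$ sets. Once these are spelled out, both inequalities fall out immediately from Theorem~\ref{thm:red-nonneg} applied to the stacked matrices.
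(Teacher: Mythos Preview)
Your proposal is correct and matches the paper's own approach exactly: the paper simply states that Theorem~\ref{thm:red-fraction} ``arises as a special case of Theorem~\ref{thm:red-nonneg}'' without spelling out the stacking, so you have in fact supplied more detail than the paper itself provides. The bookkeeping you describe (preserving the \(\{0,1\}\)-coordinate under the instance map, stacking \(M_i^{(n)}\) over \(M_i^{(d)}\), and invoking \eqref{eq:red-fraction-sound} to ensure the instance map lands in \(\mathfrak{I}_2^{S_2}\)) is precisely the intended unpacking.
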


\section{A simple example: \Matching over \(3\)-regular graphs has no small LPs}
\label{sec:matching-over-3}

We now show that the \Matching problem even over
\(3\)-regular
graphs does not admit a small LP formulation.  This has been an open
question of various researchers,
given that the \Matching problem admits polynomial-size LPs
for many classes of sparse graphs, like bounded treewidth,
planar (and bounded genus) graphs
\cite{barahona1993cuts,gerards1991compact,Kolman15}.
We also show that for graphs of bounded degree \(3\), the \Matching
problem does not admit fully-polynomial size relaxation schemes,
the linear programming equivalent of FPTAS,
see
\cite{BP2014matching,BP2014matchingJour}
for details on these schemes.
\begin{theorem} \label{thm:3degLPhard} Let \(n \in
  \N\) and \(0 \leq \varepsilon < 1\).
  There exists a \(3\)-regular graph \(D_{2n}\) with \(2n (2n-1)\)
  vertices, so that 
  \begin{equation}
    \fc\left(
      \Matching{D_{2n}},
      \left\lfloor
        \frac{\size{V(H)}}{2}
      \right\rfloor
      +
      \frac{1 - \varepsilon}{2}
      ,
      \OPT{H}
    \right)
    = 2^{\Omega(\sqrt{\size{V(D_{2n})}})}
    ,
  \end{equation}
  where \(H\) is the placeholder for an instance,
  and the constant factor in the exponent depends on \(\varepsilon\).
  In particular, \Matching{D_{2n}} is LP-hard
  with an inapproximability factor of \(1 - \varepsilon / \size{V(D_{2n})}\).
\begin{proof}
As usual,
the inapproximability factor simply arises as the smallest factor
\(\left. \OPT{H} \middle/ \left( \lfloor \size{V(H)} / 2 \rfloor
    + (1 - \varepsilon)/2 \right) \right.\)
of the soundness and completeness guarantees.

  The proof is a simple application of the reduction framework. In
  fact, it suffices to use the affine framework of \cite{BPZ2015}. We
  will reduce from the perfect matching problem \(\Matching{K_{2n}}\)
  as given in Definition~\ref{prob:pm}.

We first construct our target graph \(D_{2n}\) as follows,
see Figure~\ref{fig:3-regular-matching}:
\begin{enumerate}
\item\label{item:1} For every vertex \(v\) of \(K_{2n}\)
  we consider a cycle \(C^{v}\) of length \(2n-1\).
  We denote the vertices of \(C^{v}\) by \([v,u]\),
  where \(v, u \in V\) and \(v \neq u\).
\item\label{item:2} The graph \(D_{2n}\) is the disjoint union of the
  \(C^{v}\)
  for \(v \in V\)
  together with the following additional edges:
  an edge \(([v,u],[u,v])\) for every \((u,v) \in E\).
\end{enumerate}
Thus \(D_{2n}\) has a total of \(2n (2n-1)\) vertices.
\begin{figure}[htb!]
  \centering
  \includegraphics{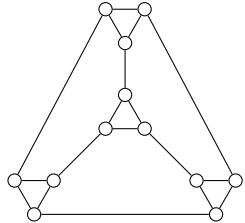}
  \caption{\label{fig:3-regular-matching}%
    The graph \(D_{2n}\) for \(n=2\)
    in the reduction to \(3\)-regular Matching.}
\end{figure}
This completes the definition of the graph \(D_{2n}\),
which is obviously \(3\)-regular.
(There is some ambiguity regarding the order of vertices in
the cycles \(C^{v}\), but this does not affect the argument below.)
Now we define the reduction from \(\Matching{K_{2n}}\)
to \(\Matching{D_{2n}}\).

We first map the instances.
Let \(H\) be a subgraph of \(K_{2n}\).
Its image \(H^*\) under the reduction
is the union of the \(C^{v}\) for \(v \in H\)
together with the edges \(([u,v], [v,u])\)
for \(\{u, v\} \in E(H)\).

Now let \(M\) be a perfect matching in \(K_{2n}\). We define \(M^*\)
by naturally extending it to a perfect matching in
\(D_{2n}\).
For every edge \(e = \{u, v\} \in M\) in the matching,
the edges \(([u,v],[v,u]) \in D_{2n}\)
form a matching containing exactly one vertex from every cycle
\(C^{v}\).
We choose \(M\) to be the unique extension of this matching
to a perfect matching by adding edges from the cycles \(C^{v}\).

We obviously have the following relationship between
the objective values:
\begin{equation}
 \begin{split}
  \val^{D_{2n}}_{H^{*}}(M^{*})
  &
  =
  \size{M^{*} \cap E(H^{*})}
  =
  \size{V(H)} \cdot (n - 1)
  +
  \size{M \cap V(H)}
  \\
  &
  =
  \size{V(H)} \cdot (n - 1)
  +
  \val^{K_{2n}}_{H}(M)
  ,
 \end{split}
\end{equation}
providing immediately the completeness of the reduction:
\begin{equation}
  \begin{split}
  \left\lfloor
    \frac{\size{V(H)}}{2}
  \right\rfloor
  +
  \frac{1 - \varepsilon}{2}
  -  \val^{K_{2n}}_{H}(M)
  &
  =
  \left\lfloor
    \frac{\size{V(H)} \cdot (2n - 1)}{2}
  \right\rfloor
  +
  \frac{1 - \varepsilon}{2}
  - \val^{D_{2n}}_{H^{*}}(M^{*}) \\
  &
  =
  \left\lfloor
    \frac{\size{V(H^*)}}{2}
  \right\rfloor
  +
  \frac{1 - \varepsilon}{2}
  -  \val^{D_{2n}}_{H^{*}}(M^{*})
  .    
  \end{split}
\end{equation}
The soundness of the reduction is immediate,
as the soundness guarantee is the optimal value.
\end{proof}
\end{theorem}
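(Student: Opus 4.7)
The plan is to reduce from the base hard problem $\Matching{K_{2n}}$ (Theorem~\ref{thm:mathRot}) to $\Matching{D_{2n}}$ via the simple reduction framework of Theorem~\ref{thm:red-simple}. Since $D_{2n}$ will have $\size{V(D_{2n})} = 2n(2n-1) = \Theta(n^{2})$ vertices, the $2^{\Theta(n)}$ lower bound on $\Matching{K_{2n}}$ will immediately translate into the claimed $2^{\Omega(\sqrt{\size{V(D_{2n})}})}$ bound. The two objectives will turn out to differ by a pure additive shift, so the affine sub-case of \cite{BPZ2015} suffices: taking $M_{1} \equiv 1$ and $M_{2} \equiv 0$ in Definition~\ref{def:red-simple}, both $\LPrk M_{1}$ and $\LPrk M_{2}$ vanish and $\nnegrk M_{1} = 1$, so Theorem~\ref{thm:red-simple} collapses to $\fc(\Matching{K_{2n}}, \ldots) \le \fc(\Matching{D_{2n}}, \ldots)$.

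To construct $D_{2n}$, I replace every vertex $v \in V(K_{2n})$ by a cycle $C^{v}$ of length $2n-1$, labeling its vertices by $[v,u]$ for $u \neq v$, and for each edge $\{u,v\}$ of $K_{2n}$ I insert a single bridging edge between $[u,v]$ and $[v,u]$. Each vertex then has exactly two cycle neighbors and one bridging neighbor, so $D_{2n}$ is $3$-regular. The instance map sends a subgraph $H \subseteq K_{2n}$ to the subgraph $H^{*}$ of $D_{2n}$ consisting of the cycles $C^{v}$ for $v \in V(H)$ together with the bridging edges for $\{u,v\} \in E(H)$. The solution map lifts a perfect matching $M$ of $K_{2n}$ to $M^{*}$ by including, for each $\{u,v\} \in M$, the corresponding bridging edge and then, inside each cycle $C^{v}$, extending uniquely over the remaining path of $2n-2$ vertices.

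A direct count yields the key identity
\begin{equation*}
\val^{D_{2n}}_{H^{*}}(M^{*}) = (n-1)\cdot \size{V(H)} + \val^{K_{2n}}_{H}(M),
\end{equation*}
since each retained cycle $C^{v}$ contributes exactly $n-1$ edges to $M^{*} \cap E(H^{*})$, and a bridging edge belongs to this intersection precisely when the corresponding $K_{2n}$-edge lies in $M \cap E(H)$. Combined with the routine identity $\lfloor \size{V(H^{*})}/2 \rfloor = (n-1)\size{V(H)} + \lfloor \size{V(H)}/2 \rfloor$, which follows from $\size{V(H^{*})} = (2n-1)\size{V(H)}$ by a case distinction on the parity of $\size{V(H)}$, this is exactly the completeness condition \eqref{eq:red-simple-complete} with the chosen $M_{1}, M_{2}$. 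Soundness \eqref{eq:red-simple-sound} is automatic because the chosen soundness guarantee on $\Matching{D_{2n}}$ is simply its optimum.

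The main subtlety I expect is verifying that the lift $M \mapsto M^{*}$ really is a perfect matching of $D_{2n}$: one must check that after selecting the bridging edges for $M$, each cycle $C^{v}$ loses exactly one vertex (since $M$ is perfect and saturates $v$ exactly once) and hence decomposes into an even-length path, which admits a unique perfect matching. Once this is in hand, Theorem~\ref{thm:red-simple} delivers $\fc(\Matching{D_{2n}}, \ldots) \ge \fc(\Matching{K_{2n}}, \ldots) = 2^{\Theta(n)} = 2^{\Omega(\sqrt{\size{V(D_{2n})}})}$, and the multiplicative inapproximability factor $1 - \varepsilon/\size{V(D_{2n})}$ follows as the ratio of the soundness and completeness guarantees.
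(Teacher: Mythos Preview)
Your proposal is correct and follows essentially the same approach as the paper: the same construction of $D_{2n}$ via odd cycles plus bridging edges, the same instance and solution maps, the same key identity $\val^{D_{2n}}_{H^{*}}(M^{*}) = (n-1)\size{V(H)} + \val^{K_{2n}}_{H}(M)$, and the same observation that soundness is trivial because the soundness guarantee is the optimum. Your presentation is in fact slightly more explicit than the paper's in naming the matrices $M_{1}\equiv 1$, $M_{2}\equiv 0$ and in spelling out the parity check behind $\lfloor (2n-1)\size{V(H)}/2\rfloor = (n-1)\size{V(H)} + \lfloor \size{V(H)}/2\rfloor$.
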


It is an interesting open problem, whether there exists a family of
bounded-degree graphs \(G_n\) on \(n\) vertices so that the lower bound in
Theorem~\ref{thm:3degLPhard} can be strengthened to \(2^{\Omega(n)}\).

\section{\Problem{BalancedSeparator} and \Problem{SparsestCut}}
\label{sec:lp-hardness-balanced}
The \SparsestCut problem is a high-profile problem that received
considerable attention in the past. It is known that \SparsestCut with 
general demands can be
approximated within a factor of \(O(\sqrt{\log
  n}\log\log n)\) \cite{arora2008euclidean} and that the standard SDP 
  has an integrality
gap of \((\log n)^{\Omega(1)}\) \cite{cheeger2009log}. 
The \Problem{BalancedSeparator} problem is a related 
problem which often
arises in connection to the \SparsestCut problem
(see Definition~\ref{def:balanced-separator}). 
The main result of this section will be to show that
the \SparsestCut and \Problem{BalancedSeparator}
 problems cannot be
approximated well by small
LPs and SDPs by using the new reduction mechanism from
Section~\ref{sec:reduct-fractional}.
In the case of the \SparsestCut problem our result holds
 even if the supply graph
has bounded treewidth, with the lower bound matching the upper bound
in \cite{gupta2013sparsest} in the LP case.
The results are \emph{unconditional}
LP/SDP analogues to \cite{chawla2006hardness},
however for a different regime.
In the case of the \Problem{BalancedSeparator}
 problem our result
holds even if the demand graph has bounded treewidth.

The \SparsestCut problem is a fractional optimization problem:
we extend Definition~\ref{def:sparsestCut} via
\begin{align}
  \val^n_{\mathcal{I}}(s) \defeq \sum_{i \in s, j \notin s} c(i, j), \qquad
  \val^d_{\mathcal{I}}(s) \defeq \sum_{i \in s, j \notin s} d(i, j)
\end{align}
for any vertex set \(s\) and any instance \(\mathcal{I}\)
with capacity \(c\) and demand \(d\).

\begin{theorem}[LP/SDP hardness for \SparsestCut,
  \(\tw(\text{supply})
= O(1)\)]
\label{thm:hardness-sc}
For any \(\varepsilon \in (0, 1)\) there are
\(\eta_{\textrm{LP}} > 0\) and \(\eta_{\textrm{SDP}} > 0\)
such that for every large enough \(n\) the following hold
\begin{align*}
  \fc\left(\SparsestCut{n}{2}, \eta_{\textrm{LP}}(1 + \varepsilon),
    \eta_{\textrm{LP}} \left(
2 -  \varepsilon\right)\right) &\ge 
n^{\Omega\left(\log n / \log \log n\right)},
  \\
  \fcSDP\left(\SparsestCut{n}{2}, \eta_{\textrm{SDP}}\left(1
      + \frac{4\varepsilon}{5}\right),\eta_{\textrm{SDP}}
    \left(\frac{16}{15} - \varepsilon\right)\right)
&\ge n^{\Omega\left(\log{n}/\log\log{n}\right)}.
\end{align*}
In other words \SparsestCut{n}{2} is LP-hard with an
inapproximability factor of \(2 - \varepsilon\),
and SDP-hard with an
inapproximability factor of \(\frac{16}{15} - O(\varepsilon)\).
\end{theorem}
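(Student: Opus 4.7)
The plan is to apply the fractional reduction framework of Theorem~\ref{thm:red-fraction} with \MaxCUT as the source problem. For the LP bound the source is Theorem~\ref{thm:LP-MaxCUT}, giving \MaxCUT LP-hardness with completeness \(1-\varepsilon\) versus soundness \(1/2 + \varepsilon/6\), a gap of approximately~\(2\). For the SDP bound the source is the companion SDP-hardness of \MaxCUT with factor \(15/16 + \varepsilon\) promised in Table~\ref{fig:reductions}, derivable from the Lasserre hardness of \MaxXOR[0]{3} (Theorem~\ref{thm:hardness-k-CSP}) and the standard MaxCUT gadget, lifted through Theorem~\ref{thm:red-simple} in the SDP world. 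The factors \(2-\varepsilon\) and \(16/15-\varepsilon\) then propagate to \SparsestCut provided the \MaxCUT-to-\SparsestCut reduction loses nothing multiplicatively.

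The reduction sends a \MaxCUT instance \(H\) on \([n]\) to a \SparsestCut{n}{2} instance \((d^H, c^\star)\) with \(d^H(i,j) = \chi(\{i,j\} \in E(H))\) encoding the edges of \(H\) and \(c^\star\) a fixed \(H\)-independent weighted graph of treewidth at most~\(2\) (for example a Hamiltonian cycle on \([n]\), possibly with auxiliary bounded-treewidth components). Feasible solutions map by the identity \(X \mapsto X\). Under this map, \(\val^d_{(d^H, c^\star)}(X) = |E(H) \cap \delta(X)|\) equals the MaxCUT objective on \(X\), while \(\val^n_{(d^H, c^\star)}(X) = c^\star(\delta(X))\) depends only on \(X\). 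The denominator condition \eqref{eq:red-fraction-denominator} is satisfied by \(M^{(d)}_1 = 0\) and \(M^{(d)}_2 = \allOne\), both of rank at most~\(1\). The numerator condition \eqref{eq:red-fraction-complete} becomes
\begin{equation*}
  C_1(H) - |E(H) \cap \delta(X)| = \bigl[c^\star(\delta(X)) - C_2(H) \cdot |E(H) \cap \delta(X)|\bigr] M^{(n)}_1(H, X) + M^{(n)}_2(H, X),
\end{equation*}
in which both sides are affine in \(|E(H) \cap \delta(X)|\) and in \(c^\star(\delta(X))\). Choosing \(M^{(n)}_1, M^{(n)}_2\) as bounded sums of tensor products of an \(H\)-dependent factor with an \(X\)-dependent factor yields \(\LPrk M^{(n)}_i, \nnegrk M^{(n)}_1 = O(1)\), and analogously for the SDP and psd ranks. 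Plugging these constant ranks into Theorem~\ref{thm:red-fraction} converts the \MaxCUT lower bounds into the claimed lower bounds \(n^{\Omega(\log n / \log \log n)}\).

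The principal obstacle is designing \(c^\star\) and the calibration constants \(\eta, C_2\) so that \(M^{(n)}_1\) and \(M^{(n)}_2\) are simultaneously nonnegative on \emph{every} pair \((H, X)\), and the soundness and completeness ratios collapse exactly to \(2 - \varepsilon\) and \(16/15 - \varepsilon\). The subtlety is that the \SparsestCut optimum is a \emph{minimum} over cuts, so even in the no-case of \MaxCUT there may exist spurious cuts \(X\) with very small \(c^\star(\delta(X))\) — for instance short intervals of a Hamiltonian cycle — whose sparsity undercuts the intended soundness threshold and thereby breaks \eqref{eq:red-fraction-sound}. Overcoming this requires enriching \(c^\star\) with further bounded-treewidth structure (or a product gadget over multiple copies of the instance) that penalises cheap cuts in proportion to the \MaxCUT value they witness, while preserving both \(\tw([n]_{c^\star}) \leq 2\) and the bilinear factorization on which the bounded rank of the reduction matrices rests. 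Once this calibration is in place, \eqref{eq:red-fraction-sound} and Theorem~\ref{thm:red-fraction} combine to give the LP and SDP statements simultaneously.
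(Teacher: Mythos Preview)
Your high-level plan — reduce from \MaxCUT via Theorem~\ref{thm:red-fraction}, using Theorem~\ref{thm:LP-MaxCUT} and the SDP \MaxCUT hardness as sources — is exactly the paper's plan. But the proposal has a genuine gap at the step you yourself flag as ``the principal obstacle'': you never actually construct \(c^\star\) or verify \eqref{eq:red-fraction-complete} and \eqref{eq:red-fraction-sound}. A Hamiltonian cycle (or any fixed \(H\)-independent capacity graph) will not work, because then \(\val^{n}_{(d^H,c^\star)}(X)=c^\star(\delta(X))\) varies with \(X\), and you are left trying to write \(C_1(H)-\val_H(X)\) as a nonnegative-rank-\(O(1)\) combination of \(c^\star(\delta(X))-C_2\val_H(X)\) uniformly over all pairs \((H,X)\). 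There is no reason for such a decomposition to exist, and the soundness side fails too: small cuts of the cycle give tiny sparsity regardless of \(\OPT{H}\).

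The missing idea, which the paper takes from \cite{gupta2013sparsest}, is to choose the capacity graph so that the \emph{numerator is constant} on the image solutions. Concretely, one adds two auxiliary vertices \(u,v\) and sets \(c(i,u)=c(i,v)=\deg_H(i)/m\); the supply graph is then \(K_{2,n}\), of treewidth \(2\), and for the cut \(s^*=s\cup\{u\}\) one has \(\val^{n}(s^*)=1\) for \emph{every} \(H\) and \(s\), while \(\val^{d}(s^*)=\val_H(s)\). This already makes \eqref{eq:red-fraction-complete} hold with \(M_1^{(n)}\) a constant and \(M_2^{(n)}=0\). The gap is then amplified by the \(l\)-fold powering of \cite{gupta2013sparsest} (your ``product gadget'' intuition), which preserves \(\val^n=1\), multiplies \(\val^d\) by \(l\), keeps treewidth \(\le 2\), and — crucially — comes with a soundness lemma handling \emph{all} cuts of the powered instance, not just those of the form \(s^*_l\). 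With the constant numerator, the matrices are all rank \(1\) and the calibration you worried about is automatic.
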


A complementary reduction proves the hardness of approximating
\Problem{BalancedSeparator}
where the demand graph has constant treewidth.
Note that we only have an inapproximability result for LPs in
this case since the reduction is from \UniqueGames for which
we do not yet know of any SDP hardness result.

\begin{theorem}[LP-hardness for \Problem{BalancedSeparator}]
\label{thm:balSepHardness}
For any constant \(c_1 \ge 1\) there is another
constant \(c_2 \ge 1\) such that for all \(n\)
there is a demand function 
\(d \colon E(K_n) \rightarrow \R_{\ge 0}\) satisfying \(\tw([n]_d) \le c_2\)
so that \BalancedSeparator{n}{d} is LP-hard with an inapproximability
factor of \(c_1\).
\end{theorem}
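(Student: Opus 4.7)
The plan is to apply the generalized reduction mechanism of Theorem~\ref{thm:red-simple} with $\mathcal{P}_{1} = \UniqueGames$ (the source of hardness, via Theorem~\ref{thm:hardnessofug}) and $\mathcal{P}_{2} = \BalancedSeparator{n}{d}$ (the target to be shown hard). The non-affine freedom afforded by the matrices $M_{1}, M_{2}$ is what makes this feasible: the balance requirement (separated demand $\ge D/4$) is inherently non-affine, so a purely affine reduction in the sense of \cite{BPZ2015} cannot encode it. Given the target inapproximability factor $c_{1}$, I would fix a UG alphabet size $q = q(c_{1})$ and a soundness slack $\delta = \delta(c_{1})$ so that the induced BS completeness-to-soundness ratio exceeds $c_{1}$; the demand-graph treewidth $c_{2}$ will then be a function of $q$ alone, hence a function of $c_{1}$ alone.

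\textbf{Construction.} Starting from a \UniqueGames{N}{q} instance $(G, w, \{\pi_{uv}\})$, I would build a BS instance on $n = 2Nq$ vertices organized into $2N$ \emph{clouds} $V_{u} = \{(u,\alpha) : \alpha \in [q]\}$, one per UG vertex. The fixed demand function $d$ is $1$ on every intra-cloud pair and $0$ elsewhere, so $[n]_{d}$ is a disjoint union of $q$-cliques and $\tw([n]_{d}) \le q - 1 =: c_{2}$. The instance map sends the UG instance to the BS capacity function that, for each UG edge $\{u,v\}$ with weight $w_{uv}$ and permutation $\pi_{uv}$, places capacity $w_{uv}/q$ on every intercloud pair $\{(u,\alpha),(v,\beta)\}$ with $\beta \ne \pi_{uv}(\alpha)$. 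The solution map sends a labeling $s$ to the cut $X(s)$ defined by $X(s) \cap V_{u} = \{(u,\alpha) : \alpha \oplus s(u) \in A\}$ for a fixed subset $A \subseteq [q]$ of size $\lceil q/2 \rceil$, where $\oplus$ is a regular group action on $[q]$. By construction $|X(s) \cap V_{u}| = \lceil q/2 \rceil$ for every $u$, so $X(s)$ separates $\Theta(q^{2})$ units of demand per cloud and is therefore always a feasible BS solution.

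\textbf{Verification and main obstacle.} The next step is to verify the algebraic identity required by Definition~\ref{def:red-simple}: using the cloud symmetry, the cut-capacity of $X(s)$ decomposes as an affine function of the UG value of $s$ plus a correction that depends only on the intra-cloud structure, so the matrices $M_{1}, M_{2}$ have both LP-rank and nonnegative-rank bounded by a function of $q$ alone, independent of $N$. The UG completeness $1 - \delta$ then translates to a BS completeness $C_{1}$, while a rounding of low-capacity balanced cuts back to UG labelings (in the spirit of \cite{chawla2006hardness}) translates the UG soundness $1/q + \delta$ into a BS soundness $S_{1}$ with $S_{1}/C_{1} \ge c_{1}$ once $q$ is chosen large enough. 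Plugging these bounds into Theorem~\ref{thm:red-simple} and invoking Theorem~\ref{thm:hardnessofug} with arbitrarily large exponent yields $\fc(\BalancedSeparator{n}{d}, C_{1}, S_{1}) = n^{\omega(1)}$, establishing the claimed LP-hardness.

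The main obstacle is designing the solution map $X(\cdot)$ to be simultaneously (i) always balanced, (ii) such that its cut-capacity is controlled essentially linearly by the UG value (so that $M_{1}, M_{2}$ have rank independent of $N$), and (iii) such that the gap between $C_{1}$ and $S_{1}$ can be made to grow with $q$, so that any $c_{1}$ is attainable. The group-action symmetry on each cloud is the structural device that reconciles these three requirements; without it, the non-affine balance requirement would have to be absorbed into $M_{2}$ with rank scaling in $N$, which would destroy the reduction. A secondary technical point is that Theorem~\ref{thm:red-simple} only provides a formulation-complexity upper bound on $\mathcal{P}_{1}$ in terms of $\mathcal{P}_{2}$, so one must be careful to orient the reduction so that UG-hardness propagates into BS-hardness, rather than the reverse.
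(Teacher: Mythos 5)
Your high-level skeleton matches the paper's: reduce from \UniqueGames via Theorem~\ref{thm:red-simple} (with the correct orientation, \(\mathcal{P}_1=\UniqueGames\), \(\mathcal{P}_2=\Problem{BalancedSeparator}\)), use a fixed intra-cloud demand so that \([n]_d\) is a disjoint union of bounded cliques, use a solution map that is always exactly balanced, and choose \(q\) as a function of \(c_1\). But the proof has a genuine gap at its crux: the soundness condition \eqref{eq:red-simple-sound}. This is a statement about the instance map alone — if every UG labeling satisfies at most a \(1/q+\delta\) fraction of constraints, then \emph{every} balanced cut of the image instance (not just cuts of the special form \(X(s)\)) must have capacity at least \(S_2\). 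You dispose of this in one clause ("a rounding of low-capacity balanced cuts back to UG labelings in the spirit of \cite{chawla2006hardness}"), but this is precisely the hard analytic content. The paper does not reprove it either; it imports it as Lemma~\ref{lem:bs-soundness} (Theorem~11.2 of \cite{khot2015unique}), and that theorem is proved for the noisy-hypercube/long-code gadget in which each UG vertex is replaced by a cloud \(\cube{q}\) of size \(2^q\) and the capacities carry the \(\varepsilon\)-noise weights \(\varepsilon^{(\pi_{i,j}(x)y)_-}(1-\varepsilon)^{(\pi_{i,j}(x)y)_+}\); the Fourier-analytic decoding of a sparse balanced cut into a good labeling lives on those hypercube clouds. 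Your label-extended gadget with clouds of size \(q\) has no known (and very likely no true) analogue of this soundness statement, so the reduction does not go through as written.

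A secondary, more concrete defect: with your capacity assignment (weight \(w_{uv}/q\) on \emph{all} mismatched pairs \(\beta\neq\pi_{uv}(\alpha)\)), any cut that splits each cloud into halves cuts roughly \(q^2/2\) of the \(q(q-1)\) capacitated pairs per UG edge, regardless of whether the cut is labeling-consistent; the labeling only affects the \(O(q)\) matched pairs. So the completeness value and the best possible soundness value differ by a relative \(O(1/q)\), and your claim (iii) that the gap grows with \(q\) fails for this gadget — you would get an inapproximability factor \(1+O(1/q)\), not an arbitrary constant \(c_1\). The paper avoids both problems by using the Khot–Vishnoi construction verbatim: clouds \(\cube{q}\), noise parameter \(\varepsilon=(\log q)^{-1/2}\), the balanced cut \(s^*=\{(x,i): x_{s(i)}=1\}\), the exact completeness identity \(\tfrac12-\val_{\mathcal{I}^*}(s^*)=(\tfrac12-\varepsilon)\val_{\mathcal{I}}(s)\) (so \(M_1,M_2\) are constant matrices of LP rank \(1\)), imported soundness \(\OPT{\mathcal{I}^*}>b_t\varepsilon^t\), and finally \(q=2^{(2c_1)^8}\), giving \(c_2=2^q-1\). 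If you want to keep your two-paragraph structure, you must either adopt such a long-code gadget and cite (or prove) the corresponding soundness theorem, or supply a new soundness proof for your gadget — the latter is not a routine verification.
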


\subsection{\Problem{SparsestCut} with bounded treewidth supply graph}\label{sec:sparsest-cut-supply}
In this section we show that the \SparsestCut problem over supply
graphs with treewidth \(2\) cannot be
approximated up to a factor of  
\(2\) by any polyonomial sized LP and up to a factor of
\(\frac{16}{15}\) by any polynomial sized SDP,
i.e., Theorem~\ref{thm:hardness-sc}.

We use the reduction from \cite{gupta2013sparsest},
reducing \MaxCUT to \SparsestCut.
Given an instance
\(\mathcal{I}\)
of \MaxCUT{n} we first construct the 
instance \(\mathcal{I}^*\) on vertex
set \(V = \{u, v\} \cup [n]\) where \(u\) and \(v\) are
two special vertices. Let us denote the degree
of a vertex \(i\) in \(\mathcal{I}\) by \(\deg(i)\)
and let \(m \defeq \frac{1}{2}\sum_{i=1}^n \deg(i)\) be the
total number of edges in \(\mathcal{I}\). 
We define the capacity function \(c \colon V \times V \rightarrow
\R_{\ge 0}\) as 
\begin{align*}
c(i, j) &\defeq \begin{cases} \frac{\deg(i)}{m}
 \qquad & \text{ if } j = u, i \neq v \text{ or } j = v, i \neq u\\
 0 \qquad & \text{ otherwise.}
 \end{cases}
\intertext{Note that the supply graph has treewidth at most \(2\)
being a copy of \(K_{2, n}\).
The demand function \(d \colon V \times V \rightarrow \R_{\ge 0}\)
is defined as 
}
d(i, j) &\defeq 
		   \begin{cases} 
		   \frac{2}{m} \qquad &\text{ if } \{i, j\}
		   \in E(\mathcal{I}) \\
		   0 \qquad &\text{ otherwise.}
		   \end{cases}
\end{align*}
We map a solution \(s\) to \MaxCUT{n} to the cut
\(s^* \defeq s \cup \{u\}\) of \SparsestCut{n+2}{2}.

We remind the reader of the powering operation from
\cite{gupta2013sparsest} to handle the case of
unbalanced and non \(u\)-\(v\) cuts. 
It successively adds for every edge of
\(\mathcal{I}^{*}\) a copy of itself, scaling both the capacities
and demands by the capacity of the edge.
After \(l\) rounds, we obtain an instance \(\mathcal{I}^*_{l}\)
on a fixed set of \(O(N^{2l})\) vertices,
and similarly the cuts \(s^{*}\) extend naturally to cuts
 \(s^*_{l}\)
on these vertices, independent of the instance \(\mathcal{I}\).
We provide a formal definition of the powering operation below,
for any general instance \(\mathcal{I}_1\) and general
solution \(s_1\) of \SparsestCut.

\begin{definition}[Powering instances]
\label{def:instance-powering}
  The instances of \SparsestCut{N_{1}}
  are \(G_{1} \coloneqq  K_{N_{1}}\) with capacity function \(c_1\) 
  and demand function \(d_1\).
  Let \(u\) and \(v\) be two distinguished vertices of \(G_{1}\).
  We construct a sequence \(\{G_{l}\}_{l}\) of graphs
  with distinguished vertices \(u\) and \(v\) recursively
  as follows.
  The graph \(G_{l}\) is obtained by replacing every edge
  \(\{x, y\}\) of \(G_{1}\) by a copy of \(G_{l-1}\).
  Let us denote by \((\{x,y\}, w)\) the copy of vertex \(w\)
  of \(G_{l-1}\).
  We identify the vertices \((\{x, y\}, u)\) and \((\{x,y\}, v)\)
  with \(x\) and \(y\). There are two ways to do so for every edge and
  we can pick either, arbitrarily.
  Obviously, \(G_{l}\) has
  \(N_{l} \coloneqq \sum_{i=1}^{l-1} \binom{N}{2}^{i} (N - 2) + 2\)
  many vertices.
Given a base instance \(\mathcal{I}_1\) of 
\SparsestCut{N_1} we will construct a sequence of instances
\(\{\mathcal{I}_l\}_{l}\) of \SparsestCut{N_{l}} recursively
as follows. Let the capacity and demand function of
\(\mathcal{I}_{l-1}\) be
\(c_{l-1}\) and \(d_{l-1}\) respectively.
The capacity of edges not in \(G_{l}\) will be \(0\).
Any edge \(e\) of \(G_{l}\) has the unique form
\(\{(\{x, y\}, p), (\{x, y\}, q)\}\)
for an edge \(\{x, y\}\) of \(G_{1}\) and an edge \(\{p, q\}\)
of \(G_{l-1}\).
We define
\(c_l(e) \defeq c_{l-1}(p, q) \cdot c_1(x, y)\).
If \(e\) is not the edge \(\{x, y\}\) then let
\(d_l(e) \defeq d_{l-1}(p, q) \cdot c_1(x, y)\).
The edge \(\{x, y\}\) takes the demand from \(G_{1}\) in addition,
therefore we define
\(d_l(x,y) \defeq d_{l-1}(u, v) \cdot c_1(x, y) + d_{1}(x,y)\).
\end{definition}

We recall here the following easy observation
that relates the treewidth of the supply graph of \(\mathcal{I}_1\)
to the treewidth of the supply graph of \(\mathcal{I}_l\).

\begin{lemma}[{\cite[Observation~4.4]{gupta2013sparsest}}]
\label{lem:tw-powering}
If the treewidth of the supply graph of \(\mathcal{I}_1\) is 
at most \(k\), then the treewidth of the supply graph of
\(\mathcal{I}_l\) is also at most \(k\). 
\end{lemma}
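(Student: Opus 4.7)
My plan is to induct on $l$, constructing a tree decomposition $T_l$ of the supply graph of $\mathcal{I}_l$ by gluing together tree decompositions $T_1$ of the supply graph of $\mathcal{I}_1$ and copies of a tree decomposition $T_{l-1}$ of the supply graph of $\mathcal{I}_{l-1}$. The base case $l = 1$ is immediate. For the inductive step, I first observe from Definition~\ref{def:instance-powering} that the supply graph of $\mathcal{I}_l$ (namely, the subgraph where $c_l$ is positive) is obtained from the supply graph of $\mathcal{I}_1$ by replacing each edge $\{x,y\}$ with a fresh copy $H_{x,y}$ of the supply graph of $\mathcal{I}_{l-1}$, identifying the special vertices $u, v$ of the copy with $x, y$ respectively.

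Given tree decompositions $T_1$ and $T_{l-1}$ of width at most $k$, I would pick for each edge $\{x,y\}$ of the supply graph of $\mathcal{I}_1$ a node $t_{x,y} \in V(T_1)$ with $\{x,y\} \subseteq B_{t_{x,y}}$ (which exists by Definition~\ref{def:treewidth}(2)), together with a ``gluing node'' $s_{x,y}$ in a fresh copy of $T_{l-1}$ whose bag contains both $u$ and $v$ (and hence, after identification, both $x$ and $y$). The combined decomposition $T_l$ is obtained by adding an edge between $t_{x,y}$ and $s_{x,y}$ for every such edge. Its width is the maximum of the widths of $T_1$ and of the copies of $T_{l-1}$, hence at most $k$.

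The verification that $T_l$ is a valid tree decomposition is routine: every vertex and edge of the supply graph of $\mathcal{I}_l$ lies in a single piece (either $T_1$ or one of the copies), so coverage is inherited; for the subtree condition, vertices internal to a copy inherit their subtree from that copy, while for vertices $x$ of the supply graph of $\mathcal{I}_1$, the subtree of bags containing $x$ is the union of the corresponding subtree in $T_1$ with the copies' subtrees for $x$, glued at the nodes $t_{x,y}$ and $s_{x,y}$, both of which contain $x$ by construction.

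The main obstacle is ensuring the existence of a node $s_{x,y}$ in each copy of $T_{l-1}$ whose bag simultaneously contains $u$ and $v$: in general, $u$ and $v$ need not be adjacent in the supply graph of $\mathcal{I}_{l-1}$, so no such bag is forced to exist. I would resolve this by strengthening the inductive hypothesis to assert that the supply graph of $\mathcal{I}_l$ admits a tree decomposition of width at most $k$ together with a distinguished bag containing both $u$ and $v$; the inductive step above automatically preserves this invariant (any $s_{x,y}$ supplies the new distinguished bag for $T_l$). The only nontrivial part is then the base case $l = 1$, which in the application to \SparsestCut{n}{2} is immediate: the supply graph of $\mathcal{I}_1$ is a subgraph of $K_{2,n}$ and admits a path-like tree decomposition with every bag of the form $\{u, v, i\}$ for $i \in [n]$, achieving width~$2$ while having $\{u, v\}$ in every bag.
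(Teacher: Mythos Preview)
The paper does not prove this lemma; it merely cites \cite[Observation~4.4]{gupta2013sparsest}. Your argument is the standard one and is essentially correct, including your observation that the inductive gluing requires the extra invariant that $u$ and $v$ share a bag in the width-$k$ decomposition of the supply graph of $\mathcal{I}_{l-1}$---an assumption that the specific base instance used in Section~\ref{sec:sparsest-cut-supply} indeed satisfies via the path decomposition you describe.

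There is one slip in how you preserve the invariant. You claim that ``any $s_{x,y}$ supplies the new distinguished bag for $T_l$,'' but the bag at $s_{x,y}$ contains the distinguished vertices \emph{of that copy}, which after identification are $x$ and $y$, not the global $u$ and $v$ of $G_l$ (and in the application $\{u,v\}$ is not even a supply edge, so there is no copy attached there). The fix is immediate: since $T_1$ is a subtree of $T_l$ with its bags unchanged, and by the strengthened base case $T_1$ already has a bag containing both $u$ and $v$, that same bag serves as the distinguished bag for $T_l$. With this correction your induction goes through.
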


Corresponding to powering
instances, we can also recursively construct solutions to
\SparsestCut{N_{l}}
starting from a solution \(s_1\)
of \SparsestCut{N_{1}}.

\begin{definition}[Powering solutions separating \(u\) and \(v\)]
\label{def:solutions-powering}
 Given a base solution \(s_1\) of \SparsestCut{N_{1}} we
 construct a solution \(s_l\)  for \SparsestCut{N_{l}}
 recursively as follows.
 The solution \(s_{l}\) coincides with \(s_{1}\) on
 the vertices of \(G_{1}\).
 On the copy of \(G_{l-1}\) for an edge \(\{x,y\}\)
 of \(G_{1}\) we define \(s_{l}\) as follows.
 If \(s_{1}(x) = s_{1}(y)\) then let
 \(s_{l}((\{x,y\}, z)) \coloneqq s_{1}(x) = s_{1}(y)\)
 for all vertex \(z\) of \(G_{l-1}\),
 so as \(s_{l}\) cuts no edges in the copy of \(G_{l-1}\).
 If \(s_{1}(x) \neq s_{1}(y)\) then
 we define \(s_{l}\) so that the edges it cuts in the \(\{x,y\}\)-copy
 of \(G_{l-1}\) are exactly the copies of edges cut by \(s_{l-1}\)
 in \(G_{l-1}\).
 More precisely, let \((\{x,y\}, u)\) be identified with \(x\)
 and \((\{x,y\}, v)\) with \(y\).
 If \(s_{1}(x) = s_{l-1}(u)\)
 then we let \(s_{l}(\{x,y\}, z) \coloneqq s_{l-1}(z)\),
 otherwise we let \(s_{l}(\{x,y\}, z) \coloneqq - s_{l-1}(z)\).
\end{definition}

We now define the actual reduction. 
We construct a sequence of instances \(\{\mathcal{I}^*_1,
\mathcal{I}^*_2, \dots, \mathcal{I}^*_l\}\) 
where \(\mathcal{I}^*_l\) is obtained as in 
Definition~\ref{def:instance-powering} by applying the powering
operation to the base instance \(\mathcal{I}^*_1 = \mathcal{I}^*\)
and where \(N_1 \defeq n + 2, 
c_1 \defeq c\) and
\(d_1 \defeq d\). Note that by Lemma~\ref{lem:tw-powering},
the treewidth of the supply graph of \(\mathcal{I}^*_l\) is 
at most \(2\). We also construct a sequence of solutions
\(\{s^*_1, \dots, s^*_l\}\) where \(s^*_l\)
is obtained as in Definition~\ref{def:solutions-powering} by applying
the powering operation to the base solution \(s^*_1 = s^*\).
The final reduction maps
the instance \(\mathcal{I}\) and solution \(s\) of \MaxCUT{n}
to the instance \(\mathcal{I}^*_l\) and solution
\(s^*_l\) of \SparsestCut{N_l}{2} respectively. 
Completeness and soundness follows from 
\cite{gupta2013sparsest}. 

\begin{lemma}[{Completeness, \cite[Claim 4.2]{gupta2013sparsest}}]
\label{lem:sparsest-cut-completeness}
Let \(\mathcal{I}\) be an instance and \(s\) be a solution
of \MaxCUT{n},
and let their image be
the instance \(\mathcal{I}^*_l\) and solution \(s^*_l\)
of \SparsestCut{N_l}{2}{}, respectively.
Then the following holds
\begin{align*}
\val^n_{\mathcal{I}^*_l}(s^*_l) &= 1, &
\val^d_{\mathcal{I}^*_l}(s^*_l) &= l\val_{\mathcal{I}}(s).
\end{align*}
\end{lemma}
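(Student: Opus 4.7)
The plan is to prove the two identities together by induction on $l$, exploiting the recursive structure of both the powering of instances (Definition~\ref{def:instance-powering}) and the powering of solutions (Definition~\ref{def:solutions-powering}).

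For the base case $l = 1$, I would directly unroll the definitions. The solution $s^*_1 = s \cup \{u\}$ puts $u$ on one side and $v$ on the other, so the only cut edges carrying nonzero capacity are the ``star'' edges $(u,j)$ with $j \notin s$ and $(v,i)$ with $i \in s$, each contributing $\deg(\cdot)/m$; summing gives the normalized total degree, which reduces to the claimed value after accounting for the chosen normalization. For the denominator, the cut demand edges are exactly the edges of $\mathcal{I}$ with one endpoint in $s$ and one outside, each contributing $2/m$, so $\val^d_{\mathcal{I}^*_1}(s^*_1)$ equals $\val_{\mathcal{I}}(s)$ after the same normalization convention.

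The inductive step is where the recursion does the real work. Every edge of $G_l$ lies inside exactly one copy of $G_{l-1}$ indexed by an edge $\{x,y\}$ of $G_1$, and by Definition~\ref{def:instance-powering} its capacity and (non-boundary) demand in $\mathcal{I}^*_l$ are the corresponding quantities in $\mathcal{I}^*_{l-1}$ scaled by $c_1(x,y)$. By Definition~\ref{def:solutions-powering}, $s^*_l$ is constant on the $\{x,y\}$-copy whenever $s^*_1(x) = s^*_1(y)$, and agrees (up to a global sign that does not affect the cut) with $s^*_{l-1}$ otherwise. Therefore
\begin{equation*}
  \val^n_{\mathcal{I}^*_l}(s^*_l)
  = \sum_{\{x,y\}:\, s^*_1(x) \neq s^*_1(y)} c_1(x,y) \cdot \val^n_{\mathcal{I}^*_{l-1}}(s^*_{l-1})
  = \val^n_{\mathcal{I}^*_1}(s^*_1) \cdot \val^n_{\mathcal{I}^*_{l-1}}(s^*_{l-1}) = 1,
\end{equation*}
by the base case and the inductive hypothesis.

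For the denominator, the same decomposition applies to the scaled demand $d_{l-1}(p,q)\cdot c_1(x,y)$, but one must also add back the extra $d_1(x,y)$ that Definition~\ref{def:instance-powering} places on the boundary edge $\{x,y\}$ itself. That boundary edge is cut by $s^*_l$ precisely when $s^*_1(x)\neq s^*_1(y)$, so the extra term contributes $\sum_{\{x,y\}:\, s^*_1(x)\neq s^*_1(y)} d_1(x,y) = \val^d_{\mathcal{I}^*_1}(s^*_1)$. Combining,
\begin{equation*}
  \val^d_{\mathcal{I}^*_l}(s^*_l)
  = \val^n_{\mathcal{I}^*_1}(s^*_1) \cdot \val^d_{\mathcal{I}^*_{l-1}}(s^*_{l-1}) + \val^d_{\mathcal{I}^*_1}(s^*_1)
  = 1 \cdot (l-1)\val_{\mathcal{I}}(s) + \val_{\mathcal{I}}(s) = l\,\val_{\mathcal{I}}(s),
\end{equation*}
which closes the induction. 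The main delicate point is the boundary edge: one must be careful not to double-count the contribution $d_{l-1}(u,v)\cdot c_1(x,y)$ coming from the recursive scaling together with the additive $d_1(x,y)$, and to verify that the identification $(\{x,y\},u)\sim x$, $(\{x,y\},v)\sim y$ is consistent with the construction of $s^*_l$ so that ``cut in the copy'' matches ``cut in $G_{l-1}$'' regardless of the chosen orientation.
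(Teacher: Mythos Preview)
The paper does not supply its own proof of this lemma; it simply cites \cite[Claim~4.2]{gupta2013sparsest}. Your inductive argument on the powering level \(l\) is the natural approach, and the inductive step is carried out correctly: the key identities
\[
\val^{n}_{\mathcal{I}^{*}_{l}}(s^{*}_{l})
= \val^{n}_{\mathcal{I}^{*}_{1}}(s^{*}_{1})\cdot
\val^{n}_{\mathcal{I}^{*}_{l-1}}(s^{*}_{l-1}),
\qquad
\val^{d}_{\mathcal{I}^{*}_{l}}(s^{*}_{l})
= \val^{n}_{\mathcal{I}^{*}_{1}}(s^{*}_{1})\cdot
\val^{d}_{\mathcal{I}^{*}_{l-1}}(s^{*}_{l-1})
+ \val^{d}_{\mathcal{I}^{*}_{1}}(s^{*}_{1})
\]
follow exactly as you describe from Definitions~\ref{def:instance-powering} and~\ref{def:solutions-powering}, and your remark about not double-counting the boundary demand is the right thing to check.

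The one place where your write-up is genuinely incomplete is the base case. Phrases like ``reduces to the claimed value after accounting for the chosen normalization'' are hiding the actual computation, and in fact a literal evaluation with the capacities as written in the paper gives
\(\val^{n}_{\mathcal{I}^{*}_{1}}(s^{*}_{1})=\sum_{i\in[n]}\deg(i)/m=2\) and
\(\val^{d}_{\mathcal{I}^{*}_{1}}(s^{*}_{1})=2\lvert\delta_{\mathcal{I}}(s)\rvert/m\).
So the identities in the lemma hold only up to a fixed normalization of \(\val_{\mathcal{I}}\) (or a global factor of \(2\)). This is harmless for the application in Theorem~\ref{thm:hardness-sc}, since only the ratio \(\val^{n}/\val^{d}\) and the affine relation between \(\val^{d}\) and \(\val_{\mathcal{I}}\) are used there, and the induction still yields \(\val^{n}_{\mathcal{I}^{*}_{l}}(s^{*}_{l})\) constant and \(\val^{d}_{\mathcal{I}^{*}_{l}}(s^{*}_{l})\) equal to \(l\) times the base demand. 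But you should state the base-case values explicitly rather than waving them through, and note the normalization convention you are adopting.
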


\begin{lemma}[{Soundness,
    \cite[Lemmas 4.3 and 4.7]{gupta2013sparsest}}]
\label{lem:sparsest-cut-soudness}
Let \(\mathcal{I}\) be an instance of \MaxCUT{n} 
and let 
\(\mathcal{I}^*_l\) be the instance of 
\SparsestCut{N_l}{2}{}
 it is mapped to. 
Then the instance \(\mathcal{I}^*_l\) has the following lower bound on
its optimum (the number of edges of \(\mathcal{I}\)
scales the \MaxCUT value between \(0\) and \(1\)).
\[
\OPT{\mathcal{I}^*_l} \geq
\frac{1}{1 + (l-1) \OPT{\mathcal{I}} / \size{E(\mathcal{I})}}.
\]  
\end{lemma}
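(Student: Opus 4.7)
The plan is to prove the equivalent statement that for every cut $S$ of $\mathcal{I}^*_l$, the demand-to-capacity ratio $R_l(S) \defeq d_l(\delta(S))/c_l(\delta(S))$ satisfies $R_l(S) \leq 1 + (l-1) q$ with $q \defeq \OPT{\mathcal{I}}/\size{E(\mathcal{I})}$. Taking reciprocals immediately yields the stated lower bound on $\OPT{\mathcal{I}^*_l}$. I would proceed by induction on the powering level $l$.

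For the base case $l = 1$, I would verify directly that $R_1(S) \leq 1$ for every cut $S$ of $\mathcal{I}^*$. This is a routine computation: since $d_1$ is supported on edges of $\mathcal{I}$ with weight $2/m$ each and $c_1(\{u, i\}) = c_1(\{v, i\}) = \deg(i)/m$, separating a vertex $i \in [n]$ from both $u$ and $v$ costs capacity $2 \deg(i) / m$, which dominates its possible demand contribution to the cut, while a $u$-versus-$v$ partition pays the full $\sum_i \deg(i)/m$, again dominating the total cut demand.

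For the inductive step I would exploit the recursive structure of the powering operation. For each edge $\{x, y\}$ of $G_1$, let $T_{xy}$ denote the restriction of $S$ to the $\{x, y\}$-copy of $G_{l-1}$. From Definition~\ref{def:instance-powering} one obtains the decompositions
\begin{align*}
  c_l(\delta(S)) &= \sum_{\{x,y\} \in E(G_1)} c_1(x,y)\, c_{l-1}(\delta(T_{xy})),\\
  d_l(\delta(S)) &= \sum_{\{x,y\} \in E(G_1)} c_1(x,y)\, d_{l-1}(\delta(T_{xy})) + d_1(\delta(S \cap V(G_1))).
\end{align*}
Applying the inductive hypothesis to each $T_{xy}$ bounds the first summand on the right-hand side of the second equation by $(1 + (l-2) q) \cdot c_l(\delta(S))$, so it suffices to establish the single residual inequality $d_1(\delta(S \cap V(G_1))) \leq q \cdot c_l(\delta(S))$.

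The main obstacle is precisely this residual inequality. The easy side is $d_1(\delta(S \cap V(G_1))) \leq 2\OPT{\mathcal{I}}/m = 2q$, since $d_1$ measures, up to the factor $2/m$, the number of cut edges of $\mathcal{I}$. The delicate side is the lower bound $c_l(\delta(S)) \geq 2$, which fails in general: a cut isolating a single vertex deep inside some copy of $G_{l-1}$ can have vanishingly small capacity. To handle this I would first reduce to cuts $S$ that separate the global terminals $u$ and $v$, arguing that any non-$u$-$v$-separating cut either has its sparsity already bounded below by the inductive hypothesis applied to a sub-instance (since then $d_1(\delta(S \cap V(G_1)))$ vanishes on the part involving the terminals), or can be uncrossed to a $u$-$v$-separating cut of no larger sparsity. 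Once the reduction is done, $c_l(\delta(S)) \geq 2$ follows by induction from the base fact that the minimum $u$-$v$ cut in $\mathcal{I}^*$ equals $2$ together with the observation that powering multiplies the $u$-$v$ mincut by the $u$-$v$ mincut of $G_{l-1}$. This terminal-separation reduction is exactly the combinatorial content of Lemma~4.7 of \cite{gupta2013sparsest} and is the most subtle ingredient of the argument.
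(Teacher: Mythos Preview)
The paper does not supply a proof of this lemma: it is quoted directly from \cite{gupta2013sparsest} (their Lemmas~4.3 and 4.7) and used as a black box in the proof of Theorem~\ref{thm:hardness-sc}. There is therefore no argument in the paper to compare your attempt against.

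That said, your reconstruction follows the shape of the argument in the cited source: induction on the powering level $l$, decomposition of the cut capacity and demand along the recursive structure of Definition~\ref{def:instance-powering}, and---crucially---a reduction to cuts that separate the distinguished terminals $u$ and $v$, which is exactly the content of Lemma~4.7 in \cite{gupta2013sparsest}. Your identification of this terminal-separation step as the subtle ingredient is accurate.

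The one place where your sketch is genuinely loose is the handling of non-$u$-$v$-separating cuts. Your claim that for such cuts ``$d_1(\delta(S\cap V(G_1)))$ vanishes on the part involving the terminals'' is not correct as stated: the demand $d_1$ is supported entirely on edges inside $[n]$, so $d_1(\delta(S\cap V(G_1)))$ can be nonzero regardless of whether $u$ and $v$ lie on the same side of $S$. Consequently the residual inequality $d_1(\delta(S\cap V(G_1)))\leq q\cdot c_l(\delta(S))$ does not follow from a quick case split, and the ``uncrossing to a $u$-$v$-separating cut of no larger sparsity'' you allude to requires its own inductive structural argument rather than a one-line remark. In \cite{gupta2013sparsest} this is precisely what their Lemma~4.7 does, and if you intend to write out a self-contained proof you will need to reproduce that argument in full.
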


Using the reduction framework of 
Section \ref{sec:reduct-fractional}
we now prove the main theorem of this section
about the LP and SDP
inapproximability of \SparsestCut.

\begin{proof}[Proof of Theorem~\ref{thm:hardness-sc}]
This is a simple application of
Lemmas \ref{lem:sparsest-cut-completeness} and
\ref{lem:sparsest-cut-soudness}
using Theorem~\ref{thm:red-fraction}
with matrices
\(M_1^{(n)}(\mathcal{I}_1, s_1) \defeq C_1(\mathcal{I}_1)\),
\(M_2^{(n)}(\mathcal{I}_1, s_1) \defeq 0\),
\(M_1^{(d)}(\mathcal{I}_1, s_1) \defeq 0\),
\(M_2^{(d)}(\mathcal{I}_1, s_1) \defeq 1\).
Hardness of the base problem \MaxCUT is provided by
Theorems~\ref{thm:LP-MaxCUT} and \ref{thm:SDP-MaxCUT},
and leads to
\(\eta_{\textrm{LP}} = \frac{5\varepsilon}{3 - \varepsilon}\)
and
\(\eta_{\textrm{SDP}} = \frac{3\varepsilon}{1 - 4\varepsilon}\).
\end{proof}

\subsection{\Problem{BalancedSeparator} with bounded-treewidth demand graph}
\label{subsec:balanced-separator}

In this section we show that the 
\Problem{BalancedSeparator} problem cannot be
approximated within any constant factor with small LPs even
when the demand graph has constant treewidth:

\begin{theorem}[LP-hardness for \Problem{BalancedSeparator}]
(Theorem~\ref{thm:balSepHardness} restated)
For any constant \(c_1 \ge 1\) there is another
constant \(c_2 \ge 1\) such that for all \(n\)
there is a demand function 
\(d \colon E(K_n) \rightarrow \R_{\ge 0}\) satisfying \(\tw([n]_d) \le c_2\)
so that \BalancedSeparator{n}{d} is LP-hard with an inapproximability
factor of \(c_1\).
\end{theorem}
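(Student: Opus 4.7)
The plan is to reduce from \UniqueGames{m}{q} to \BalancedSeparator{n}{d}, applying the reduction machinery of Theorem~\ref{thm:red-simple} together with the LP-hardness of \UniqueGames{m}{q} supplied by Theorem~\ref{thm:hardnessofug}. Given the target inapproximability factor $c_1$, pick a positive integer $q$ with $q > c_1$ and set $c_2 \coloneqq q-1$. For $n$ of the form $mq$ identify $[n]$ with $[m] \times [q]$, where $(i,a)$ plays the role of "label $a$ assigned to UG variable $i$"; for $n$ not of this form, pad with isolated vertices so that neither the demand nor the treewidth change. Define the demand function $d$ by placing unit demand on every within-cloud pair $\{(i,a),(i,b)\}$ with $a \neq b$ and zero demand elsewhere. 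Then $[n]_d$ is a disjoint union of $m$ copies of $K_q$, so $\tw([n]_d) \le q-1 = c_2$ as required.

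Given an instance $(G, w, \{\pi_e\})$ of \UniqueGames{m}{q}, define its image capacity by $c\bigl((i,a),(j,b)\bigr) \coloneqq \sum_{\{i,j\}\in E(G)} w_{ij}\,\chi\bigl(\pi_{ij}(a) = b\bigr)$, which attaches a weighted perfect matching across clouds $i$ and $j$ for every UG edge (the standard unique-games-to-balanced-separator gadget). Map a UG labeling $\sigma$ to the cut $s^{*} \coloneqq \{(i,a) : (a-\sigma(i)) \bmod q \in \{0,\dotsc,\lfloor q/2\rfloor - 1\}\}$. A direct count shows that $s^{*}$ separates a constant fraction of the total demand $D$ (a "half-cloud" cut), well above the threshold $D/4$, so $s^{*}$ is balanced. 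The capacity of $s^{*}$ is an affine function of the UG satisfaction value of $\sigma$ plus an instance-dependent offset that depends only on the $\pi_{ij}$, yielding the completeness relation \eqref{eq:red-simple-complete} with matrices $M_{1}$ and $M_{2}$ of LP rank $O(1)$.

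The heart of the argument is then the soundness direction: one must show that every balanced cut of small capacity in the image instance implicitly determines a UG labeling of high satisfaction, so that a low \UniqueGames optimum forces a high \BalancedSeparator optimum in the sense of \eqref{eq:red-simple-sound}. This is the classical unique-games-to-balanced-separator analysis and can be invoked as a combinatorial black box: within each cloud a low-capacity balanced cut must be essentially a "shifted half", and the associated majority label is the recovered UG assignment. Combined with completeness and Theorem~\ref{thm:red-simple}, using that $\LPrk M_{i}$ and $\nnegrk M_{1}$ are $O(1)$, we transfer the $n^{\Omega(k)}$ lower bound of Theorem~\ref{thm:hardnessofug} for gap $1 - \delta$ vs.\ $1/q + \delta$ to \BalancedSeparator{n}{d} with an inapproximability factor that tends to $q$, in particular exceeding $c_1$. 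The main obstacle is precisely the soundness step; once the combinatorial concentration of cheap balanced cuts around shifted labelings is in hand, the LP-complexity statement follows formally.
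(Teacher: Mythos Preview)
Your reduction is \emph{not} the one the paper uses, and both your completeness and soundness steps have genuine gaps.

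\textbf{Completeness fails for general permutations.} Your cut \(s^{*}\) places the ``half-interval'' \(\{\sigma(i),\sigma(i)+1,\dotsc,\sigma(i)+\lfloor q/2\rfloor-1\}\) on one side in each cloud. For this to make the capacity an affine function of the \UniqueGames value, a satisfied edge should contribute a fixed amount. But take \(q=4\), \(\pi_{ij}=(1\,2)(3\,4)\), and the satisfying labels \(\sigma(i)=1,\sigma(j)=2\). Then \(s^{*}\cap\text{cloud }i=\{1,2\}\), \(s^{*}\cap\text{cloud }j=\{2,3\}\), and of the four matching edges \((i,1)\!-\!(j,2),\,(i,2)\!-\!(j,1),\,(i,3)\!-\!(j,4),\,(i,4)\!-\!(j,3)\) exactly two are cut. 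With the identity permutation and a satisfied edge, none are cut. So the contribution of a satisfied edge depends on \(\pi_{ij}\) \emph{and} on \(\sigma\) jointly, and the ``instance-dependent offset'' you posit would have to absorb a solution-dependent term; the resulting \(M_{2}\) is no longer of bounded LP rank. The half-interval trick only works when all \(\pi_{ij}\) are shifts on \(\mathbb{Z}_q\), which is not the setting of Theorem~\ref{thm:hardnessofug}.

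\textbf{Soundness is not a known black box for this construction.} What you call ``the classical unique-games-to-balanced-separator analysis'' is the Khot--Vishnoi argument, and it is tied to a completely different gadget: each \UniqueGames vertex is blown up to a hypercube \(\{-1,+1\}^{q}\) of size \(2^{q}\), capacities come from the \(\varepsilon\)-noise operator, and the solution map sends a labeling to the corresponding \emph{dictator} cut. Soundness there is a Fourier-analytic statement (Bourgain-type): any balanced cut of small boundary in a noisy hypercube must correlate with a junta, from which one decodes a good \UniqueGames labeling. There is no analogous decoding lemma on record for the bare label-extended graph, and in particular nothing that lets the gap grow with \(q\); a cheap balanced cut restricted to a cloud need not look like a ``shifted half'' at all, since clouds carry no capacity edges.

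The paper accordingly runs the Khot--Vishnoi reduction verbatim: clouds of size \(2^{q}\), demand a clique inside each cloud (so \(\tw([n]_d)=2^{q}-1\), which is why \(c_{2}\) in the statement is allowed to be enormous), capacities given by the noise graph, and the dictator map for solutions. Completeness is then an exact one-line identity (Lemma~\ref{lem:bs-completeness}), soundness is quoted from \cite{khot2015unique}, and Theorem~\ref{thm:red-simple} applies with constant matrices \(M_{1},M_{2}\). Your label-extended construction would need both a new completeness map and a new soundness proof before Theorem~\ref{thm:red-simple} can be invoked.
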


We will reduce the \UniqueGames{n}{q} problem to the
\BalancedSeparator{2^qn}{d} problem for a fixed demand function \(d\)
to be defined below.
We reuse the reduction from \cite[Section~11.1]{khot2015unique}.
A bijection \(\pi \colon [q] \rightarrow [q]\)
acts on strings \(\cube{q}\) in
the natural way, i.e., \(\pi(x)_i \defeq x_{\pi(i)}\). For any
parameter \(p \in [0, 1]\), we denote by \(\VAR{x} \in_p \cube{q}\)
a random string where each coordinate \(\VAR{x}_i\) of 
\(\VAR{x}\) is \(-1\) with
probability \(p\) and \(1\) with probability \(1-p\). For a string \(x
\in \cube{q}\) we define \(x_{+} \defeq \lvert\{i \mid x_i = 1\}\rvert\) and \(x_{-} \defeq \lvert\{i \mid x_i = -1\}\rvert\). For a pair
of strings \(x, y \in \cube{q}\) we denote by \(xy\) the string in
\(\cube{q}\) formed by the coordinate-wise product of \(x\) and \(y\),
i.e., \((xy)_i \defeq x_iy_i\) for \(i \in [n]\). We are now ready
to proceed with the reduction.

Given an instance \(\mathcal{I} = \mathcal{I}(w, \pi)\) of \UniqueGames{n}{q} we construct
the instance \(\mathcal{I}^*\) of \BalancedSeparator{2^qn}{d}.
Let \(\varepsilon\) be a parameter to be
chosen later. The vertex set \(V\) of 
\(\mathcal{I}^*\)
is defined as \(V \defeq \left\{(x, i) \mid i \in [n], x \in \cube{q}
 \right\}\) so that \(|V| = 2^qn\). Let \(W \defeq \sum_{\{i, j\} \in
 E(K_n)} w(i, j)\) denote the total weight of
 the \UniqueGames{n}{q} instance
 \(\mathcal{I}\).
 For every \(i, j \in [n]\) and \(x, y \in \cube{q}\)
 there is an undirected edge \(\{(x, i), (y, j)\}\)
 in \(\mathcal{I}^*\) of capacity
 \(c((x, i), (y, j))\) which is defined as
\begin{align*}
 c\left((x, i), (y, j)\right)
 &\defeq \frac{w(i, j)\varepsilon^{\left(\pi_{i, j}(x)y\right)_{-}
 }(1-\varepsilon)^{\left(\pi_{i, j}(x)y\right)_{+}}}{2^qW}.
\intertext{
The demand function \(d\left((x, i), (y, j)\right)\)
 is defined
 for an unordered pair of vertices \(\{(x, i), (y, j)\}\)
 as
}
 d\left((x, i), (y, j)\right) &\defeq \begin{cases}
 \frac{1}{2^{2q-1}n}& \text{if } i = j \\
                                         0 & \text{otherwise}
                                        \end{cases}
\end{align*}
so that the total demand \(D\) is \(1\). Note that the demand graph
\([2^qn]_d\) is a disjoint of union of cliques of size \(2^q\) and
so \(\tw([2^qn]_d) = 2^q - 1 = O(1)\).
Given a solution \(s\) of \UniqueGames{n}{q} we map it to the solution \(s^*\) of \BalancedSeparator{2^qn}{d}
defined as \(s^* \defeq \left\{(x, i) \mid x_{s(i)} = 1 \right\}\). Note that the total demand cut by \(s^*\)
is \(\frac{1}{2} = \frac{D}{2} > \frac{D}{4}\)
since for every solution \(s(i) \in [q]\) there are exactly \(2^{q-1}\) strings in \(\cube{q}\) that have their
\(s(i)\)\textsuperscript{th} bit set to \(1\)
and \(2^{q-1}\) strings have their
\(s(i)\)\textsuperscript{th} bit set to \(-1\).
Thus \(s^*\) is a valid solution to the
\BalancedSeparator{2^qn}{d} problem and moreover is independent of
the instance \(\mathcal{I}^*\). We are now ready to show that this
reduction satisfies completeness.

\begin{lemma}[Completeness]\label{lem:bs-completeness}
Let \(\mathcal{I}\) and \(s\) be an instance
and a solution respectively of \UniqueGames{n}{q}.
Let \(\mathcal{I}^*\) and \(s^*\) be the instance and solution of
\BalancedSeparator{2^qn}{d} obtained from the reduction. Then

\begin{align*}
\frac{1}{2} - \val_{\mathcal{I}^*}(s^*) =
\left(\frac{1}{2} - \varepsilon\right) \val_{\mathcal{I}}(s)
\end{align*}
\end{lemma}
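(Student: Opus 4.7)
The plan is to compute $\val_{\mathcal{I}^*}(s^*) = \sum_{u \in s^*,\, v \notin s^*} c(u,v)$ directly and then rearrange to the claimed identity. The central trick is a probabilistic reinterpretation of the capacity: for each UG edge $\{i,j\}$, let $X$ be uniform over $\{-1,1\}^q$, let $Z \in \{-1,1\}^q$ have i.i.d.\ coordinates with $\Pr[Z_k = -1] = \varepsilon$, and set $Y \defeq \pi_{i,j}(X) \cdot Z$. Then
\begin{equation*}
  \Pr[X = x,\, Y = y]
  = \frac{1}{2^{q}}
  \varepsilon^{(\pi_{i,j}(x) y)_{-}}
  (1-\varepsilon)^{(\pi_{i,j}(x) y)_{+}},
\end{equation*}
so that $c((x,i),(y,j)) = \frac{w(i,j)}{W}\Pr[X = x,\, Y = y]$.

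A short check using $\pi_{j,i} = \pi_{i,j}^{-1}$ together with the action convention $\pi(x)_k = x_{\pi(k)}$ shows that $c$ is symmetric in its endpoints, so the ordered-pair sum $\val_{\mathcal{I}^{*}}(s^*)$ counts each unordered cut edge exactly once. The derived edge $\{(x,i),(y,j)\}$ is cut by $s^{*}$ iff $x_{s(i)} y_{s(j)} = -1$, so the total contribution of UG edge $\{i,j\}$ to $\val_{\mathcal{I}^{*}}(s^{*})$ is $\frac{w(i,j)}{W}\Pr[X_{s(i)} Y_{s(j)} = -1]$. Writing $Y_{s(j)} = X_{\pi_{i,j}(s(j))} Z_{s(j)}$ and splitting on whether the UG edge is satisfied by $s$: if $\pi_{i,j}(s(j)) = s(i)$ then $X_{s(i)} Y_{s(j)} = Z_{s(j)}$ and the probability is $\varepsilon$; otherwise $X_{s(i)}$ and $X_{\pi_{i,j}(s(j))}$ are independent, making $Y_{s(j)}$ independent of $X_{s(i)}$ and the probability $\tfrac{1}{2}$.

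Summing over $\{i,j\} \in E(G)$ and recognising that $\val_{\mathcal{I}}(s) = W^{-1}\sum_{\{i,j\}\text{ satisfied}} w(i,j)$, we obtain
\begin{equation*}
  \val_{\mathcal{I}^{*}}(s^{*})
  = \varepsilon\, \val_{\mathcal{I}}(s)
  + \tfrac{1}{2}\bigl(1 - \val_{\mathcal{I}}(s)\bigr)
  = \tfrac{1}{2} - \bigl(\tfrac{1}{2} - \varepsilon\bigr)\val_{\mathcal{I}}(s),
\end{equation*}
which rearranges to the claimed identity. The main obstacle is really only identifying the probabilistic interpretation of $c$; once it is in place the satisfied-vs-unsatisfied case split is forced by the structure, and the only bookkeeping cost is keeping the action convention $\pi(x)_k = x_{\pi(k)}$ straight so that the satisfied case collapses cleanly to $Y_{s(j)} = X_{s(i)} Z_{s(j)}$ rather than to an expression involving $\pi_{i,j}^{-1}$.
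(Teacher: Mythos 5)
Your proof is correct and follows essentially the same argument as the paper: the same probabilistic reinterpretation of the capacities (uniform \(x\), \(\varepsilon\)-biased \(z\), \(y = \pi_{i,j}(x)z\)), the same satisfied/unsatisfied case split giving \(\varepsilon\) versus \(\tfrac{1}{2}\) via independence, and the same rearrangement. The only difference is presentational — you sum per-edge conditional probabilities directly, whereas the paper packages the edge choice into one joint distribution and double-counts \(\probability{\VAR{x}_{s(\VAR{i})} = \VAR{y}_{s(\VAR{j})}}\) — which is not a substantive deviation.
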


\begin{proof}
Let us sample a random edge
\((\VAR{i}, \VAR{j})\) from
the \UniqueGames{n}{q} instance \(\mathcal{I}\)
with probabilities proportional to \(w(i, j)\)
(i.e., \(\probability{\VAR{i} = i, \VAR{j} = j} = w(i, j) / W\)),
and independently sample \(\VAR{x} \in_{1/2} \cube{q}\)
and \(\VAR{z} \in_{\varepsilon} \cube{q}\).
Let \(\VAR{y} \coloneqq \pi_{\VAR{i}, \VAR{j}}(\VAR{x}) \VAR{z}\).

The claim follows by computing the probability of
\(\VAR{x}_{s(\VAR{i})} = \VAR{y}_{s(\VAR{j})}\)
in two different ways.

On the one hand, for a fixed edge  \((\VAR{i}\), \(\VAR{j})\)
of \(\mathcal{I}\),
depending on whether the edge is correctly labelled,
we have 

\begin{align}
  \probability[\VAR{i} = i, \VAR{j} = j,
  s(i) = \pi_{i,j}(s(j))]
  {\VAR{x}_{s(\VAR{i})} = \VAR{y}_{s(\VAR{j})}}
  &
  =
  \probability[\VAR{i} = i, \VAR{j} = j,
  s(i) = \pi_{i,j}(s(j))]
  {\VAR{z}_{s(j)} = 1}
  =
  1 - \varepsilon,
  \\
  \probability[\VAR{i} = i, \VAR{j} = j,
  s(i) \neq \pi_{i,j}(s(j))]
  {\VAR{x}_{s(\VAR{i})} = \VAR{y}_{s(\VAR{j})}}
  &
  =
  \probability[\VAR{i} = i, \VAR{j} = j,
  s(i) \neq \pi_{i,j}(s(j))]
  {\VAR{x}_{s(i)} = \VAR{y}_{s(j)}}
  =
  \frac{1}{2}
  .
\end{align}
Note that in the latter case \(\VAR{x}_{s(i)}\) and \(\VAR{y}_{s(j)}\)
are independent uniform binary variables.
Hence
\begin{align}
  \probability{s(\VAR{i}) = \pi_{\VAR{i},\VAR{j}}(s(\VAR{j})),
    \VAR{x}_{s(\VAR{i})} = \VAR{y}_{s(\VAR{j})}}
  &
  =
  (1 - \varepsilon) \val_{\mathcal{I}}(s),
  \\
  \probability{s(\VAR{i}) \neq \pi_{\VAR{i},\VAR{j}}(s(\VAR{j})),
    \VAR{x}_{s(\VAR{i})} = \VAR{y}_{s(\VAR{j})}}
  &
  =
  \probability[\VAR{i} = i, \VAR{j} = j,
  s(i) \neq \pi_{i,j}(s(j))]
  {\VAR{x}_{s(i)} = \VAR{y}_{s(j)}}
  \\
  &
  =
  \frac{1 - \val_{\mathcal{I}}(s)}{2}
  ,
  \intertext{leading to}
  \label{eq:bs-complete-UG}
  \probability{\VAR{x}_{s(\VAR{i})} = \VAR{y}_{s(\VAR{j})}}
  &
  =
  \frac{1}{2}
  +
  \left(
    \frac{1}{2} - \varepsilon
  \right)
  \val_{\mathcal{I}}(s)
  .
\end{align}

On the other hand, note that
\(\left((\VAR{x}, \VAR{i}), (\VAR{y}, \VAR{j})\right)\)
is a random edge from
\(\mathcal{I}^*\) with distribution
given by the weights \(c\left((x, i), (y, j)\right)\),
(i.e.,
\(\probability{\VAR{x} = x, \VAR{i} = i, \VAR{y} = y, \VAR{j} = j}
= c((x, i), (y, j))\)).
Recall that the cut \(s^*\) cuts an edge \(((x, i), (y, j))\)
if and only if \(x_{s(i)} \neq y_{s(j)}\).
It follows that
\begin{equation}
  \label{eq:bs-complete-reduced}
  \probability{\VAR{x}_{s(\VAR{i})} = \VAR{y}_{s(\VAR{j})}}
  =
  1 - \val_{\mathcal{I}^*}(s^*)
  .
\end{equation}
The claim now follows from Eqs.~\eqref{eq:bs-complete-UG}
and \eqref{eq:bs-complete-reduced}.
\end{proof}

Soundness of the reduction from \UniqueGames
to \Problem{BalancedSeparator}
is a reformulation of
\cite[Theorem 11.2]{khot2015unique}
without PCP verifiers:
\begin{lemma}[Soundness]\label{lem:bs-soundness}
  (Theorem 11.2 \cite{khot2015unique})
For every \(t \in \left(\frac{1}{2}, 1\right)\) there exists a
   constant
   \(b_t > 0\) such that the following holds. Let \(\varepsilon > 0\)
    be sufficiently small and let \(\mathcal{I} = \mathcal{I}(w, \pi)\)
     be an instance of \UniqueGames{n}{q} and let \(\mathcal{I}^*\)
	be the instance of \BalancedSeparator{2^qn}{d} as defined in
	Section \ref{subsec:balanced-separator}. If \(\OPT{\mathcal{I}}
      < 2^{-O(1/\varepsilon^2)}\) then \(\OPT{\mathcal{I}^*} > b_t
      \varepsilon^t\).
\end{lemma}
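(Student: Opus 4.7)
The plan is to prove the contrapositive of Lemma~\ref{lem:bs-soundness}: assuming \(s^*\) is a feasible solution of \(\BalancedSeparator{2^qn}{d}\) on \(\mathcal{I}^*\) with capacity smaller than \(b_t \varepsilon^t\), I will decode a labeling of \(\mathcal{I}\) satisfying at least \(2^{-O(1/\varepsilon^2)}\) fraction of the UG constraints, contradicting \(\OPT{\mathcal{I}} < 2^{-O(1/\varepsilon^2)}\). This is the standard Khot--Vishnoi style analysis, which I would carry out in three stages.

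For each \(i \in [n]\) introduce the Boolean function \(f_i \colon \cube{q} \to \{0, 1\}\) defined by \(f_i(x) = \chi((x, i) \in s^*)\), and let \(\mu_i = \mathbb{E}[f_i]\). First, the separated demand equals \(\frac{1}{n}\sum_i 2\mu_i(1-\mu_i)\), so the balance condition \(\ge 1/4\) forces \(\mu_i \in [c_0, 1-c_0]\) on a \(w\)-weight-constant fraction of vertices for some absolute \(c_0 > 0\). Second, a direct calculation from the capacities \(c((x,i),(y,j))\) shows that the cut capacity equals the averaged noise-sensitivity quantity
\begin{equation*}
\val_{\mathcal{I}^*}(s^*) = \mathbb{E}_{(i,j)\sim w}\Bigl[\mathbb{E}_{\VAR{x},\VAR{z}}\,\chi\bigl(f_i(\VAR{x}) \neq f_j(\pi_{i,j}(\VAR{x})\VAR{z})\bigr)\Bigr],
\end{equation*}
where \(\VAR{x} \in_{1/2} \cube{q}\) and \(\VAR{z} \in_\varepsilon \cube{q}\), matching the computation already carried out in the proof of Lemma~\ref{lem:bs-completeness}.

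Next I invoke Bourgain's quantitative noise-sensitivity theorem: for any \(t \in (1/2, 1)\) there are constants \(b_t, c_1 > 0\) and an integer \(K_t = 2^{O(1/\varepsilon^2)}\) such that every \(c_0\)-balanced \(f \colon \cube{q} \to \{0,1\}\) with noise sensitivity \(\mathrm{NS}_\varepsilon(f) < b_t \varepsilon^t\) admits a coordinate set \(J(f) \subseteq [q]\) of size at most \(K_t\) carrying total influence at least \(c_1\). Assuming \(\val_{\mathcal{I}^*}(s^*) < b'_t \varepsilon^t\) for a suitable \(b'_t\), a Cauchy--Schwarz step across the bipartition of the UG graph transfers smallness of the joint noise sensitivity to smallness of the individual \(\mathrm{NS}_\varepsilon(f_i)\) on a weight-constant fraction of vertices. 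Combining with the balance condition, a constant fraction of vertices \(i\) are \emph{good}, meaning balanced and Bourgain-applicable, so they admit a set \(J_i\) of size at most \(K_t\).

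Finally, I decode a random labeling \(s\) of \(\UniqueGames{n}{q}\) by setting \(s(i)\) uniformly from \(J_i\) for good \(i\) and arbitrarily otherwise. For a UG edge \((i, j)\) between two good vertices whose local contribution to the capacity is small, a Plancherel / influence-matching computation shows that the \(K_t\) high-influence coordinates of \(f_j\) are close to the \(\pi_{i,j}\)-image of those of \(f_i\), so the decoded labels agree through \(\pi_{i,j}\) with probability at least \(c_1^2/K_t^2 = 2^{-O(1/\varepsilon^2)}\). Averaging over the constant fraction of good edges yields \(\OPT{\mathcal{I}} \ge \val_{\mathcal{I}}(s) \ge 2^{-O(1/\varepsilon^2)}\), contradicting the hypothesis. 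The main obstacle will be the passage from the joint noise-sensitivity quantity controlled by the cut capacity to individual noise sensitivities of the \(f_i\) (needed to invoke Bourgain), together with the careful tracking of the constants \(b_t\), \(b'_t\), \(c_0\), \(c_1\) so that the decoded labeling beats the hypothesis threshold \(2^{-O(1/\varepsilon^2)}\).
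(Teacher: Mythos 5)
You should first note that the paper itself never proves Lemma~\ref{lem:bs-soundness}: it is imported verbatim as Theorem~11.2 of \cite{khot2015unique}, with the surrounding text only remarking that it is that theorem ``without PCP verifiers''. So the benchmark for your attempt is the Khot--Vishnoi proof, and your outline does follow its architecture: rewrite the cut capacity as the edge-averaged noise quantity \(\expectation(\VAR{i},\VAR{j})[\probability{f_{\VAR{i}}(\VAR{x}) \neq f_{\VAR{j}}(\pi_{\VAR{i},\VAR{j}}(\VAR{x})\VAR{z})}]\) (this matches the computation in Lemma~\ref{lem:bs-completeness}), use the balance constraint to force many balanced cloud functions \(f_i\), invoke Bourgain's noise-sensitivity/junta theorem, and decode a labeling from influential coordinates; your formula for the separated demand, \(\frac{1}{n}\sum_i 2\mu_i(1-\mu_i)\), is also correct.

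As a proof, however, the plan has concrete gaps beyond constant bookkeeping. First, a distribution mismatch: the balance condition controls the \(f_i\) under the \emph{uniform} distribution over clouds \(i \in [n]\) (that is how the demand \(d\) is defined), whereas the decoding needs both endpoints of a typical \emph{\(w\)-weighted} edge to be balanced and noise-insensitive. For arbitrary weights \(w\) these two distributions are unrelated, so your assertion that \(\mu_i \in [c_0, 1-c_0]\) on a \(w\)-weight-constant fraction of vertices does not follow; Khot--Vishnoi work with regular, essentially unweighted instances where the two coincide, and your argument needs either that hypothesis or an additional averaging step. Second, the passage from the joint edge quantity to individual \(\mathrm{NS}_\varepsilon(f_i)\) is not a one-line Cauchy--Schwarz: for a single edge the capacity compares \(f_i\) with a noisy copy of \(f_j \circ \pi_{i,j}\), and to bound the noise sensitivity of \(f_i\) itself one must route two independent noise copies through the same neighbor (or pass to functions averaged over neighbors), which changes the noise rate by constant factors that feed directly into the threshold \(b_t\varepsilon^t\) where the whole statement lives. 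Third, the precise form of Bourgain's theorem you quote (balanced \(f\) with \(\mathrm{NS}_\varepsilon(f) < b_t\varepsilon^t\) has at most \(2^{O(1/\varepsilon^2)}\) coordinates carrying total influence \(\geq c_1\)) and the influence-matching step for a good edge are exactly the technical content of \cite[Theorem~11.2]{khot2015unique}; in your write-up they are named rather than argued, and together with the \(2^{-O(1/\varepsilon^2)}\)-versus-\(b_t\varepsilon^t\) constant tracking you yourself flag, they are where the real work lies. In the economy of this paper the intended justification is the citation itself; your sketch is a faithful outline of the cited proof, not yet a self-contained replacement for it.
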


We are now ready to prove the main theorem of this section: that no
polynomial sized linear program can approximate the 
\Problem{BalancedSeparator} problem up to a constant factor.

\begin{theorem}\label{thm:hardness-bs}
For every \(q \ge 2, \delta > 0, t\in \left(\frac{1}{2}, 1\right)\)
 and \(k \ge 1\) there exists a
constant \(c > 0\) and a demand function \(d \colon E(K_n) \rightarrow
\R_{\ge 0}\) for every large enough \(n\), such that \(\tw([n]_d) =
2^q - 1\) and 
\begin{align*}
\fc\left(\BalancedSeparator{2^qn}{d},
 \delta + (\log{q})^{-1/2}, (\log{q})^{-t/2} \right)
 \ge cn^k.
\end{align*}
\end{theorem}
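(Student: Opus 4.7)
The plan is to apply the generalized reduction theorem (Theorem~\ref{thm:red-simple}) to the explicit reduction from \UniqueGames{n}{q} to \BalancedSeparator{2^qn}{d} already set up in this section, using \UniqueGames hardness (Theorem~\ref{thm:hardnessofug}) as the base. The two key inputs are Lemma~\ref{lem:bs-completeness} (an exact affine identity between \(\val_{\mathcal{I}^*}(s^*)\) and \(\val_{\mathcal{I}}(s)\)) and Lemma~\ref{lem:bs-soundness} (which gives a quantitative soundness transfer of the form \(\OPT{\mathcal{I}} \le 2^{-\Omega(1/\varepsilon^2)} \Rightarrow \OPT{\mathcal{I}^*} \ge b_t \varepsilon^t\)). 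Notice that \(s^*\) is independent of the instance \(\mathcal{I}^*\), which is exactly what the reduction framework requires.

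First, I would tune the reduction parameter. Set \(\varepsilon \defeq \Theta((\log q)^{-1/2})\) so that \(b_t \varepsilon^t = \Theta((\log q)^{-t/2})\), matching the soundness guarantee \(S_1 = (\log q)^{-t/2}\) in the theorem statement. Apply Theorem~\ref{thm:hardnessofug} to \UniqueGames{n}{q} with completeness \(C_2 = 1-\delta\) and soundness \(S_2 = 1/q + \delta\) for a small \(\delta > 0\). Choosing \(q\) large enough (in terms of \(\varepsilon\), hence in terms of \(q\) itself—consistent since the constraint is only \(q \ge 2^{\Theta(1/\varepsilon^2)}\)), we get \(S_2 \le 2^{-\Omega(1/\varepsilon^2)}\), so Lemma~\ref{lem:bs-soundness} yields the soundness condition \eqref{eq:red-simple-sound} of Definition~\ref{def:red-simple}.

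Second, I would set up the completeness identity \eqref{eq:red-simple-complete}. Since \BalancedSeparator is a minimization problem (\(\tau_1 = -1\)) and \UniqueGames is a maximization problem (\(\tau_2 = +1\)), rearranging Lemma~\ref{lem:bs-completeness} gives
\begin{equation*}
  \val_{\mathcal{I}^*}(s^*) - C_1
  = \left(\tfrac{1}{2} - \varepsilon\right)\bigl(C_2 - \val_{\mathcal{I}}(s)\bigr)
  + \left[\tfrac{1}{2} - C_1 - \left(\tfrac{1}{2} - \varepsilon\right)C_2\right].
\end{equation*}
Thus I take \(M_1 \equiv \tfrac{1}{2} - \varepsilon\), constant, and \(M_2(\mathcal{I},s) \defeq \tfrac{1}{2} - C_1 - (\tfrac{1}{2}-\varepsilon)(1-\delta)\), depending neither on \(\mathcal{I}\) nor \(s\). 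Setting \(C_1 \defeq \tfrac{1}{2} - (\tfrac{1}{2}-\varepsilon)(1-\delta) = \varepsilon + \delta/2 - \varepsilon\delta\) makes \(M_2 \equiv 0\) and, up to absorbing constants, matches the stated completeness \(\delta + (\log q)^{-1/2}\). Both matrices are trivially rank at most one, so \(\LPrk M_1, \LPrk M_2, \nnegrk M_1 \le 1\).

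Third, Theorem~\ref{thm:red-simple} then gives
\begin{equation*}
  \fc(\BalancedSeparator{2^qn}{d}, C_1, S_1)
  \;\le\; 2 + \fc(\UniqueGames{n}{q}, 1-\delta, 1/q+\delta)
  \;\ge\; cn^k,
\end{equation*}
using Theorem~\ref{thm:hardnessofug} for the last inequality; a lower bound on the right forces a lower bound on the left (as the extra additive constants are harmless). The main obstacle I anticipate is the bookkeeping of parameters: juggling \(\varepsilon\), \(\delta\), \(q\), and \(t\) so that simultaneously \(S_2\) is small enough to invoke Lemma~\ref{lem:bs-soundness}, \(b_t \varepsilon^t\) reaches the stated soundness, and the constants in \(C_1\) match \(\delta + (\log q)^{-1/2}\). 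The treewidth bound \(\tw([2^qn]_d) = 2^q - 1\) is immediate from the definition of \(d\) as a disjoint union of \(K_{2^q}\)'s, and with \(q\) fixed (a constant depending on \(\delta\) and \(t\)) this is an absolute constant as required.
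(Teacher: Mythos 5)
Your route is the paper's own: the same construction mapping \UniqueGames{n}{q} instances to \BalancedSeparator{2^qn}{d} instances, Lemma~\ref{lem:bs-completeness} for completeness, Lemma~\ref{lem:bs-soundness} for soundness, constant reduction matrices, and Theorem~\ref{thm:red-simple} combined with Theorem~\ref{thm:hardnessofug}. The one genuine problem is that you apply the framework with the roles of \(\mathcal{P}_1\) and \(\mathcal{P}_2\) swapped, and as written the concluding step does not yield the theorem. In Definition~\ref{def:red-simple} the maps \(*\) go from instances and solutions of \(\mathcal{P}_1\) to those of \(\mathcal{P}_2\), and Theorem~\ref{thm:red-simple} bounds \(\fc(\mathcal{P}_1,C_1,S_1)\) \emph{above} in terms of \(\fc(\mathcal{P}_2,C_2,S_2)\). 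Since the construction maps \UniqueGames instances to \Problem{BalancedSeparator} instances, you must take \(\mathcal{P}_1\) to be \UniqueGames{n}{q} and \(\mathcal{P}_2\) to be \BalancedSeparator{2^qn}{d}, and the completeness identity must express the \UniqueGames slack \(C_1-\val_{\mathcal{I}}(s)\) in terms of the \Problem{BalancedSeparator} slack \(\val_{\mathcal{I}^*}(s^*)-C_2\). You instead set \(\tau_1\) for \Problem{BalancedSeparator}, write the identity the other way around, and finish with ``\(\fc(\Problem{BalancedSeparator})\le 2+\fc(\UniqueGames)\ge cn^k\)'': an upper bound on the left-hand quantity together with a lower bound on the right-hand quantity forces nothing. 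The correct chain is \(\fc\left(\UniqueGames{n}{q},1-\delta,\tfrac1q+\delta\right)\le \LPrk M_1+\LPrk M_2+\nnegrk M_1\cdot\fc\left(\BalancedSeparator{2^qn}{d},C_2,S_2\right)\), and then Theorem~\ref{thm:hardnessofug} pushes the \(cn^k\) lower bound onto the \Problem{BalancedSeparator} side. The slip is repairable precisely because your \(M_1\) is a positive constant: dividing your identity by \(\tfrac12-\varepsilon\) gives the required form with \(M_1=\tfrac{2}{1-2\varepsilon}\), which is the paper's choice.

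A secondary point is the parameter bookkeeping you flag yourself. The paper fixes \(\varepsilon=(\log q)^{-1/2}\) exactly, keeps \(C_2=\delta+(\log q)^{-1/2}\) as stated, and absorbs the resulting slack into a nonnegative constant matrix \(M_2=\Theta(\delta)\) of LP rank \(1\); no monotonicity-in-the-guarantee argument is needed. Your alternative of rescaling \(\varepsilon=\Theta((\log q)^{-1/2})\) to hit the soundness constant \(b_t\varepsilon^t\) and then choosing \(C_1'\) so that \(M_2\equiv 0\) requires an additional check that \(C_1'\) does not exceed the stated completeness \(\delta+(\log q)^{-1/2}\) (the hidden factor \(b_t^{-1/t}\) can be larger than \(1\), and for small \(\delta\) the monotonicity you invoke would then point the wrong way); following the paper's choice avoids this issue.
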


\begin{proof}
This statement follows immediately with
Lemmas~\ref{lem:bs-completeness} and \ref{lem:bs-soundness},
together with
Theorem~\ref{thm:red-simple} and Theorem~\ref{thm:hardnessofug}
with \(C_1 = 1 - \delta\),
\(S_1 = \frac{1}{q} + \delta\),
\(C_2 = \delta + (\log{q})^{-1/2}\)
and \(S_2 = \left(\log{q}\right)^{-t/2}\).
Note that the matrices as in Theorem~\ref{thm:red-simple} are chosen
as
\begin{align*}
M_1(\mathcal{I}, s) &= \frac{2}{1-2\varepsilon}, &
M_2(\mathcal{I}, s) = \frac{1 + \varepsilon}
{1 - 2 \varepsilon} \delta
\end{align*}
with \(\varepsilon = \left(\log{q}\right)^{-1/2}\). Since
\(M_1\) and \(M_2\) are constant nonnegative matrices
\(\LPrk{M_1} = \LPrk{M_2} = 1\).
\end{proof}

Finally, we can prove Theorem \ref{thm:balSepHardness} via choosing
the right parameters in Theorem~\ref{thm:hardness-bs}. 

\begin{proof}[Proof of Theorem \ref{thm:balSepHardness}]
Straightforward from Theorem~\ref{thm:hardness-bs}
 by choosing \(t = \frac{3}{4}, \delta = \left(\log{q}\right)^{-1/2}\) and \(q = 2^{{(2c_1)}^8}\)
so that the treewidth of the demand graph is bounded by 
\(c_2 = 2^q - 1 = 2^{2^{{(2c_1)}^8}} - 1\).
\end{proof}

\section{SDP hardness of \MaxCUT}
\label{sec:sdp-hardness-maxcut}

We now show that \MaxCUT cannot be approximated via small SDPs
within a factor of \(15/16 + \varepsilon\).
As approximation guarantees for an instance graph \(H\),
we shall use
\(C(H) = \alpha \size{E(H)}\) and
\(S(H) = \beta \size{E(H)}\)
for some constants \(\alpha\) and \(\beta\),
and for brevity we will only write \(\alpha\) and \(\beta\).
\begin{theorem}
  \label{thm:SDP-MaxCUT}
  For any \(\delta, \varepsilon > 0\)
  there are infinitely many \(n\)
  such that there is a graph \(G\) with \(n\) vertices and
  \begin{equation}
    \label{eq:SDP-MaxCUT}
    \fcSDP\left(\MaxCUT{G}, \frac{4}{5} - \varepsilon, \frac{3}{4} + \delta\right)
    = n^{\Omega(\log n / \log \log n)}
    .
  \end{equation}
\begin{proof}
Recall \cite[Theorem~4.5]{schoenebeck2008linear}
applied to the predicate \(P = (x_{1} + x_{2} + x_{3} = 0)\) (mod \(2\)):
For any \(\gamma, \delta > 0\), and large enough \(m\),
there is an instance \(\mathcal{I}\) of \MaxXOR[0]{3}
on \(m\) variables
with \(\OPT{\mathcal{I}} \leq 1/2 + \delta\)
but having a Lasserre solution after \(\Omega(m^{1 - \gamma})\) rounds
satisfying all the clauses.
By \cite[Theorem~6.4]{lee2014lower}, we obtain that
for any \(\delta, \varepsilon > 0\) for infinitely many \(m\)
\begin{equation}
  \label{eq:Max-3-XOR-LP_hard}
  \fcSDP\left(\MaxXOR[0]{3}, 1 - \varepsilon, \frac{1}{2} + \delta\right)
  = m^{\Omega(\log m / \log \log m)}
  .
\end{equation}

We reuse the reduction from \MaxXOR[0]{3} to \MaxCUT in
\cite[Lemma~4.2]{trevisan2000gadgets}.
Let \(x_{1}\), \dots, \(x_{m}\) be the variables for \MaxXOR[0]{3}.
For every possible clause \(C = (x_{i} + x_{j} + x_{k} = 0)\),
we shall use the gadget graph \(H_{C}\) from
\cite[Figure~4.1]{trevisan2000gadgets},
reproduced in Figure~\ref{fig:MaxXOR-to-MaxCUT}.
We shall use the graph \(G\), which is the union of all the
gadgets \(H(C)\) for all possible clauses.
The vertices \(0\) and \(x_{1}\), \dots, \(x_{m}\) are shared by the
gadgets, the other vertices are unique to each gadget.
\begin{figure}[ht]
  \small
  \centering
  \includegraphics{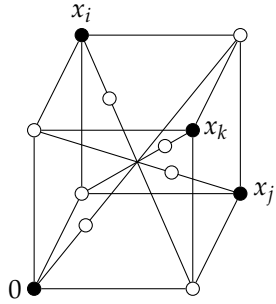}
  \caption{\label{fig:MaxXOR-to-MaxCUT}%
    The gadget \(H_{C}\)
    for the clause \(C = (x_{i} + x_{j} + x_{k} = 0)\)
    in the reduction from \MaxXOR[0]{3} to \MaxCUT.
    Solid vertices are shared by gadgets, the empty ones are local
    to the gadget.}
\end{figure}

A \MaxXOR[0]{3} instance \(\mathcal{I} = \{C_{1}, \dots, C_{l}\}\)
is mapped to
the union \(G_{\mathcal{I}} = \bigcup_{i} H(C_{i})\)
of the gadgets of the clauses \(C_{i}\) in \(\mathcal{I}\),
which is an induced subgraph of \(G\).

A feasible solution, i.e., an assignment
\(s \colon \{x_{1}, \dots, x_{m}\} \to \{0, 1\}\)
is mapped to a vertex set \(s^{*}\) satisfying
the following conditions:
\begin{enumerate*}
\item \(x_{i} \in s^{*}\) if and only if \(s(x_{i}) = 1\)
\item \(0 \notin s^{*}\)
\item on every gadget \(H(C)\)
  the set \(s^{*}\) cuts the maximal number of edges subject to
  the previous two conditions
\end{enumerate*}
It is easy to see that \(s^{*}\) cuts
\(16\) out of the \(20\) edges of every \(H(C)\)
if \(s\) satisfies \(C\),
and it cuts \(14\) edges if \(s\) does not satisfy \(C\).
Therefore
\begin{equation}
  \val^{\MaxCUT{G}}_{G_{\mathcal{I}}}(s^{*})
  =
  \frac{14 + 2 \val^{\MaxXOR[0]{3}}_{\mathcal{I}}(s)}{20}
  ,
\end{equation}
which by rearranging provides the completeness of the reduction:
\begin{equation}
  1 - \varepsilon - \val^{\MaxXOR[0]{3}}_{\mathcal{I}}(s)
  =
  10 \left[
    \frac{4}{5}
    -  \frac{\varepsilon}{10}
    - \val^{\MaxCUT{G}}_{G_{\mathcal{I}}}(s^{*})
  \right]
  .
\end{equation}
It also follows from the construction
that \(\val^{\MaxCUT{G}}_{G_{\mathcal{I}}}\) achieves its maximum on
a vertex set of the form \(s^{*}\):
given a vertex set \(X\) of \(G\),
if \(0 \notin X\) then let
let \(s(x_{i}) = 1\) if \(x_{i} \in X\),
and \(s(x_{i}) = 0\) otherwise.
If \(x_{i} \in X\) then we do it the other way around:
\(s(x_{i}) = 1\) if and only if \(x_{i} \notin X\).
This definition makes \(s^{*}\)
on the vertices
\(0\), \(x_{1}\), \dots, \(x_{m}\)
either agree with \(X\)
(if \(0 \notin X\))
or to be complement of \(X\)
(if \(0 \in X\)).
Then \(\val^{\MaxCUT{G}}_{G_{\mathcal{I}}}(s^{*}) \geq
\val^{\MaxCUT{G}}_{G_{\mathcal{I}}}(X)\)
by construction.
This means
\begin{equation}
  \max \val^{\MaxCUT{G}}_{G_{\mathcal{I}}}
  =
  \frac{14 + 2 \max \val^{\MaxXOR[0]{3}}_{\mathcal{I}}}{20}
  .
\end{equation}
Thus if \(\max \val^{\MaxXOR[0]{3}}_{\mathcal{I}} \leq 1/2 + \delta\)
then
\(\max \val^{\MaxCUT{G}}_{G_{\mathcal{I}}} \leq 3/4 + \delta / 10\).
Therefore we obtain a reduction with guarantees
\(C_{\MaxCUT{G}} = 4/5 - \varepsilon / 10\),
\(S_{\MaxCUT{G}} = 3/4 + \delta / 10\),
\(C_{\MaxXOR[0]{3}} = 1 - \varepsilon\),
\(S_{\MaxXOR[0]{3}} = 1/2 + \delta\),
proving
\begin{equation}
 \begin{split}
  \fcSDP\left(\MaxCUT{G}, \frac{4}{5} - \frac{\varepsilon}{10}, \frac{3}{4} + \frac{\delta}{10}\right)
  &
  \geq
  \fcSDP\left(\MaxXOR[0]{3}, 1 - \varepsilon, \frac{1}{2} + \delta\right)
  \\
  &
  = m^{\Omega(\log m / \log \log m)}
  = n^{\Omega(\log n / \log \log n)}
  ,
 \end{split}
\end{equation}
where \(n = O(m^{3})\) is the number of vertices of \(G\).
\end{proof}
\end{theorem}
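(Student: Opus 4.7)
The plan is to reduce from \MaxXOR[0]{3}, for which SDP hardness follows by combining Schoenebeck's Lasserre integrality gap \cite{schoenebeck2008linear} with the hierarchy-to-SDP lifting of \cite{lee2014lower}, yielding
\begin{equation*}
  \fcSDP\!\left(\MaxXOR[0]{3}, 1 - \varepsilon, \tfrac{1}{2} + \delta\right)
  = m^{\Omega(\log m / \log \log m)}
\end{equation*}
on \(m\) variables. The target graph \(G\) will be the union over all possible \(3\)-variable XOR clauses of a fixed \(20\)-edge gadget \(H_C\) (as in \cite{trevisan2000gadgets}) sharing the global vertices \(0, x_1, \dots, x_m\); the other gadget vertices are local. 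An instance \(\mathcal{I}\) of \MaxXOR[0]{3} maps to the induced subgraph \(G_{\mathcal{I}} = \bigcup_{C \in \mathcal{I}} H_C\) of \(G\).

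For the solution map, an assignment \(s\) is sent to \(s^*\) satisfying \(x_i \in s^*\) iff \(s(x_i) = 1\), \(0 \notin s^*\), and on each gadget the remaining local vertices are chosen to cut the maximum number of edges given these constraints. The key gadget calculation (verifiable directly on the \(20\)-edge graph) is: if \(s\) satisfies \(C\) then \(s^*\) cuts exactly \(16\) edges of \(H_C\), and if \(s\) does not satisfy \(C\) then it cuts exactly \(14\). Summing over clauses gives the affine identity
\begin{equation*}
  \val^{\MaxCUT{G}}_{G_{\mathcal{I}}}(s^*)
  = \frac{14 + 2\,\val^{\MaxXOR[0]{3}}_{\mathcal{I}}(s)}{20},
\end{equation*}
which rearranges to the completeness equation of Definition~\ref{def:red-simple} with trivial matrices \(M_1 \equiv 10\), \(M_2 \equiv 0\) (so \(\LPrk M_i, \SDPrk M_i \le 1\)).

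For soundness, I would argue that the maximum of \(\val^{\MaxCUT{G}}_{G_{\mathcal{I}}}\) is attained at some \(s^*\): given any vertex set \(X\), complementing \(X\) if needed so that \(0 \notin X\), then reading off an assignment \(s\) from \(X \cap \{x_1,\dots,x_m\}\), and finally locally reoptimizing inside each gadget, can only increase the cut. Hence \(\max \val^{\MaxCUT{G}}_{G_{\mathcal{I}}} = (14 + 2\max \val^{\MaxXOR[0]{3}}_{\mathcal{I}})/20\), so \(\max \val^{\MaxXOR[0]{3}}_{\mathcal{I}} \le \tfrac{1}{2} + \delta\) forces \(\max \val^{\MaxCUT{G}}_{G_{\mathcal{I}}} \le \tfrac{3}{4} + \tfrac{\delta}{10}\), verifying the soundness condition of the reduction with guarantees \(C_{\MaxXOR[0]{3}} = 1-\varepsilon\), \(S_{\MaxXOR[0]{3}} = \tfrac{1}{2}+\delta\) and \(C_{\MaxCUT{G}} = \tfrac{4}{5}-\tfrac{\varepsilon}{10}\), \(S_{\MaxCUT{G}} = \tfrac{3}{4}+\tfrac{\delta}{10}\). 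Theorem~\ref{thm:red-simple} then transfers the SDP lower bound, and since \(G\) has \(n = O(m^3)\) vertices, \(m^{\Omega(\log m/\log\log m)} = n^{\Omega(\log n/\log\log n)}\), yielding the claim after rescaling \(\varepsilon, \delta\).

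The only nontrivial part is the gadget bookkeeping (\(16\) vs.\ \(14\) cut edges) and the observation that the optimizer can be assumed to be of the form \(s^*\); both are essentially combinatorial verifications on a fixed \(20\)-edge graph and are standard from \cite{trevisan2000gadgets}. Everything else reduces to plugging the affine relation into Theorem~\ref{thm:red-simple}.
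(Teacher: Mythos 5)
Your proposal is correct and follows essentially the same route as the paper's own proof: the identical Schoenebeck-plus-\cite{lee2014lower} SDP lower bound for \MaxXOR[0]{3}, the same \cite{trevisan2000gadgets} gadget with the \(16\) versus \(14\) cut-edge count, the same affine completeness identity with constant matrices \(M_1 \equiv 10\), \(M_2 \equiv 0\), the same soundness argument via complementation and local reoptimization, and the same transfer through Theorem~\ref{thm:red-simple} with \(n = O(m^3)\).
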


\section{Lasserre relaxation is suboptimal for \IndependentSet{G}}
\label{sec:Lasserre-independent-set}

Applying reductions within Lasserre hierarchy formulations, we will
now derive a new lower bound on the Lasserre integrality gap for the
\IndependentSet problem, establishing that the Lasserre hierarchy is
suboptimal: there exists a linear-sized LP formulation for the
\IndependentSet problem with approximation guarantee \(2\sqrt{n}\),
whereas there exists a family of graphs with Lasserre integrality gap
\(n^{1-\gamma}\)
after \(\Omega(n^\gamma)\)
rounds for arbitrary small \(\gamma\).
While this is expected assuming P vs.~NP, our result is unconditional.
It also complements previous integrality gaps, like
\(n/2^{O(\sqrt{\log n \log \log n})}\)
for \(2^{\Theta(\sqrt{\log n \log \log n})}\)
rounds in \cite{tulsiani2009csp}, and others in
\cite{au2011complexity}, e.g., \(\Theta(\sqrt{n})\)
rounds of Lasserre are required for deriving the exact optimum.

For \IndependentSet{G}, the base functions of the Lasserre hierarchy
are the indicator functions \(Y_{v}\)
that a vertex \(v\) is contained in a feasible solution
(which is an independent set),
i.e., \(Y_{v}(I) \coloneqq \chi(v \in I)\).

\begin{theorem}
  \label{thm:independent-set_Lasserre}
  For any small enough \(\gamma > 0\)
  there are infinitely many \(n\),
  such that there is a graph \(G\) with \(n\) vertices
  with the largest independent set of \(G\)
  having size \(\alpha(G) = O(n^{\gamma})\)
  but there is a \(\Omega(n^{\gamma})\)-round Lasserre solution
  of size \(\Theta(n)\), i.e., the integrality gap is
  \(n^{1-\gamma}\). However \(\fc(\IndependentSet{G}, 2\sqrt{n}) \leq 3n + 1\).
\begin{proof}
The statement \(\fc(\IndependentSet{G}, 2\sqrt{n}) \leq 3n + 1\) is
\cite[Lemma~5.2]{bazzi2015no}. For the integrality gap construction,
we apply Theorem~\ref{thm:hardness-k-CSP} with the following choice of
parameters.  We shall use \(N\) for the number of variables,
as \(n\) will be the number of vertices of \(G\).
The parameters \(q\) and \(\varepsilon\) are fixed to arbitrary
values.
The parameter \(\kappa\) is chosen close to \(1\),
and \(\delta\) is chosen to be a large constant;
the exact values will be determined later.
The number of variables \(N\) will vary, but will be large enough
depending on the parameters already chosen.
The parameters \(\beta\) and \(k\) are chosen so that the required
lower and upper bounds on \(\beta\) are approximately the same:
\begin{align}
  k
  &
 \begin{aligned}[t]
   &
  \coloneqq
  \left\lfloor
    \frac{(1 - \kappa) (\delta - 1) \log N
      - \Theta(\delta \log \log N)}
    {\log q}
  \right\rfloor
  \\
  &
  =
  \frac{(1 - \kappa) (\delta - 1) \log N
    - \Theta(\delta \log \log N)}
  {\log q}
  =
  \Theta(\log N)
 \end{aligned}
  \\
  \beta
  &
  \coloneqq
  \frac{1}{N} \left\lceil
    \frac{6 N q^{k} \ln q}{\varepsilon^{2}}
  \right\rceil
  =
  q^{k + o(1)}
  =
  N^{(1 - \kappa) (\delta - 1) - \Theta(\delta \log \log N / \log N)}
  .
\end{align}
Thus \(\beta \geq (6 q^{k} \ln q) / \varepsilon^{2}\),
and for large enough \(N\),
we also have
\[\beta \leq N^{(1 - \kappa) (\delta - 1)} /
(10^{8 (\delta - 1)} k^{2 \delta + 0.75}).\]
(The role of the term \(\Theta(\delta \log \log N)\) in \(k\)
is ensuring this upper bound.
Rounding ensures that \(k\) and \(\beta N\) are integers.)
By the theorem,
there is a \(k\)-CSP \(\mathfrak{I}\) on \(N\) variables
\(x_{1}\), \dots, \(x_{N}\)
and clauses \(C_{1}, \dotsc, C_{m}\) coming from a predicate \(P\)
such that \(\OPT{\mathcal{I}} = O((1 + \varepsilon) / q^{k})\)
and there is a pseudoexpectation \(\pseudoexpectation_{\mathcal{I}}\)
of degree at least \(\eta N / 16\)
with \(\pseudoexpectation_{\mathcal{I}}(\val_{\mathcal{I}}) = 1\).
Here
\begin{align}
  m
  &
  \coloneqq
  \beta N
  =
  N^{(1 - \kappa \pm o(1)) (\delta - 1)}
  ,
  \\
  \eta N / 16
  &
  =
  N^{\kappa \pm o(1)}
  .
\end{align}

Let \(a\) denote the number of satisfying partial assignments of
\(P\).
A uniformly random assignment satisfies an \(a / q^{k}\) fraction of
the clauses in expectation, therefore
\(a / q^{k} \leq \OPT{\mathcal{I}} = O((1 + \varepsilon) / q^{k})\),
i.e., \(a = \Theta(1 + \varepsilon)\).

Let \(G\) be the conflict graph of \(\mathcal{I}\),
i.e., the vertices of \(G\) are pairs \((i, s)\)
with \(i \in [m]\) and \(s\) a satisfying partial assignment \(s\)
of clause \(C_{i}\)
with domain the set of free variables of \(C_{i}\).
Two pairs \((i, s)\) and \((j, t)\) assignments are
adjacent as vertices of \(G\)
if and only if the partial assignments \(s\) and \(t\) conflict,
i.e., \(s(x_{j}) \neq t(x_{j})\) for some variable \(x_{j}\)
on which both \(s\) and \(t\) are defined.
Thus \(G\) has
\begin{equation}
  n \coloneqq a m = N^{(1 - \kappa) (\delta - 1) \pm o(1)}
\end{equation}
vertices.

Given an assignment \(t \colon \{x_{1}, \dotsc, x_{N}\} \to [q]\)
we define the independent set \(t^{*}\) of \(G\) as the set of
partial assignments \(s\) compatible with \(t\).
(Obviously, \(t^{*}\) is really an independent set.)
This provides a mapping \(*\) from the set of assignments
of the \(x_{1}\), \dots, \(x_{N}\) to
the set of independent set of \(G\).
Clearly, \(\val_{G}(t^{*}) = m \val_{\mathcal{I}}(t)\),
as \(t^{*}\) contains one vertex per clause satisfied by \(t\).
It is easy to see that every independent set \(I\) of \(G\)
is a subset of some \(t^{*}\), and hence
\begin{equation}
  \label{eq:independent-set-Lasserre-sound}
  \OPT{G} = m \OPT{\mathcal{I}} = m O((1 + \varepsilon) / q^{k})
  =
  O(N)
  =
  O(n^{1 / [(1 - \kappa) (\delta - 1) \pm o(1)]})
  .
\end{equation}

We define a pseudoexpectation \(\pseudoexpectation_{G}\)
of degree \(\eta N / 16 k \)
for \(G\) as a composition of \(*\) and the pseudoexpectation
\(\pseudoexpectation_{\mathcal{I}}\) of
the CSP instance \(\mathcal{I}\):
\begin{equation}
  \label{eq:independent-set-Lasserre-reduction}
  \pseudoexpectation_{G}(F) \coloneqq
  \pseudoexpectation_{\mathcal{I}}(F \circ *)
  .
\end{equation}
Recall that \(X_{x_{j} = b}\) is the indicator that \(b\)
is assigned to the variable \(x_{j}\),
and \(Y_{(i, s)}\) is the indicator that \((i, s)\) is part of the
independent set.
Note that for \(s \in V(G)\), we have
\(Y_{(i, s)} \circ * = \prod_{x_{j} \in \dom s} X_{x_{j} = s(x_{j})}\)
is of degree at most \(k\), and therefore
\(\deg(F \circ *) \leq k \deg F\),
showing that \(\pseudoexpectation_{G}\) is well-defined.
Clearly \(\pseudoexpectation_{G}\) is a pseudo-expectation,
as so is \(\pseudoexpectation_{\mathcal{I}}\).

Now, letting \(s \sim C_{i}\) denote that \(s\)
is a satisfying partial assignment for \(C_{i}\):
\begin{equation}
  \val_{G} \circ *
  =
  \sum_{(i, s) \in V(G)} Y_{(i, s)} \circ *
  =
  \sum_{i \in [m]}
  \sum_{s \sim C_{i}} \prod_{x_{j} \in \dom s} X_{x_{j} = s(x_{j})}
  =
  \sum_{i \in [m]} C_{i} = m \val_{\mathcal{I}}
  ,
\end{equation}
and hence
\begin{equation}
  \label{eq:independent-set-Lasserre-complete}
  \pseudoexpectation_{G}(\val_{G}) =
  m \cdot
  \pseudoexpectation_{\mathcal{I}}(\val_{\mathcal{I}})
  = m
  = n / a = \Theta(n)
  ,
\end{equation}
showing \(\SOS_{\eta N / 16 k}(G) \geq m\).
The number of rounds is
\begin{equation}
  \label{eq:independent-set-Lasserre-rounds}
  \eta N / 16 k =
  n^{[\kappa \pm o(1)]/[(1 - \kappa) (\delta - 1) \pm o(1)]}
  .
\end{equation}
From Equations~\eqref{eq:independent-set-Lasserre-sound},
\eqref{eq:independent-set-Lasserre-complete} and
\eqref{eq:independent-set-Lasserre-rounds}
the theorem follows with an appropriate choice of \(\kappa\) and
\(\delta\) depending on \(\gamma\).
\end{proof}
\end{theorem}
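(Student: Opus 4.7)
The LP upper bound $\fc(\IndependentSet{G}, 2\sqrt{n}) \le 3n+1$ is not something I would re-prove; I would simply cite \cite[Lemma~5.2]{bazzi2015no}. All the effort goes into producing the graph $G$ and exhibiting the Lasserre solution, so let me concentrate there. My plan is to use the standard FGLSS-style reduction from a hard $k$-CSP (supplied by Theorem~\ref{thm:hardness-k-CSP}) to \IndependentSet, but to carry the reduction out inside the Lasserre world by transporting a CSP pseudoexpectation onto the conflict graph via the reduction map.

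First I would invoke Theorem~\ref{thm:hardness-k-CSP} with fixed $q$ and $\varepsilon$, choosing $\delta$ a large constant and $\kappa$ close to $1$, and picking $k$ and $\beta$ so that the permitted range $(6q^{k}\ln q)/\varepsilon^{2} \le \beta \le N^{(1-\kappa)(\delta-1)}/(10^{8(\delta-1)}k^{2\delta+0.75})$ is essentially tight, giving $k = \Theta(\log N)$ and $m = \beta N = N^{(1-\kappa)(\delta-1)\pm o(1)}$ many clauses. This yields a CSP instance $\mathcal{I}$ on $N$ variables with $\OPT{\mathcal{I}} = O((1+\varepsilon)/q^{k})$ and a pseudoexpectation $\pseudoexpectation_{\mathcal{I}}$ of degree $\Theta(\eta N)$ achieving value $1$.

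Next I would take $G$ to be the usual conflict graph: vertices are pairs $(i,s)$ where $s$ is a satisfying partial assignment to the free variables of clause $C_i$, with an edge whenever two partial assignments disagree on a shared variable. Counting satisfying partial assignments of $P$ as $a = \Theta(1+\varepsilon)$ gives $n = am$ vertices. The map $t \mapsto t^{*} = \{(i,s) : s \text{ compatible with } t\}$ satisfies $\val_{G}(t^{*}) = m \val_{\mathcal{I}}(t)$ and every independent set of $G$ is contained in some such $t^{*}$, so $\OPT{G} = m \OPT{\mathcal{I}} = O(N) = O(n^{1/((1-\kappa)(\delta-1)\pm o(1))})$, which is $O(n^{\gamma})$ for an appropriate parameter choice.

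For the Lasserre lower bound, I would define $\pseudoexpectation_{G}(F) \coloneqq \pseudoexpectation_{\mathcal{I}}(F\circ *)$. The key observation is that each base indicator $Y_{(i,s)}$ pulls back, under the reduction, to $\prod_{x_j \in \dom s} X_{x_j = s(x_j)}$, which is a monomial of degree at most $k$ in the CSP base functions; hence a polynomial of degree $d$ in the $Y_v$ pulls back to degree at most $kd$ in the $X_{x_j = b}$. Choosing $d = \eta N/(16k)$ keeps us inside the degree budget of $\pseudoexpectation_{\mathcal{I}}$, and linearity, normalization, and positivity transfer trivially from $\pseudoexpectation_{\mathcal{I}}$ to $\pseudoexpectation_{G}$. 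A direct expansion shows $\val_{G}\circ * = m\val_{\mathcal{I}}$, so $\pseudoexpectation_{G}(\val_{G}) = m = \Theta(n)$, which yields $\SOS_{d}(G) \ge m$ at level $d = \eta N/(16k) = n^{\kappa/((1-\kappa)(\delta-1))\pm o(1)}$.

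The only real obstacle I anticipate is balancing the exponents: I need $\OPT{G} \le n^{\gamma}$ (i.e.\ $1/((1-\kappa)(\delta-1)) \le \gamma$) while simultaneously getting $\Omega(n^{\gamma})$ Lasserre rounds (i.e.\ $\kappa/((1-\kappa)(\delta-1)) \ge \gamma$); since $\kappa$ can be taken as close to $1$ as I wish, both are achievable by first fixing $\delta$ with $(1-\kappa)(\delta-1) = 1/\gamma$ and then pushing $\kappa$ up, and the little $o(1)$ slack coming from the $\log\log N$ correction in $k$ can be absorbed by taking $N$ large. Everything else is routine.
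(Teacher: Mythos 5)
Your proposal is correct and follows essentially the same route as the paper: the same parameter tuning in Theorem~\ref{thm:hardness-k-CSP}, the same conflict-graph (FGLSS) construction, the same pullback $\pseudoexpectation_{G}(F) = \pseudoexpectation_{\mathcal{I}}(F \circ *)$ with the degree-inflation factor $k$, and the same exponent balancing via $\kappa$ and $\delta$. Nothing of substance differs from the paper's argument.
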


\section{From Sherali–Adams reductions to general LP reductions}
\label{sec:extend-Sherali-Adams}

There are several reductions between Sherali–Adams solutions
of problems in the literature.
Most of these reductions do not make essential use of
the Sherali–Adams hierarchy.
The reduction mechanism introduced in Section~\ref{sec:reductions}
allows us to directly execute them in
the linear programming framework.
As an example, we extend the Sherali–Adams reductions
from \UniqueGames to
various kinds of CSPs from \cite{bazzi2015no} to the general LP case.
These CSPs are used in \cite{bazzi2015no} as
intermediate problems for reducing to non-uniform \VertexCover and
\hyperVertexCover{Q},
hence composing the reductions here
with the ones in \cite{bazzi2015no} yield
direct reductions from \(\UniqueGames\) to \VertexCover and
\hyperVertexCover{Q}.

\subsection{Reducing \Problem{UniqueGames} to \OneFreeCSP{}}
\label{sec:reduce-UG-1FCSP}

We demonstrate the generalization to LP reductions by transforming the
Sherali–Adams reduction from  \UniqueGames to \OneFreeCSP{} in
\cite{bazzi2015no}.

\begin{definition}
  \label{def:1F-CSP}
  A \emph{one-free bit} CSP (\OneFreeCSP{} for short)
  is a CSP
  where every clause has exactly two satisfying assignments over its
  free variables.
\end{definition}

\begin{theorem}
  \label{thm:1F-CSP}
  With small numbers \(\eta, \varepsilon, \delta > 0\)
  positive integers \(t\), \(q\), \(\Delta\)
  as in \cite[Lemma~3.4]{bazzi2015no},
  we have for any \(0 < \zeta < 1\) and \(n\) large enough
  \begin{equation}
    \fc(\UniqueGames[\Delta]{n}{q}, 1 - \zeta, \delta)
    - n \Delta^{t} q^{t+1}
    \leq
    \fc(\OneFreeCSP, (1 - \varepsilon) (1 - \zeta t), \eta)
  \end{equation}
\begin{proof}
Let \(V = \{0, 1\} \times [n]\)
denote the common set of vertices
of all the instances of \UniqueGames[\Delta]{n}{q}.
The variables of \OneFreeCSP are chosen to be
all the \(\sprod{v}{z}\)
for \(v \in V\) and \(z \in \{-1, +1\}^{[q]}\).
(Here \(\sprod{v}{z}\) stands for the pair of \(v\) and \(z\).)
Given a \UniqueGames[\Delta]{n}{q} instance \((G, w, \pi)\),
we define an instance \((G, w, \pi)^{*}\) of \OneFreeCSP as follows.

Let \(v\) be any vertex of \(G\),
and let \(u_{1}\), \dots, \(u_{t}\) be
vertices adjacent to \(v\)
(allowing the same vertex to appear multiple times).
Furthermore, let \(x \in \{-1, +1\}^{[q]}\)
and let \(S\) be a subset of \([q]\) of size \((1 - \varepsilon) q\).
We introduce the clause \(C(v, u_{1}, \dotsc, u_{t}, x, S)\)
as follows, which is an approximate test for the edges
\(\{v, u_{1}\}\), \dots, \(\{v, u_{t}\}\) to be correctly labelled.
\begin{equation}
  \begin{aligned}
    C(v, u_{1}, \dotsc, u_{t}, x, S) &\coloneqq
    \exists b \in \{-1, +1\} \; \forall i \in [t]
    \; \forall z \in \{-1,+1\}^{[q]}
    \\
    &
    \begin{cases}
      \sprod{u_{i}}{z} = b
      &
      \text{if }
      \pi_{v, u_{i}}(z) \restriction S =
      x \restriction S
      ,
      \\
      \sprod{u_{i}}{z} = -b
      &
      \text{if }
      \pi_{v, u_{i}}(z) \restriction S =
      -x \restriction S
      .
    \end{cases}
  \end{aligned}
\end{equation}
We will define a probability distribution on clauses, and
the \emph{weight} of a clause will be its probability.

First we define a probability distribution \(\mu_{1}\)
on edges of \(G\)
proportional to the weights.
More precisely, we define a distribution on pairs of
adjacent vertices \((\VAR{v}, \VAR{u})\):
\begin{equation}
  \probability{\{\VAR{v}, \VAR{u}\} = \{v, u\}}
  \coloneqq \frac{w(v, u)}{\sum_{i,j} w(i,j)}
  ,
\end{equation}
therefore for the objective of \UniqueGames[\Delta]{n}{q} we obtain
\begin{equation}
  \label{eq:Unique-Games-expectation}
  \val^{\UniqueGames[\Delta]{n}{q}}_{(G, w, \pi)}(s) =
   \expectation{s(\VAR{v})
    = \pi_{\VAR{v}, \VAR{u}}(s(\VAR{u}))
  }
\end{equation}
Let \(\mu_{1}^{v}\) denote the marginal
of \(\VAR{v}\) in the distribution \(\mu_{1}\),
and \(\mu_{1}^{u \mid v}\) denote the conditional distribution
of \(\VAR{u}\) given \(\VAR{v} = v\).

Now we define a distribution \(\mu_{t}\)
on vertices \(\VAR{v}\), \(\VAR{u_{1}}\), \dots, \(\VAR{u_{t}}\)
such that \(\VAR{v}\) has the marginal distribution
\(\mu_{1}^{v}\), and given \(\VAR{v} = v\),
the vertices \(\VAR{u_{1}}\), \dots, \(\VAR{u_{t}}\)
are chosen mutually independently,
each with the conditional distribution \(\mu_{1}^{u \mid v}\).
Thereby every pair \((\VAR{v}, \VAR{u_{i}})\)
has marginal distribution \(\mu_{1}\).

Finally,
\(\VAR{x} \in \{-1, +1\}^{[q]}\)
and
\(\VAR{S} \subseteq [q]\)
are chosen randomly and independently of each other and the
vertices \(\VAR{v}\), \(\VAR{u_{1}}\), \dots, \(\VAR{u_{t}}\),
subject to the restriction \(\size{\VAR{S}} = (1 - \varepsilon) q\)
on the size of \(\VAR{S}\).
This finishes the definition of the distribution of clauses,
in particular,
\begin{equation}
  \label{eq:1F-CSP-expectation}
  \val^{\OneFreeCSP}_{(G, w, \pi)^{*}} (p)
  = \expectation{
    C(\VAR{v}, \VAR{u_{1}}, \dotsc, \VAR{u_{t}}, \VAR{x}, \VAR{S})
    [p]}
\end{equation}
for all evaluation \(p\).

Feasible solutions are translated via
\begin{equation}
  \label{eq:1F-CSP_solution}
  s^{*}(\sprod{v}{z}) \coloneqq z_{s(v)}.
\end{equation}

Soundness of the reduction,
i.e., \eqref{eq:red-simple-sound}
follows from \cite[Lemma~3.4]{bazzi2015no}.

Completeness, i.e., \eqref{eq:red-simple-complete},
easily follows from an extension of the argument
in \cite[Lemma~3.5]{bazzi2015no}.  The main estimation comes
from the fact that the clause \(C(v, u_{1}, \dotsc, u_{t}, x, S)\)
is satisfied if the edges
\(\{v, u_{1}\}\), \dots, \(\{v, u_{t}\}\)
are all correctly labeled and the label \(s(v)\) of \(v\)
lies in \(S\):
\begin{equation}
  C(v, u_{1}, \dotsc, u_{t}, x, S) [s^{*}] \geq
  \chi[s(v) = \pi_{v,u_{i}}(s(u_{i})), \forall i \in [t];
  s(v) \in S]
  .
\end{equation}

Let us fix the vertices \(v\), \(u_{1}\), \dots, \(u_{k}\)
and take expectation over \(x\) and \(S\):
\begin{equation}
 \begin{split}
  \expectation(\VAR{x}, \VAR{S}){
    C(v, u_{1}, \dotsc, u_{t}, \VAR{x}, \VAR{S}) [s^{*}]}
  &
  \geq
  (1 - \varepsilon) \chi[s(v) = \pi_{v,u_{i}}(s(u_{i})),
  \forall i \in [t]]
  \\
  &
  \geq
  (1 - \varepsilon)
  \left(
    \sum_{i \in [t]}
    \chi[s(v) = \pi_{v,u_{i}}(s(u_{i}))]
    - t + 1
  \right)
  .
 \end{split}
\end{equation}
We build a nonnegative matrix \(M\) out of the difference of the two
sides of the inequality.
The difference depends only partly on \(s\):
namely, only on the values of \(s\) on the vertices
\(v\), \(u_{1}\), \dots, \(u_{t}\).
Therefore we also build a smaller variant \(\widetilde{M}\) of \(M\)
making this dependence explicit,
which will be the key to establish low LP-rank later:
\begin{equation}
 \begin{split}
   &
  \widetilde{M}_{v, u_{1}, \dotsc, u_{t}}((G, w, \pi),
  s \restriction \{v, u_{1}, \dotsc, u_{t}\})
  =
  M_{v, u_{1}, \dotsc, u_{t}}((G, w, \pi), s)
  \\
  &
  \phantom{M}
  \coloneqq
  \expectation(\VAR{x}, \VAR{S})
  {C(v, u_{1}, \dotsc, u_{t}, \VAR{x}, \VAR{S}) [s^{*}]}
  -
  (1 - \varepsilon)
  \left(
    \sum_{i \in [t]}
    \chi[s(v) = \pi_{v,u_{i}}(s(u_{i}))]
    - t + 1
  \right)
  \\
  &
  \phantom{M}
  \geq
  0
  .
 \end{split}
\end{equation}

Taking expectation provides
\begin{equation}
 \begin{split}
  \val^{\OneFreeCSP}_{(G, w, \pi)^{*}} (s^{*})
  &
  =
  \expectation{C(\VAR{v}, \VAR{u_{1}}, \dotsc, \VAR{u_{t}},
    \VAR{x}, \VAR{S}) [s^{*}]}
  \\
  &
  =
  (1 - \varepsilon)
  \left(
    \sum_{i \in [t]}
    \probability{s(\VAR{v}) =
      \pi_{\VAR{v}, \VAR{u_{i}}}(s(\VAR{u_{i}}))}
    - t + 1
  \right)
  +
  \expectation{
    M_{\VAR{v}, \VAR{u_{1}}, \dotsc, \VAR{u_{t}}}((G, w, \pi), s)}
  \\
  &
  =
  (1 - \varepsilon)
  (t \val^{\UniqueGames{n}{q}}_{(G, w, \pi)} (s) - t + 1)
  + \expectation{%
    M_{\VAR{v}, \VAR{u_{1}}, \dotsc, \VAR{u_{t}}}((G, w, \pi), s)},
 \end{split}
\end{equation}
and hence after rearranging we obtain,
no matter what \(\zeta\) is
\begin{equation}
\label{eq:finalMap1F}
  1 - \zeta - \val^{\UniqueGames{n}{q}}_{(G, w, \pi)} (s)
  = \frac{(1 - \varepsilon) (1 - \zeta t)
    - \val^{\OneFreeCSP}_{(G, w, \pi)^{*}} (s^{*}) +
    \expectation{M_{\VAR{v}, \VAR{u_{1}}, \dotsc, \VAR{u_{t}}}((G, w, \pi), s)}}
  {t (1 - \varepsilon)}.
\end{equation}
(Note that Equation~\eqref{eq:finalMap1F} is not affine
due to the last term in the numerator.)

Here the last term in the numerator is the matrix \(M_{2}\)
in the reduction Definition~\ref{def:red-simple}
(up to the constant factor of the denominator).
We show that it has low LP rank:
\begin{multline}
  \expectation{M_{\VAR{v}, \VAR{u_{1}}, \dotsc, \VAR{u_{t}}}((G, w, \pi), s)}
  \\
  =
  \sum_{\substack{v, u_{1}, \dotsc, u_{t} \\
      f \colon \{v, u_{1}, \dotsc, u_{t} \} \to [q]}}
  \left(
    \probability{\VAR{v} = v, \VAR{u_{1}} = u_{1}, \dotsc,
      \VAR{u_{t}} = u_{t}}
    \widetilde{M}_{v, u_{1}, \dotsc, u_{t}}((G, w, \pi), f)
  \right)
  \\
  \cdot
  \chi(f = s \restriction \{v, u_{1}, \dotsc, u_{t}\}),
\end{multline}
 i.e.,  the expectation can be written as the sum of at most \(n \Delta^{t}
q^{t+1}\) nonnegative rank-\(1\) factors. Therefore
the claim follows from Theorem~\ref{thm:red-simple}.
\end{proof}
\end{theorem}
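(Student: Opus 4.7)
The plan is to apply Theorem~\ref{thm:red-simple} via a reduction from \UniqueGames[\Delta]{n}{q} to \OneFreeCSP that mirrors the Sherali--Adams reduction of \cite[Lemmas~3.4--3.5]{bazzi2015no}, but exploits the non-affine freedom supplied by the matrices $M_1$ and $M_2$ in Definition~\ref{def:red-simple}. Concretely, I introduce for every vertex $v \in \{0,1\} \times [n]$ a ``long-code'' block of \OneFreeCSP variables $\sprod{v}{z}$ indexed by $z \in \{-1,+1\}^{[q]}$, and for each tuple $(v, u_1, \dots, u_t)$ with $u_i$ adjacent to $v$, each $x \in \{-1,+1\}^{[q]}$, and each $S \subseteq [q]$ of size $(1-\varepsilon) q$, an approximate consistency clause $C(v, u_1, \dots, u_t, x, S)$ that is satisfied when some $b \in \{-1,+1\}$ makes $\sprod{u_i}{z} = \pm b$ in the prescribed way. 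The clause weights are the probabilities under the natural distribution that first draws $(v, u_1)$ proportional to $w$, then the further neighbors $u_2, \dots, u_t$ i.i.d.\ from the conditional neighbor distribution of $v$, and finally $x$ and $S$ uniformly and independently. A feasible \UniqueGames solution $s$ is mapped to the assignment $s^*(\sprod{v}{z}) \defeq z_{s(v)}$.

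Soundness, i.e., condition~\eqref{eq:red-simple-sound}, is supplied verbatim by \cite[Lemma~3.4]{bazzi2015no}. For completeness, the starting point is the pointwise inequality
\begin{equation*}
  C(v, u_1, \dots, u_t, x, S)[s^*] \geq \chi\bigl[s(v) \in S \text{ and } s(v) = \pi_{v, u_i}(s(u_i)) \text{ for all } i \in [t]\bigr],
\end{equation*}
which follows by taking $b \defeq -x_{s(v)}$ as the witness bit. Averaging over uniform $x$ and $S$ loses a factor of $(1-\varepsilon)$ (the probability that $s(v) \in \VAR{S}$), and the conjunction over $i$ is lower bounded by $\sum_{i \in [t]} \chi[s(v) = \pi_{v, u_i}(s(u_i))] - t + 1$. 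Averaging over the edge-tuple distribution and rearranging with common denominator $t(1-\varepsilon)$ then yields the reduction identity matching Definition~\ref{def:red-simple} with $M_1 \equiv 1/[t(1-\varepsilon)]$ (a rank-$1$ constant matrix) and
\begin{equation*}
  M_2(\mathcal{I}, s) = \frac{1}{t(1-\varepsilon)}\expectation{M_{\VAR{v}, \VAR{u_1}, \dots, \VAR{u_t}}(\mathcal{I}, s)},
\end{equation*}
where $M_{v, u_1, \dots, u_t}$ is the nonnegative slack of the two inequalities above.

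The main obstacle, and where the new non-affine framework pays off, is to bound the LP and nonnegative ranks of $M_2$. The decisive observation is that $M_{v, u_1, \dots, u_t}(\mathcal{I}, s)$ depends on $s$ only through its restriction $s\restriction \{v, u_1, \dots, u_t\}$. Enumerating over this restriction gives the factorization
\begin{equation*}
  M_2(\mathcal{I}, s) = \frac{1}{t(1-\varepsilon)} \sum_{v, u_1, \dots, u_t, f} \probability{\VAR{v} = v,\; \VAR{u_i} = u_i \; \forall i} \cdot \widetilde{M}_{v, u_1, \dots, u_t}(\mathcal{I}, f) \cdot \chi\bigl[s\restriction\{v, u_1, \dots, u_t\} = f\bigr]
\end{equation*}
as a nonnegative sum of at most $n \Delta^t q^{t+1}$ rank-$1$ matrices (the $\mathcal{I}$-factor packages the first two terms, the $s$-factor is the indicator). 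This bounds both $\LPrk M_2$ and $\nnegrk M_2$ by $n \Delta^t q^{t+1}$, and substituting these estimates into Theorem~\ref{thm:red-simple}, together with the $O(1)$ contribution coming from $M_1$, yields the claimed inequality.
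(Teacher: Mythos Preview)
Your proposal is correct and follows essentially the same route as the paper's proof: the same long-code reduction from \cite{bazzi2015no}, the same pointwise lower bound on the clause value, the same averaging to extract the $(1-\varepsilon)$ factor and the union-bound-style inequality $\chi[\forall i\, \cdots] \geq \sum_i \chi[\cdots] - t + 1$, the identification of $M_1$ as the constant $1/[t(1-\varepsilon)]$, and the rank bound for $M_2$ obtained by conditioning on $s\restriction\{v,u_1,\dots,u_t\}$ and counting the at most $n\Delta^t q^{t+1}$ resulting rank-one terms. Nothing substantive differs from the paper.
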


\subsection{Reducing \Problem{UniqueGames} to \NotEqualCSP{Q}}
\label{sec:not-equal-CSP}

\begin{definition}
  \label{def:not-equal-CSP}
  A \emph{not equal CSP} (\NotEqualCSP{Q} for short)
  is a CSP with value set \(\mathbb{Z}_{Q}\),
  the additive group of integers modulo \(Q\),
  where every clause has the form
  \(\bigland_{i=1}^{k} x_{i} \neq a_{i}\)
  for some constants \(a_{i}\).
\end{definition}

\begin{theorem}
  \label{thm:not-equal-CSP}
  With small numbers \(\eta, \varepsilon, \delta > 0\)
  positive integers \(t\), \(q\), \(\Delta\)
  as in \cite[Lemma~3.4]{bazzi2015no},
  we have for any \(0 < \zeta < 1\) and \(n\) large enough
  \begin{equation}
    \fc(\UniqueGames[\Delta]{n}{q}, 1 - \zeta, \delta)
    - n \Delta^{t} q^{t+1}
    \leq
    \fc(\NotEqualCSP{Q}, (1 - \varepsilon) (1 - 1/q)
    (1 - \zeta t), \eta)
  \end{equation}
\begin{proof}
  The proof is similar to that of Theorem~\ref{thm:1F-CSP}, with the
  value set \(\{-1, +1\}\)
  consistently replaced with \(\mathbb{Z}_{Q}\).
  Let \(V = \{0, 1\} \times [n]\)
  again denote the common set of vertices of all the instances of
  \UniqueGames{n}{q}.  The variables of \NotEqualCSP{Q} are chosen to
  be all the \(\sprod{v}{z}\)
  for \(v \in V\)
  and \(z \in \mathbb{Z}_{Q}^{[q]}\).

  To simplify the argument,
  we now introduce additional \emph{hard constraints}, i.e., which
  have to be satisfied by any assignment.
  This can be done without loss of generality as these
  hard constraints can be eliminated by using only one variable from
  every coset of \(\mathbb{Z}_{Q} \allOne\)
  and substituting out the other variables. The resulting CSP will be
  still a not equal CSP, however this would break the natural
  symmetry of the structure. Let
  \(\allOne \in \mathbb{Z}_{Q}^{[q]}\)
  denote the element with all coordinates \(1\).
  We introduce the \emph{hard constraints}
\begin{equation}
  \label{eq:not-equal-CSP_hard-constraints}
  \sprod{v}{z + \lambda \allOne} =
  \sprod{v}{z} + \lambda
  \qquad
  (\lambda \in \mathbb{Z}_{Q}).
\end{equation}

Given a \UniqueGames[\Delta]{n}{q} instance \((G, w, \pi)\),
we now define an instance \((G, w, \pi)^{*}\) of \NotEqualCSP{Q}
as follows.
Let \(v\) be any vertex of \(G\),
and let \(u_{1}\), \dots, \(u_{t}\) be
vertices adjacent to \(v\)
(allowing the same vertex to appear multiple times).
Furthermore, let \(x \in \mathbb{Z}_{Q}^{[q]}\)
and let \(S\) be a subset of \([q]\) of size \((1 - \varepsilon) q\).
We introduce the clause \(C(v, u_{1}, \dotsc, u_{t}, x, S)\)
as follows, which is once more an approximate test for the edges
\(\{v, u_{1}\}\), \dots, \(\{v, u_{t}\}\) to be correctly labeled.
\begin{equation}
  \begin{aligned}
    C(v, u_{1}, \dotsc, u_{t}, x, S) &\coloneqq
    \forall i \in [t]
    \forall z \in \mathbb{Z}_{Q}^{[q]}
    \\
    &
      \sprod{u_{i}}{z} \neq 0
      \qquad
      \text{if }
      \pi_{v, u_{i}}(z) \restriction S =
      x \restriction S
      .
  \end{aligned}
\end{equation}
The \emph{weight} of a clause is defined as its probability
using the same distribution
on vertices \(\VAR{v}\), \(\VAR{u_{1}}\), \dots, \(\VAR{u_{t}}\)
as in Theorem~\ref{thm:1F-CSP},
and randomly and independently chosen
\(\VAR{x} \in \mathbb{Z}_{Q}^{[q]}\)
and \(S \subseteq [q]\) with \(\size{S} = \varepsilon [q]\).
This is the analogue of the distribution
in Theorem~\ref{thm:1F-CSP},
in particular,
\begin{align}
  \label{eq:not-equal-CSP_Unique-Games-expectation}
  \val^{\UniqueGames[\Delta]{n}{q}}_{(G, w, \pi)}(s)
  &
  = \expectation{s(\VAR{v})
    = \pi_{\VAR{v}, \VAR{u_{i}}}(s(\VAR{u_{i}}))
  }
  &
  (i &\in [t])
  ,
  \\
  \val^{\OneFreeCSP}_{(G, w, \pi)^{*}} [p]
  &
  = \expectation{
    C(\VAR{v}, \VAR{u_{1}}, \dotsc, \VAR{u_{t}}, \VAR{x}, \VAR{S})
    [p]}
  .
\end{align}

Feasible solutions are translated via
\begin{equation}
  \label{eq:not-equal-CSP_solution}
  s^{*}(\sprod{v}{z}) \coloneqq z_{s(v)}
  ,
\end{equation}
which clearly satisfy the hard constraints
\eqref{eq:not-equal-CSP_hard-constraints}.

The reduction is sound by \cite[Lemma~6.9]{bazzi2015no}.
For completeness, we follow a similar approach to
\cite[Lemma~6.10]{bazzi2015no} and of Theorem~\ref{thm:1F-CSP}.
The starting point is that
given a labeling \(s\) of \((G, w, \pi)\)
a clause \(C(v, u_{1}, \dotsc, u_{t}, x, S)\)
is satisfied if the edges \(\{v, u_{1}\}\), \dots, \(\{v, u_{t}\}\)
are correctly labeled, \(s(v) \in S\), and \(x_{s(v)} \neq 0\):
\begin{equation}
  C(v, u_{1}, \dotsc, u_{t}, x, S) [s^{*}] \geq
  \chi[s(v) = \pi_{v,u_{i}}(s(u_{i})), \forall i \in [t];
  x_{s(v)} \neq 0;
  s(v) \in S]
  .
\end{equation}
Fixing the vertices \(v\), \(u_{1}\), \dots, \(u_{k}\)
and taking expectation over \(x\) and \(S\) yields:
\begin{equation}
  \expectation(\VAR{x}, \VAR{S}){
    C(v, u_{1}, \dotsc, u_{t}, \VAR{x}, \VAR{S}) [s^{*}]}
  \geq
  (1 - \varepsilon) (1 - 1/q) \chi[s(v) = \pi_{v,u_{i}}(s(u_{i})),
  \forall i \in [t]]
  ,
\end{equation}
where the term \((1 - 1/q)\) arises as the probability of
\(\VAR{x}_{s(v)} \neq 0\).
The rest of the proof is identical to that of Theorem
\ref{thm:1F-CSP},
with \(1 - \varepsilon\) replaced with
\((1 - \varepsilon) (1 - 1/q)\).
\end{proof}
\end{theorem}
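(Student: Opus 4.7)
My plan is to mirror the reduction in Theorem~\ref{thm:1F-CSP} with the value set \(\{-1,+1\}\) replaced by the additive group \(\mathbb{Z}_{Q}\), and to import soundness directly from \cite[Lemma~6.9]{bazzi2015no}. The target instance \((G, w, \pi)^{*}\) of \NotEqualCSP{Q} uses variables \(\sprod{v}{z}\) for \(v \in V = \{0,1\} \times [n]\) and \(z \in \mathbb{Z}_{Q}^{[q]}\), together with the shift constraints \(\sprod{v}{z + \lambda \allOne} = \sprod{v}{z} + \lambda\) for \(\lambda \in \mathbb{Z}_{Q}\); these hard constraints preserve the combinatorial symmetry and can be eliminated without losing the not-equal form by keeping only one representative per \(\mathbb{Z}_{Q} \allOne\)-coset.

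For every tuple \((v, u_{1}, \dotsc, u_{t})\) with \(u_{i}\) adjacent to \(v\), every \(x \in \mathbb{Z}_{Q}^{[q]}\), and every \(S \subseteq [q]\) with \(\size{S} = (1 - \varepsilon) q\), I would introduce the clause \(C(v, u_{1}, \dotsc, u_{t}, x, S)\) stating \(\sprod{u_{i}}{z} \neq 0\) for every \(i\) and every \(z\) with \(\pi_{v, u_{i}}(z) \restriction S = x \restriction S\); this is an approximate correctness test for the \(t\) edges \(\{v, u_{i}\}\) being correctly labelled. The weights follow the same product distribution as in Theorem~\ref{thm:1F-CSP}: sample \((\VAR{v}, \VAR{u_{1}}, \dotsc, \VAR{u_{t}})\) with \(\VAR{v}\) distributed as \(\mu_{1}^{v}\) and the \(\VAR{u_{i}}\) conditionally i.i.d.\ from \(\mu_{1}^{u \mid v}\), independently of uniform \(\VAR{x}\) and uniform \(\VAR{S}\). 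Solutions lift via \(s^{*}(\sprod{v}{z}) \coloneqq z_{s(v)}\), which satisfies the shift constraints automatically.

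For completeness, the decisive pointwise bound is
\begin{equation*}
  C(v, u_{1}, \dotsc, u_{t}, x, S)[s^{*}] \geq \chi\bigl[s(v) = \pi_{v, u_{i}}(s(u_{i})) \ \forall i;\ s(v) \in S;\ x_{s(v)} \neq 0\bigr],
\end{equation*}
because if all three conditions hold then no \(z\) passing the test \(\pi_{v,u_{i}}(z) \restriction S = x \restriction S\) can witness \(\sprod{u_{i}}{z} = 0\). Averaging over \(\VAR{x}\) and \(\VAR{S}\), the events \(s(v) \in \VAR{S}\) and \(\VAR{x}_{s(v)} \neq 0\) are independent with probabilities \(1 - \varepsilon\) and \(1 - 1/q\), yielding a factor \((1 - \varepsilon)(1 - 1/q)\) in front of the correct-labelling indicator. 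The rest of the argument then follows Theorem~\ref{thm:1F-CSP} verbatim: I apply \(\chi[\forall i] \geq \sum_{i} \chi_{i} - (t - 1)\), take full expectation, and rearrange to obtain an identity of the form \eqref{eq:red-simple-complete} with \((1 - \varepsilon)\) replaced throughout by \((1 - \varepsilon)(1 - 1/q)\). The nonnegative slack becomes the matrix \(M_{2}\), and by the same argument as in Theorem~\ref{thm:1F-CSP} it depends on \(s\) only through its restriction to \(\{v, u_{1}, \dotsc, u_{t}\}\), so it factors as a sum of at most \(n \Delta^{t} q^{t+1}\) nonnegative rank-\(1\) matrices. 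Theorem~\ref{thm:red-simple} with constant \(M_{1}\) then yields the claim.

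The main obstacle is verifying the key pointwise inequality in the presence of the \(\mathbb{Z}_{Q}\)-shift constraints: one has to check that conditioning on correct labelling, on \(s(v) \in S\), and on \(x_{s(v)} \neq 0\) really does force every disjunct of \(C(v, u_{1}, \dotsc, u_{t}, x, S)\) to hold under \(s^{*}\). The uniformity of \(\VAR{x}\) over all of \(\mathbb{Z}_{Q}^{[q]}\) is what decouples the \(\neq 0\) test from the labelling, and this is where I would focus the verification effort.
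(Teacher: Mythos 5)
Your proposal is correct and follows essentially the same route as the paper's proof: the same \(\mathbb{Z}_{Q}\)-valued variables with shift constraints, the same clauses \(C(v, u_{1}, \dotsc, u_{t}, x, S)\) and clause distribution, the same lifting \(s^{*}(\sprod{v}{z}) = z_{s(v)}\), soundness imported from \cite[Lemma~6.9]{bazzi2015no}, and completeness via the same pointwise bound whose expectation yields the factor \((1 - \varepsilon)(1 - 1/q)\), after which the argument of Theorem~\ref{thm:1F-CSP} is reused verbatim, including the low-rank factorization of the slack term and Theorem~\ref{thm:red-simple}.
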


\section{A small uniform LP over graphs with bounded
  treewidth}
\label{sec:small-uniform-lp}

Complementing the results from before, we now present a
Sherali–Adams like \emph{uniform} LP formulation that solves
\Matching, \IndependentSet, and \VertexCover over graphs of bounded
treewidth.  The linear program has size roughly \(O(n^k)\),
where \(n\)
is the number of vertices and \(k\)
is the upper bound on treewidth.  Here uniform means that
\emph{the same linear program} is used for all graphs of bounded
treewidth with the same number of vertices, in particular, the graph
and weighting are encoded solely in the objective function we
optimize.  This complements recent work \cite{Kolman15}, which
provides a linear program of linear size for \emph{a fixed graph} for
weighted versions of problems expressible in monadic second order
logic. Our approach is also in some sense complementary to
\cite{bienstock2015lp} where small approximate LP formulations are
obtained for problems where the intersection graph of the constraints
has bounded treewidth; here the underlying graph of the problem is of
bounded treewidth. 

Bounded treewidth graphs are of interest, as many NP-hard problems
can be solved in polynomial time when restricting to graphs of bounded
treewidth.  The celebrated Courcelle's Theorem \cite{courcelle90}
states that
any graph property definable by a monadic second order formula
can be decided for bounded treewidth graphs
in time linear in the size of the graph
(but not necessarily polynomial in the treewidth or the size of the
formula).

The usual approach to problems for graphs of bounded treewidth
is to use dynamic programming to select and patch together
the best partial solutions defined on small portions of the graph.
Here we model this in a linear program, with the unique feature
that it does not depend on any actual tree decomposition.
We call problems \emph{admissible} which have the necessary additional
structure, treating partial solutions and restrictions
in an abstract way.

\begin{definition}[Admissible problems]
 Let \(n\) and \(k\) be positive integers.
 Let \(\mathcal{P} = (\mathcal{S}, \mathfrak{G}_{n, k}, \val)\) be an
 optimization problem
 with instances the set \(\mathfrak{G}_{n, k}\) of all graphs \(G\)
 with \(V(G) \subseteq [n]\) and \(\tw(G) \leq k\).
 The problem \(\mathcal{P}\) is \emph{admissible} if
 \begin{enumerate}
 \item \emph{Partial feasible solutions.}
   There is a set \(\mathcal{S} \subseteq \mathscr{S}\) of
   \emph{partial feasible solutions}
   and a restriction operation \(\restriction\)
   mapping any partial solution \(s\) and a vertex set \(X \subseteq
   [n]\) to a partial solution \(s \restriction X\).
   We assume the identity
   \((s \restriction X) \restriction Y = s \restriction Y\)
   for all vertex sets \(X, Y \subseteq V(G)\)
   with \(Y \subseteq X\)
   and partial solutions \(s \in \mathscr{S}\).
   Let
   \(\mathscr{S}_{X} \coloneqq \set{s \restriction X}{s \in \mathcal{S}}\)
   denote the set of restriction of all feasible solutions to \(X\).
 \item \emph{Locality.}
   The measure \(\val_{G}(s)\) depends only on \(G\)
   and \(s \restriction V (G)\) for a graph
   \(G \in \mathfrak{G}_{n,k}\) and a solution \(s \in \mathcal{S}\).
 \item \emph{Gluing.}
   For any cover \(V(G) = V_{1} \cup \dotsb \cup V_{l}\)
   satisfying \(E[G] = E[V_{1}] \cup \dotsb \cup E[V_{l}]\)
   and any feasible solutions \(\sigma_{1} \in \mathscr{S}_{V_{1}}\),
   \dots, \(\sigma_{l} \in \mathscr{S}_{V_{l}}\)
   satisfying
   \begin{equation}
     \sigma_{i} \restriction V_{i} \cap V_{j}
     =
     \sigma_{j} \restriction V_{i} \cap V_{j}
     \qquad \text{for all \(i \neq j\)}
     ,
   \end{equation}
   there is a unique feasible solution \(s\)
   with \(s \restriction V_{i} = \sigma_{i}\) for all \(i\).
\item \emph{Decomposition.}
Let \(T\) be an arbitrary tree decomposition of a graph \(G\)
with \(\tw (G) \leq k\) with bags \(B_{v}\) at nodes \(v \in T\).
Let \(t \in V(T)\) be an arbitrary node of \(T\).
Let \(T_{1}, \dotsc, T_{m}\)
be the components of \(T \setminus t\)
and \(t_{i} \in V(T_{i})\) be the unique node \(t_{i}\)
in \(T\) connected to \(t\). Clearly, every \(T_{i}\) is a tree decomposition
of an induced subgraph \(G_{i} = G[\bigcup_{p \in T_{i}} B_{p}]\)
of \(G\). Moreover, \(B_{t} \cap V (G_{i}) = B_{t} \cap B_{t_{i}}\).

We require the existence of a (not necessarily nonnegative) function
\(\corr_{G, T, t}\) such that for all feasible solution \(s\)
\begin{equation}
  \label{eq:tree-split}
  \val_{G}(s)
  = \corr_{G, T, t}(s \restriction B_{t})
  +
  \sum_{i}
  \val_{G_{i}}(s)
  .
\end{equation}
 \end{enumerate}
\end{definition}
The decomposition property forms the basis of the mentioned dynamic
approach, which together with the gluing property allows
the solutions to be built up
from the best compatible pieces.
The role of the locality property is to ensure that the value function
is independent of irrelevant parts of the feasible
solutions.
In particular, \eqref{eq:tree-split} generalizes for the optima,
when the restriction \(\sigma\) of the solution to \(B_{t}\) is fixed,
this is also the basis of the dynamic programming approach mentioned
earlier:
\begin{lemma}
  \label{lem:decomposition-OPT}
  For any admissible problem \(\mathcal{P}\),
  with the assumption and notation of the decomposition property
  we have for any \(\sigma \in \mathscr{S}_{B_{t}}\)
  \begin{equation}
    \label{eq:tree-split-max}
    \OPT[s \colon s \restriction B_{t} = \sigma]
    {\val_{G}(s)}
    = \corr_{G, T, t}(\sigma)
    +
    \sum_{i}
    \OPT[s \colon s \restriction B_{t_i} \cap B_{t}
    = \sigma \restriction B_{t_i} \cap B_{t}]
    {\val_{G_{i}}(s)}
    .
  \end{equation}
\end{lemma}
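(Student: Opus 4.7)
The plan is to prove the two directions of the equality separately, using the decomposition property for the upper bound and the gluing property for the lower bound.

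For $\leq$, I would start from an arbitrary feasible $s$ with $s \restriction B_t = \sigma$ and apply the decomposition property directly to obtain
\begin{equation*}
\val_G(s) = \corr_{G,T,t}(\sigma) + \sum_i \val_{G_i}(s).
\end{equation*}
By locality, each $\val_{G_i}(s)$ depends only on $s \restriction V(G_i)$, and since $B_{t_i} \cap B_t \subseteq B_t$ we have $s \restriction (B_{t_i} \cap B_t) = \sigma \restriction (B_{t_i} \cap B_t)$. Hence each summand is bounded by the corresponding inner OPT, and taking the supremum over admissible $s$ yields the $\leq$ direction.

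For $\geq$, the approach is to build a global feasible solution from locally optimal pieces via gluing. For each $i$, pick $s_i^*$ attaining $\OPT[s \colon s \restriction B_{t_i} \cap B_t = \sigma \restriction B_{t_i} \cap B_t]{\val_{G_i}(s)}$ and set $\sigma_i \coloneqq s_i^* \restriction V(G_i)$; together with $\sigma_0 \coloneqq \sigma$ on $V_0 \coloneqq B_t$, these become candidates for the gluing property applied to the cover $\{B_t, V(G_1), \dotsc, V(G_m)\}$ of $V(G)$. The vertex-cover and edge-cover conditions required by gluing follow immediately from the bag-covering and bag-edge axioms of tree decompositions (every edge has both endpoints in a single bag, which is either $B_t$ or lies wholly in some $T_i$). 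Pairwise compatibility rests on the two identities $V(G_i) \cap B_t = B_{t_i} \cap B_t$ and $V(G_i) \cap V(G_j) \subseteq B_t$ for $i \neq j$, both consequences of the connectivity axiom of tree decompositions: any vertex shared by two bags must lie in every bag along the connecting path in $T$, and for bags in distinct components of $T \setminus t$ that path necessarily passes through $t$. These identities force each $\sigma_i$ to agree with $\sigma$ on $V_i \cap V_0$ and with $\sigma_j$ on $V_i \cap V_j$, so gluing produces a unique feasible $s$ with $s \restriction V_i = \sigma_i$ for every $i \geq 0$; in particular $s \restriction B_t = \sigma$. Decomposition applied to this $s$, combined with locality (so that $\val_{G_i}(s) = \val_{G_i}(s_i^*)$ since the two agree on $V(G_i)$), exhibits an $s$ attaining the right-hand side.

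The main obstacle is the combinatorial bookkeeping inside the gluing step: extracting the two intersection identities above from the abstract tree-decomposition axioms and matching them against the compatibility hypothesis of the gluing property. Once those identities are in place, the algebraic manipulations and the application of decomposition-plus-locality are routine. A minor side issue is whether the inner OPTs are actually attained; in the intended applications (\Matching, \IndependentSet, \VertexCover) the feasible solution set is finite, so this is automatic, and otherwise a standard supremum-approximation argument replaces the literal maxima.
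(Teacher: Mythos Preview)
Your proposal is correct and follows essentially the same approach as the paper's proof: the decomposition property yields the easy inequality, and the reverse inequality is obtained by choosing local optimizers \(s_i^{*}\), verifying the two tree-decomposition identities \(V(G_i)\cap B_t = B_{t_i}\cap B_t\) and \(V(G_i)\cap V(G_j)\subseteq B_t\), and invoking gluing on the cover \(\{B_t, V(G_1),\dotsc,V(G_m)\}\) to produce a global \(s\) realizing the right-hand side. Your write-up is slightly more explicit than the paper's (notably in spelling out the role of locality and the attainment caveat), but the structure and key steps are the same.
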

\begin{proof}
For simplicity, we prove this only for maximization problems,
as the proof for minimization problems is similar.
By \eqref{eq:tree-split},
the left-hand side is clearly less than or
equal to the right-hand side.  To show equality let
\[s_{i} \coloneqq \argmax_{s \colon s \restriction B_{t} \cap
  B_{t_{i}} = \sigma \restriction B_{t} \cap B_{t_{i}}} \val_{G_{i}}(s)\]
be maximizers.
We apply the gluing property for the
\(s_{i} \restriction V(G_{i})\) and \(\sigma\).

First we check that the conditions for the property are satisfied.
By the properties of a tree decomposition,
we have \(V(G) = B_{t} \cup \bigcup_{i} V(G_{i})\)
and \(E[V(G)] = E[B_{t}] \cup \bigcup_{i} E[V(G_{i})]\).
Moreover, \(B_{t} \cap V(G_{i}) = B_{t} \cap B_{t_{i}}\),
and hence
 \(s_{i} \restriction (B_{t} \cap V(G_{i}))
= \sigma \restriction (B_{t} \cap V(G_{i}))\).
Again by the properties of tree decomposition,
for \(i \neq j\),
it holds \(V(G_{i}) \cap V(G_{j}) \subseteq B_{t}\), and hence
\begin{equation}
 \begin{split}
   s_{i} \restriction (V(G_{i}) \cap V(G_{j}))
   &
   =
   s_{i} \restriction (B_{t} \cap (V(G_{i})) \cap V(G_{j}))
   \\
   &
   =
   \sigma \restriction (B_{t} \cap V(G_{i})) \cap V(G_{j})
   =
   \sigma \restriction (V(G_{i}) \cap V(G_{j}))
   .
 \end{split}
\end{equation}
In particular, \(s_{i} \restriction (V(G_{i}) \cap V(G_{j})) = s_{j}
\restriction (V(G_{i}) \cap V(G_{j}))\).

Therefore by the gluing property,
there is a unique feasible solution \(s\)
with \(s \restriction B_{t} = \sigma\)
and \(s \restriction V(G_{i}) = s_{i} \restriction V(G_{i})\)
for all \(i\).
Clearly, \(\val_{G} (s)\) is equal to the right-hand side.
\end{proof}

We are ready to state the main result of this section,
the existence of a small linear programming formulation
for bounded treewidth graph problems:
\begin{theorem}[Uniform local LP formulation]
  \label{thm:uniformCase}
  Let \(\mathcal{P} = (\mathcal{S}, \mathfrak{G}_{n, k}, \val)\) be an
  admissible optimization problem.
  Then it has the following linear programming formulation,
  which does not depend on any tree decomposition of the instance
  graphs, and has size
  \begin{equation}
    \label{eq:treewidth-factorization-size}
    \fc(\mathcal{P})
    \leq
    \sum_{X \subseteq V(G), \size{X} < k}
    \size{\mathscr{S}_{X}}
    .
  \end{equation}
  The guarantees are \(C(G) = S(G) =\OPT{G}\).
  Let \(V_{0}\) be the real vector space with coordinates indexed by
  the \(X, \sigma\) for \(X \subseteq V(G)\), \(\sigma \in \mathscr{S}_{X}\)
  with \(\size{X} < k\).
  \begin{description}
  \item[Feasible solutions]
    A feasible solution \(s \in \mathcal{S}\) is represented by
    the vectors \(x^{s}\) in \(V_{0}\) with coordinates
    \(x^{s}_{X, \sigma} \coloneqq \chi(s \restriction X = \sigma)\).
  \item[Domain] The domain of the linear program
    is the affine space \(V\) spanned by all the \(x^{s}\).
  \item[Inequalities]
    The LP has the inequalities \(x \geq 0\).
  \item[Instances]
    An instance \(G\) is represented by the unique affine function
    \(w^{G} \colon V \to \R\) satisfying
    \(w^{G}(x^{s}) = \val_{G}(s)\).
  \end{description}
\end{theorem}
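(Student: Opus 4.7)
The plan is to verify the three requirements of an LP formulation—realization of feasible solutions, affine linearization of the instance measures, and tightness of the guarantee—and then count inequalities. Realizing feasible solutions by the vectors $x^{s}$ is definitional, so the real work is in producing the affine function $w^{G}$ and in showing that the LP maximum equals $\OPT{G}$.

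First I would produce $w^{G}$ via any tree decomposition of $G$. Fix $T$ with $\tw(G) \le k$, so every bag $B_{t}$ is small. By induction on $\size{V(T)}$, peeling off leaves and accumulating the $\corr_{G,T,t}$ terms supplied by the decomposition property, obtain a decomposition
\begin{equation*}
  \val_{G}(s) = \sum_{t \in V(T)} f_{t}(s \restriction B_{t})
\end{equation*}
for scalar functions $f_{t}$ on $\mathscr{S}_{B_{t}}$. For each $s$ the block $(x^{s}_{B_{t},\sigma})_{\sigma}$ is the $0/1$ indicator of $\sigma = s \restriction B_{t}$, so $f_{t}(s \restriction B_{t}) = \sum_{\sigma} f_{t}(\sigma)\, x^{s}_{B_{t},\sigma}$. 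Summing gives $w^{G}(x) \coloneqq \sum_{t,\sigma} f_{t}(\sigma)\, x_{B_{t},\sigma}$, an affine function on the ambient vector space satisfying $w^{G}(x^{s}) = \val_{G}(s)$; its restriction to $V$ is the desired affine representative.

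Second I would show $\max\set{w^{G}(x)}{x \in V,\ x \ge 0} = \OPT{G}$. The lower bound is immediate since each $x^{s}$ is feasible and realizes $\val_{G}(s)$. For the upper bound, let $x \in V$ with $x \ge 0$. The identities $\sum_{\sigma} x^{s}_{X,\sigma} = 1$ and $x^{s}_{Y,\sigma} = \sum_{\tau \restriction Y = \sigma} x^{s}_{X,\tau}$ (for $Y \subseteq X$) hold on every $x^{s}$, so they extend by linearity to all of $V$. Combined with $x \ge 0$ this means that each block $(x_{X,\sigma})_{\sigma}$ is a probability distribution on $\mathscr{S}_{X}$, and the marginals are consistent under restriction. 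Running a junction-tree recursion along $T$—at each step combining the distribution already produced with the distribution on the next bag via the gluing property of admissibility, using the running-intersection property of $T$ to control overlaps—one assembles a probability measure $\mu$ on $\mathcal{S}$ whose bag marginals coincide with the corresponding blocks of $x$. Then $w^{G}(x) = \E_{s \sim \mu}[\val_{G}(s)] \le \OPT{G}$. The size bound is then immediate: the only inequalities are nonnegativities, one per coordinate of $V_{0}$.

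The main obstacle is the junction-tree step in the upper bound: one must pass from coordinate-wise consistency, which is guaranteed only by membership in $V$ together with nonnegativity, to a genuine probability distribution on the global solution set $\mathcal{S}$. This is where the running-intersection property of a tree decomposition and the gluing property of admissibility must be combined carefully, and also where the precise interplay between the bag sizes of $T$ and the index set of coordinates of $V_{0}$ has to be tracked.
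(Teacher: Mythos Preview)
Your argument is correct but follows a genuinely different route from the paper's.

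The paper works on the \emph{dual} side: it directly exhibits a nonnegative factorization of the slack
\(\tau[\OPT{G}-\val_{G}(s)]
=\sum_{X,\sigma}\alpha_{G,X,\sigma}\,\chi(s\restriction X=\sigma)\).
The key step is Lemma~\ref{lem:decomposition-OPT}, which shows that the conditional optimum
\(\OPT[s'\colon s'\restriction B_{t}=\sigma]{\val_{G}(s')}\)
decomposes along the tree exactly as \(\val_{G}\) does; subtracting
\eqref{eq:tree-split} from \eqref{eq:tree-split-max} telescopes the slack into a sum of
nonnegative local defects at the bags, and these defects are the \(\alpha_{G,X,\sigma}\).
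The affine function \(w^{G}\) is then defined as
\(\OPT{G}-\tau^{-1}\sum\alpha_{G,X,\sigma}x_{X,\sigma}\),
so the bound \(\tau[\OPT{G}-w^{G}(x)]\ge 0\) is immediate from \(x\ge 0\); no separate rounding
is required.

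You instead work on the \emph{primal} side: you decompose only \(\val_{G}\) to get \(w^{G}\),
and then prove the bound by showing that every feasible \(x\) is the bag-marginal vector of an
honest distribution \(\mu\) on \(\mathcal{S}\), obtained via a junction-tree construction together
with the gluing property. This is valid: the affine identities
\(\sum_{\sigma}x_{X,\sigma}=1\) and
\(x_{Y,\sigma}=\sum_{\tau\restriction Y=\sigma}x_{X,\tau}\)
do propagate to the affine span \(V\), the running-intersection property of \(T\) lets the
conditional sampling recover the prescribed bag marginals, and the gluing property identifies
consistent bag-tuples bijectively with \(\mathcal{S}\). The payoff of your approach is an explicit
randomized rounding of any LP solution to a distribution over integral solutions, and you do not
need Lemma~\ref{lem:decomposition-OPT}. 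The payoff of the paper's approach is that it is shorter,
needs no junction-tree argument, and produces the explicit Farkas certificate
\(\alpha_{G,X,\sigma}\ge 0\), which matches the nonnegativity-proof viewpoint of
Section~\ref{sec:nonnegative-problem}.
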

One can eliminate the use of the affine subspace \(V\),
by using some coordinates for \(V\) as variables
for the linear program.

\begin{remark}[Relation to the Sherali–Adams hierarchy]
  The linear program above is inspired by the
  Sherali-Adams hierarchy \cite{SheraliAdams1990} as well as the generalized
  extended formulations model in \cite{BPZ2015}.
  The LP is the standard \((k-1)\)-round Sherali–Adams hierarchy
  when \(\mathcal P\) arises from a CSP:
  the solution set \(\mathcal{S}\) is simply the set of all subsets
  of \(V(G)\), and one chooses \(s \restriction X = s \cap X\).
  The inequalities of the LP are the linearization of
  the following functions, in exactly the same way as for the
  Sherali–Adams hierarchy:
\[\chi (s \restriction X = \sigma) \coloneqq \prod_{i \in \sigma} x_i \prod_{i \in
X \setminus \sigma } (1-x_i).\]
For non-CSPs the local functions take on different meanings that are
incompatible with the Sherali-Adams perspective.
\end{remark}

With this we are ready to prove the main theorem of this section.

\begin{proof}[Proof of Theorem~\ref{thm:uniformCase}]
We shall prove that
there is a nonnegative factorization
of the slack matrix of \(\mathcal{P}\)
\begin{equation}
  \label{eq:treewidth-factorization}
  \tau [\OPT{G} - \val_{G}(s)] =
  \sum_{\substack{X \subseteq V(G), \size{X} < k \\
      \sigma \in \mathscr{S}_{X}}}
  \alpha_{G, X, \sigma} \cdot
  \chi (s \restriction X = \sigma)
  ,
\end{equation}
where \(\tau = 1\) if \(\mathcal{P}\) is a maximization problem,
and \(\tau = -1\) if it is a minimization problem.

From this, one can define the function \(w^{G}\) as:
\begin{equation}
  w^{G}(x) \coloneqq
  \OPT{G} -
  \tau^{-1} \sum_{\substack{X \subseteq V(G), \size{X} < k \\
      \sigma \in \mathscr{S}_{X}}}
  \alpha_{G, X, \sigma} \cdot
  x_{X, \sigma}
  ,
\end{equation}
such that it is immediate that \(w^{G}\) is affine,
\(w^{G}(x^{s}) = \val_{G}(s)\)
for all \(s \in \mathcal{S}\),
and that \(\tau [\OPT{G} - w^{G}(x)] \geq 0\)
for all \(x \in V\) satisfying the LP inequalities \(x \geq 0\).
The uniqueness of the \(w^{G}\) follow from \(V\)
being the affine span of the points \(x^{s}\),
where \(w^{G}\) has a prescribed value.

To show \eqref{eq:treewidth-factorization},
let us use the setup for the decomposition property:
Let \(t\) be a node of \(T\),
and let \(t_{1}\), \dots, \(t_{m}\) be the neighbors of \(t\),
and \(T_{i}\) be the component of \(T \setminus t\) containing
\(t_{i}\).
Let \(B_{x}\) denote the bag of a node \(x\) of \(T\).
Let \(G_{i} \coloneqq G[\bigcup_{p \in T_{i}} B_{p}]\) be the induced
subgraph of \(G\) for which \(T_{i}\) is a tree decomposition
(with bags inherited from \(T\)).

We shall inductively define nonnegative numbers
\(\alpha_{G, X, \sigma, A}\) for \(G \in \mathfrak{G}_{n, k}\),
\(X \subseteq V(G)\), \(\sigma \in \mathscr{S}_{X}\),
and \(A \subseteq B_{t}\)
satisfying
\begin{equation}
  \label{eq:uniformCase-induction}
  \tau \left[
    \OPT[s' \colon s' \restriction A = s \restriction A]
    {\val_{G}(s')}
    - \val_{G}(s)
  \right]
  =
  \sum_{X \subseteq V(G), \sigma \in \mathscr{S}_{X}}
  \alpha_{G, X, \sigma, A}
  \cdot
  \chi(s \restriction X = \sigma)
  .
\end{equation}
This will prove the claimed \eqref{eq:treewidth-factorization}
with the choice
\(\alpha_{G, X, \sigma} \coloneqq \alpha_{G, X, \sigma, B_{t}}\).
The help variable \(A\) is only for the induction.

To proceed with the induction,
we take the difference of
Eqs.~\eqref{eq:tree-split} and \eqref{eq:tree-split-max}
with the choice \(\sigma \coloneqq s \restriction B_{t}\):
\begin{equation}
  \tau
  \left[
    \OPT[s' \colon s' \restriction B_{t} = \sigma]
    {\val_{G}(s')}
    -
    \val_{G}(s)
  \right]
  =
  \sum_{i}
  \tau
  \left[
    \OPT[s' \colon s' \restriction B_{t_i} \cap B_{t}
    = \sigma \restriction B_{t_i} \cap B_{t}]
    {\val_{G_{i}}(s')}
    -
    \val_{G_{i}}(s)
  \right]
  .
\end{equation}
Now we use the induction hypothesis on the \(G_{i}\)
with tree decomposition \(T_{i}\) to obtain
\begin{equation}
  \tau
  \left[
    \OPT[s' \colon s' \restriction B_{t} = \sigma]
    {\val_{G}(s')}
    -
    \val_{G}(s)
  \right]
  =
  \sum_{i}
  \sum_{\substack{X \subseteq V(G) \\ \size{X} < k \\
      \sigma \in \mathscr{S}_{X}}}
  \alpha_{G_{i}, X, \sigma, B_{t_{i}} \cap B_{t}}
  \chi(s \restriction X = \sigma)
  .
\end{equation}
Hence \eqref{eq:treewidth-factorization} follows with the
following choice of the \(\alpha_{G, X, \sigma, A}\),
which are clearly nonnegative:
\begin{equation}
  \alpha_{G, X, \sigma, A}
  \coloneqq
  \begin{dcases}
    \sum_{i}
    \alpha_{G_{i}, X, \sigma, B_{t_{i}} \cap B_{t}}
    &
  \text{if } X \neq B_{t}
  \\
  \sum_{i}
  \alpha_{G_{i}, X, \sigma, B_{t_{i}} \cap B_{t}}
  + \tau
  \left[
    \OPT[s' \colon s' \restriction A = \sigma \restriction A]
    {\val_{G}(s')}
    -
    \OPT[s' \colon s' \restriction B_{t} = \sigma]
    {\val_{G}(s')}
  \right]
  &
  \text{if } X = B_{t}
  .
  \end{dcases}
  \qedhere
\end{equation}
\end{proof}

We now demonstrate the use of Theorem~\ref{thm:uniformCase}.

\begin{example}[\VertexCover,
  \IndependentSet, and CSPs such as e.g., \MaxCUT, \UniqueGames]
  For the problems \MaxCUT, \IndependentSet, and \VertexCover,
  the set of feasible solutions \(\mathcal{S}\) is the set of all
  subsets of \(\mathcal{S}\).
  We need no further partial solutions
  (i.e., \(\mathscr{S} \coloneqq \mathcal{S}\)), and
  we choose the restriction to be simply the intersection
\[s \restriction B \coloneqq s \cap B.\]
  It is easily seen that this makes \IndependentSet and \VertexCover
  admissible problems, providing an LP of size \(O(n^{k-1})\)
  for graphs with treewidth at most \(k\).
  As an example, we check the decomposition property
  for \IndependentSet.
  Using the same notation as in the decomposition property,
  \begin{equation}
   \begin{split}
    \val_{G}(s)
    -
    \sum_{i}
    \val_{G_{i}}(s)
    &
    =
    \size{s} - \sum_{i}
    \left\{
      \size{s \cap V(G_{i})} - \size{E(G_{i}[s \cap V(G_{i})])}
      \strut
    \right\}
    \\
    &
    =
    \size{s \cap B_{t}} - \sum_{i}
    \left\{
      \size{s \cap B_{t} \cap V(G_{i})} -
      \size{E(G_{i}[s \cap B_{t} \cap V(G_{i})])}
      \strut
    \right\}
    ,
   \end{split}
  \end{equation}
  as any vertex \(v \notin B_{t}\) is a vertex of exactly
  one of the \(G_{i}\), and similarly for edges
  with at least one end point not in \(B_{t}\).
  Therefore the decomposition property is satisfied with
  the choice
  \begin{equation}
    \corr_{G, T, t}(\sigma)
    \coloneqq
    \size{\sigma \cap B_{t}} - \sum_{i}
    \left\{
      \size{\sigma \cap B_{t} \cap V(G_{i})} -
      \size{E(G_{i}[\sigma \cap B_{t} \cap V(G_{i})])}
      \strut
    \right\}
    .
  \end{equation}

  For \UniqueGames{n}{q}, the feasible solutions are all functions
  \([n] \to [q]\).
  Partial solutions are functions \(X \to [q]\)
  defined on some subset \(X \subseteq [n]\).
  Restriction \(s \restriction X\)
  is the usual restriction of \(s\) to the subset \(\dom(s) \cap X\).
  This obviously makes \MaxCUT and \UniqueGames{n}{q} admissible.
  The size of the LP is \(O(n^{2 (k-1)})\) for \MaxCUT,
  and \(O((q n^{2})^{k-1})\) for \UniqueGames{n}{q}.
\end{example}

The \Matching problem requires
that the restriction operator preserves
more local information to ensure that partial solutions
are incompatible when they contain a different edge
at the same vertex.

\begin{example}[\Matching]
  The \Matching problem has feasible solutions all perfect
  matchings.
  The partial solutions are all matchings, not necessarily perfect.
  The restriction \(s \restriction X\) of a matching \(s\) to a vertex
  set \(X\) is defined as the set of all edges in \(s\) incident to
  some vertex of \(X\):
  \[s \restriction X \coloneqq
  \set{\{u,v\} \in s}{u \in B \lor v \in B}
  .\]
  Now \(s \restriction X\) can contain edges
  with only one end point in \(X\).
  Again, this makes \Matching an admissible problem,
  providing an LP of size \(O(n^{k})\) (the number of edges with at
  most \(k\) edges).
  Here we check the gluing property.
  Let \(V(G) = V_{1} \cup \dotsb \cup V_{l}\) be a covering
  (we do not need
  \(E[V(G)] = E[V(G_{1})] \cup \dotsb \cup E[V(G_{l})]\)),
  and let \(\sigma_{i}\) be a (partial) matching
  covering \(V_{i}\) with every edge in \(\sigma_{i}\)
  incident to some vertex in \(V_{i}\)
  (i.e., \(\sigma_{i} \in \mathscr{S}_{V_{i}}\)) for \(i \in [l]\).
  Let us assume \(\sigma_{i} \restriction V_{i} \cap V_{j} =
  \sigma_{j} \restriction V_{i} \cap V_{j}\),
  i.e., every vertex \(v \in V_{i} \cap V_{j}\)
  is matched to the same vertex by \(\sigma_{i}\) and \(\sigma_{j}\)
  for \(i \neq j\).
  It readily follows that the union \(s \coloneqq \bigcup_{i} \sigma_{i}\)
  is a matching.
  Actually, it is a perfect matching as
  \(V(G) = V_{1} \cup \dotsb \cup V_{l}\)
  ensures that it covers every vertex.
  Obviously, \(s \restriction V_{i} = \sigma_{i}\)
  and \(s\) is the unique perfect matching with this property.
\end{example}

\section*{Acknowledgements}
\label{sec:acknowledgements}

Research reported in this paper was partially supported by NSF CAREER
award CMMI-1452463. Parts of this research was conducted at the CMO-BIRS
2015 workshop \emph{Modern Techniques in Discrete Optimization: Mathematics,
Algorithms and Applications} and we would like to thank the organizers
for providing a stimulating research environment,
as well as Levent Tunçel for helpful discussions on
Lasserre relaxations of the \IndependentSet problem.

\bibliographystyle{alphaabbr}
\bibliography{bibliography}
\end{document}